\documentclass[11pt,reqno]{article}   




\usepackage[unicode=true]{hyperref}
\usepackage{amsmath}
\usepackage{cite}
\usepackage{amsfonts}
\usepackage{amssymb}
\usepackage{amsthm}
\usepackage{authblk}
\usepackage[pdftex]{color,graphicx}
\setlength{\oddsidemargin}{0in}
\setlength{\textwidth}{6.5in}
\setlength{\headheight}{0in}
\setlength{\headsep}{0in}
\setlength{\topmargin}{0in}
\setlength{\textheight}{9in}
\usepackage{adjustbox}
\usepackage{comment}
\usepackage{bbold}
\usepackage{tikz} 
\usetikzlibrary{decorations.pathreplacing,angles,quotes}
\usetikzlibrary{matrix,arrows,decorations.pathmorphing}
\usepackage{tikz-cd}
\usetikzlibrary{arrows}
\usetikzlibrary{decorations.markings}
\usepackage[all]{xypic}
\usepackage{bbold}
\usepackage{diagbox}
\usepackage{caption}
\usepackage{subcaption}
\usepackage{pdfpages} 
\usepackage{blkarray}
\usepackage{centernot}
\usepackage{mathtools}
\usepackage{stmaryrd}
\usepackage{soul}  
%



\definecolor{bluegray}{rgb}{0.4, 0.6, 0.8}
\definecolor{turquoise}{rgb}{0.2, 0.7, 0.6}


%

%
%


\newtheorem{thm}{Theorem}[section]
\newtheorem*{thm*}{Theorem}
\newtheorem{prop}[thm]{Proposition}
\newtheorem{lemma}[thm]{Lemma}

\newtheorem{corollary}[thm]{Corollary}

\newtheorem{subsec}[thm]{}
\newtheorem{defn}[thm]{Definition}
\newtheorem{example}[thm]{Example}

\newtheorem{remark}[thm]{Remark}



\newtheorem{rem}[thm]{Remark}

\newcommand{\catConv}{\mathbf{Conv}}

\newcommand{\catC}{\mathbf{C}}
\newcommand{\catSet}{\mathbf{Set}}

\newcommand{\catCat}{\mathbf{Cat}}

\newcommand{\catsSet}{s\mathbf{Set}}

\newcommand{\adjoint}{\dashv}

\newcommand{\tightoverset}[2]{%
	\mathop{#2}\limits^{\vbox to -.5ex{\kern-0.25ex\hbox{$#1$}\vss}}}

\newcommand{\set}[1]{\ensuremath{ \lbrace #1 \rbrace }}


%

%






\newcommand{\Dec}{\text{Dec}}

\newcommand{\mysdg}[2][]{\myeq[#1]\xymatrix@R=25pt@C=15pt{#2}}
\newcommand{\myadiagnum}[2][]
{\stepcounter{thm}\begin{equation}
		\tag{\thethm}{#1}\vcenter{\xymatrix@R=20pt@C=25pt{#2}}\end{equation}}
\newcommand{\myadiagnumm}[2][]
{\stepcounter{thm}\begin{equation}
		\tag{\thethm}{#1}\vcenter{\xymatrix@R=15pt@C=32pt{#2}}\end{equation}}
\newcommand{\myaaadiag}[2][]
{\stepcounter{thm}\begin{equation}
		\tag{\thethm}{#1}\vcenter{\xymatrix@R=10pt@C=15pt{#2}}\end{equation}}
\newcommand{\mybbbdiag}[2][]
{\stepcounter{thm}\begin{equation}
		\tag{\thethm}{#1}\vcenter{\xymatrix@R=13pt@C=17pt{#2}}\end{equation}}
\newcommand{\mycccdiag}[2][]
{\stepcounter{thm}\begin{equation}
		\tag{\thethm}{#1}\vcenter{\xymatrix@R=15pt@C=30pt{#2}}\end{equation}}

\newcommand{\myrdiag}[2][]
{\stepcounter{thm}\begin{equation}
		\tag{\thethm}{#1}\vcenter{\xymatrix@R=20pt@C=30pt{#2}}\end{equation}}
\newcommand{\myssdg}[2][]
{\stepcounter{thm}\begin{equation}
		\tag{\thethm}{#1}\vcenter{\xymatrix@R=15pt@C=15pt{#2}}\end{equation}}
\newcommand{\mykkdiag}[2][]
{\stepcounter{thm}\begin{equation}
		\tag{\thethm}{#1}\vcenter{\xymatrix@R=2pt@C=13pt{#2}}\end{equation}}\newcommand{\mysdiag}[2][]
{\stepcounter{thm}\begin{equation}
		\tag{\thethm}{#1}\vcenter{\xymatrix@R=15pt@C=15pt{#2}}\end{equation}}
\newcommand{\mytdiag}[2][]
{\stepcounter{thm}\begin{equation}
		\tag{\thethm}{#1}\vcenter{\xymatrix@R=35pt@C=65pt{#2}}\end{equation}}
\newcommand{\myudiag}[2][]
{\stepcounter{thm}\begin{equation}
		\tag{\thethm}{#1}\vcenter{\xymatrix@R=19pt@C=19pt{#2}}\end{equation}}
\newcommand{\myvdiag}[2][]
{\stepcounter{thm}\begin{equation}
		\tag{\thethm}{#1}\vcenter{\xymatrix@R=35pt@C=50pt{#2}}\end{equation}}
\newcommand{\mydiagrm}[2][]
{\stepcounter{thm}\begin{equation}
		\tag{\thethm}{#1}\vcenter{\entrymodifiers={+++[o]}\xymatrix@R=25pt@C=25pt{#2}}\end{equation}}
\newcommand{\myrrdiag}[2][]
{\stepcounter{thm}\begin{equation}
		\tag{\thethm}{#1}\vcenter{\xymatrix@R=25pt@C=35pt{#2}}\end{equation}}
\newcommand{\myrrrdiag}[2][]
{\stepcounter{thm}\begin{equation}
		\tag{\thethm}{#1}\vcenter{\xymatrix@R=15pt@C=55pt{#2}}\end{equation}}
%
%

%

%

%

%

%
%
%

%
\newcommand{\NSbox}[4]{\ensuremath}
\newcommand{\stk}[1]{\stackrel{#1} {\longrightarrow}}

\newcommand{\wh}{\ -- \ }

\newcommand{\hsm}{\hspace{2 mm}}

\newcommand{\vsn}{\vspace{1 mm}}
%
%
\newcommand{\po}{\ar@{}[dr]|{\text{\pigpenfont R}}}
\newcommand{\pb}{\ar@{}[dr]|{\text{\pigpenfont J}}}

\newcommand{\lra}[1]{\langle{#1}\rangle}

\newcommand{\zz}{\mathbb{Z}}

%
%
%
%
%
%


\newcommand{\Vsupp}{\operatorname{Vsupp}}
\newcommand{\sDist}{\operatorname{sDist}}
\newcommand{\sseect}{\operatorname{sSect}}
\newcommand{\conv}{\operatorname{conv}}

%
%
%


\newcommand{\Id}{\operatorname{Id}}


%
%
\newcommand{\fs}{f\sb{\ast}}
\newcommand{\gs}{g\sb{\ast}}
\newcommand{\ks}{k\sb{\ast}}

\newcommand{\Xs}{X\sb{\ast}}
\newcommand{\Ys}{Y\sb{\ast}}
\newcommand{\Zs}{Z\sb{\ast}}

\newcommand{\Ws}{W\sb{\ast}}

%
%

%
%

%
%

%
%

%
%

\newcommand{\fhkg}{\raisebox{10pt}{\xymatrix@R=8pt@C=8pt{X
			\ar[r]^{h} \ar[d]_{f} & Z \ar[d]^{g} \\
			Y \ar[r]_{k} & W}}}
\newcommand{\fhkgs}{\xymatrix@R=8pt@C=8pt { \Xs \ar[r]^{h_{\ast}}
		\ar[d]_{\fs} & \Zs \ar[d]^{\gs} \\
		\Ys \ar[r]_{\ks} & \Ws } }
%
%

%
%

%
%
%



\begin{document}

\title{The geometry of simplicial distributions on suspension scenarios}


\author{Aziz Kharoof\footnote{aziz.kharoof@bilkent.edu.tr} }

\affil{{\small{Department of Mathematics, Bilkent University, Ankara, Turkey}}}




 \maketitle

%
%

\begin{abstract}
Quantum measurements often exhibit non-classical features, such as contextuality, which generalizes Bell's non-locality and serves as a resource in various quantum computation models. Existing frameworks have rigorously captured these phenomena, and recently, simplicial distributions have been introduced to deepen this understanding. The geometrical structure of simplicial distributions can be seen as a resource for applications in quantum information theory. In this work, we use topological foundations to study this geometrical structure, leveraging the fact that, in this simplicial framework, measurements and outcomes are represented as spaces.  This allows us to depict contextuality as a topological phenomenon. We show that applying the cone construction to the measurement space makes the corresponding non-signaling polytope equal to the join of $m$ copies of the original polytope, where $m$ is the number of possible outcomes per measurement. Then we glue two copies of cone measurement spaces to obtain a suspension measurement space. The decomposition done for simplicial distributions on a cone measurement space provides deeper insights into the 
geometry of simplicial distributions on a suspension measurement space and aids in characterizing the contextuality there. 
Additionally, we apply these results to derive a new type of Bell inequalities (inequalities that determine the set of local joint probabilities/non-contextual simplicial distributions) and to offer a mathematical explanation for certain contextual vertices from the literature.

\end{abstract}
\tableofcontents

\section{Introduction}
Probability distributions obtained from quantum measurements exhibit non-classical features. One such feature is contextuality, which generalizes Bell non-locality. Contextuality is recognized as a computational resource in various quantum computation schemes \cite{RaussContex,HowarContex,BravyiQuan,RauTherole}, and numerous frameworks have been developed to rigorously capture this concept. 
The sheaf-theoretic framework introduced by Abramsky and Brandenburger has proven to be particularly effective in formulating
contextuality \cite{Abrams1}. Any scenario can be described in this language and studied using advanced tools
such as cohomology \cite{Abrams2}. On the other hand, topological tools to study contextuality are first introduced in \cite{OkayTopo}. This formalism is based on chain complexes and applies to state-independent contextuality \wh a stronger form of contextuality.
Recently, an approach based on simplicial
distributions \cite{OkayQuan} was introduced, unifying the earlier two methods.
In this approach, simplicial distributions provide a new framework for studying contextuality, where measurements
and outcomes are represented by spaces instead of discrete sets of labels. 
The notion of a space is captured using combinatorial objects known as simplicial sets. These objects, while similar to simplicial complexes, are more expressive and form the foundational tools of modern homotopy theory \cite{JardineHomotopy}. The theory of simplicial distributions not only captures contextuality in an innovative way but also introduces new topological
ideas and tools for studying it and other fundamental principles in quantum foundations; 
see \cite{TopoloMeth,HomVert,OkayRank,EXtrCycle}. A simplicial set $X$ consists of a sequence of sets $X_n$, which represent the $n$-simplices, along with simplicial structure maps (most importantly, the face maps) that relate simplices of various dimensions. A simplicial distribution on a simplicial scenario, consisting of a measurement
space $X$ and an outcome space $Y$, is given by a simplicial set map
$$
p:X \to D(Y)
$$
where $D(Y)$ is the simplicial set whose simplices are distributions on the set of simplices of $Y$. More concretely,
a simplicial distribution $p$ consists of a family of distributions
$$
\set{p_{\sigma}:~ \text{$\sigma$ is an $n$-simplex in $X$ and $p_\sigma$ is a distribution on $Y_n$}}
$$ 
satisfying compatibility conditions induced by the simplicial structure relations. Typically, the outcome space is chosen to be $\Delta_{\zz_m}$ whose $n$-simplices are given by $n$-tuples of elements in $\zz_m$, and the face maps are given by deleting. This corresponds to the scenario in which each measurement has $m$ possible outcomes. Meanwhile, an $n$-simplex of $X$ represents a measurement context (an n-tuple of measurements that can be performed simultaneously).

%

The set of simplicial distributions on a simplicial scenario $(X,Y)$, denoted by $\sDist(X,Y)$, forms a polytope. Additionally, there exists a natural convex map:
$$
\Theta_{X,Y}: D(\catsSet(X,Y)) \to \sDist(X,Y)
$$
where $D(\catsSet(X,Y))$ is the set of distributions on the set of simplicial maps from $X$ to $Y$. The noncontextual 
simplicial distributions on $(X,Y)$ are defined to be the images of $\Theta_{X,Y}$, forming a subpolytope of $\sDist(X,Y)$. The facets of this subpolytope correspond to Bell inequalities \cite{BellFirst,Fineee}. Finding Bell inequalities and detecting extremal simplicial distributions \cite{PRBoxesss,Nonlocalcorrela} are of fundamental importance with applications to quantum computing. The goal of this paper is to characterize the geometric structure for the set of simplicial distributions, including the identification of extreme distributions and Bell inequalities, in scenarios where the measurement space is a cone space or a suspension space.
%

%
%
\begin{figure}[h!]
\centering
\begin{subfigure}{.33\textwidth}
  \centering
  \includegraphics[width=.6\linewidth]{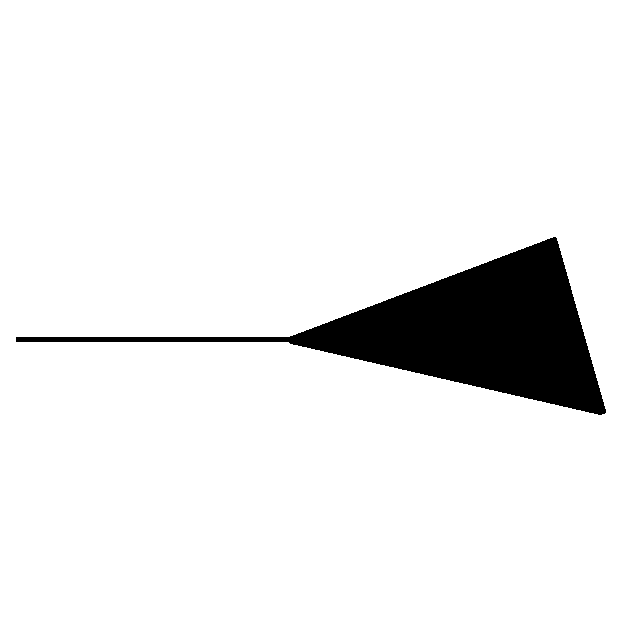}
  \caption{}
  \label{fig:MeasSpace}
\end{subfigure}%
\begin{subfigure}{.33\textwidth}
  \centering
  \includegraphics[width=.6\linewidth]{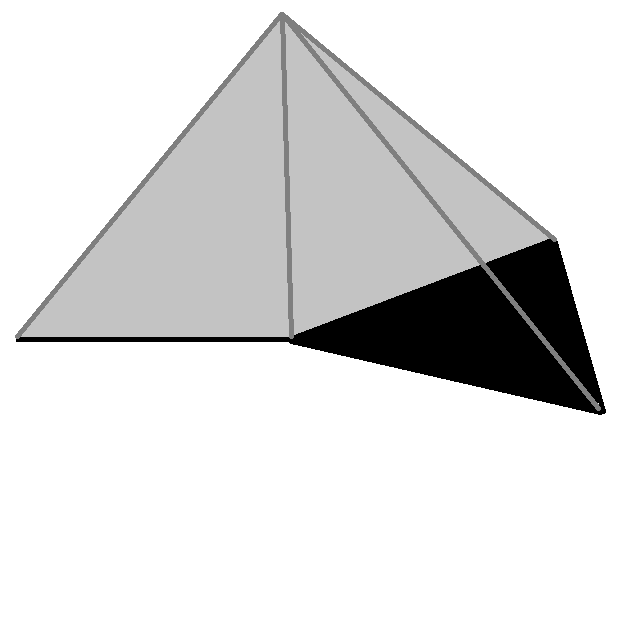}
  \caption{}
  \label{fig:Cone}
\end{subfigure}
\begin{subfigure}{.33\textwidth}
  \centering
  \includegraphics[width=.6\linewidth]{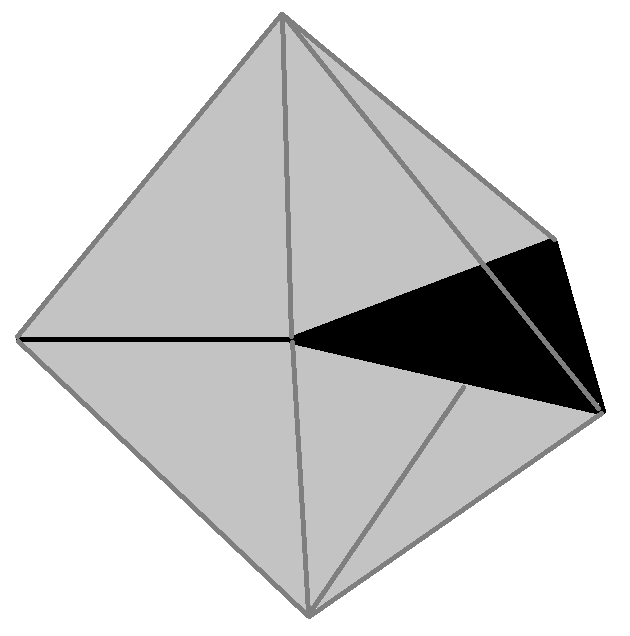}
  \caption{}
  \label{fig:Suspen}
\end{subfigure}
\caption{{(a) A measurement space $X$. (b) The cone of the space $X$. 
(c) The suspension of the space $X$.} 
}
\label{fig:mermin-scenario-and-os}
\end{figure}
In a typical Bell-type experiment, several observers perform measurements on a 
shared physical system. Each observer has a choice of different measurements to perform on his system.
Each measurement can yield some possible outcomes. A natural question arises: what happens if an additional party, with just one measurement, joins the experiment? 
In terms of the simplicial approach described above, this situation requires modifying the measurement space $X$ by adding a new vertex
and a set of simplices connecting each simplex in $X$ to the new vertex. This construction is known as the cone of $X$, denoted by $CX$ (see Figure \ref{fig:Cone}). Abstractly, this leads to a more general question: what is the relation between the simplicial distributions on $(CX,\Delta_{\zz_m})$ and those on $(X,\Delta_{\zz_m})$? The answer is  
\begin{equation}\label{eq:Introstar}
\sDist(CX,\Delta_{\zz_m})\cong \sDist(X,\Delta_{\zz_m})\star \dots \star \sDist(X,\Delta_{\zz_m})
\end{equation}
where the right-hand side in (\ref{eq:Introstar}) is the join of $m$ copies of $\sDist(X,\Delta_{\zz_m})$.
Moreover, we have:
\begin{thm}\label{Introthm:Decom}
Let $X$ be a connected simplicial set. A simplicial distribution $p$ on 
$(CX,\Delta_{\zz_m})$ can be uniquely decomposed as follows:  
\begin{equation}\label{eq:IntroDecomp}    
\left(\lra{\lambda_1,p^{(1)}},\dots,\lra{\lambda_m,p^{(m)}}
\right)    
\end{equation}
where $\sum_{i=1}^m\lambda_{i}=1$, $\lambda_{j} \in [0,1]$,
and $p^{(j)} \in \sDist(X,\Delta_{\zz_m})$ for every $1 \leq j \leq m$.
In addition, we have the following:
\begin{enumerate}
    \item $p$ is noncontextual if and only if
$p^{(j)}$ is noncontextual for every $1\leq j \leq m$.    
    \item $p$ is a vertex in $\sDist(CX,\Delta_{\zz_m})$ if
and only if there is $1\leq j \leq m$ such that $\lambda_j=1$ and $p^{(j)}$ is a vertex in $\sDist(X,\Delta_{\zz_m})$.
\end{enumerate}
\end{thm}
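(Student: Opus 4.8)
The plan is to build everything on the isomorphism (\ref{eq:Introstar}), reading the decomposition (\ref{eq:IntroDecomp}) as the image of $p$ under that isomorphism, written in the standard coordinates of a join. Recall that a point of a join of polytopes $P_1 \star \dots \star P_m$ is described by data $(\lra{\lambda_1,x_1},\dots,\lra{\lambda_m,x_m})$ with $\lambda_j \geq 0$, $\sum_j \lambda_j = 1$ and $x_j \in P_j$, subject to the cone convention that $\lra{0,x}$ is the apex point of the $j$-th cone (so $x_j$ is recorded only when $\lambda_j>0$). Taking $P_j = \sDist(X,\Delta_{\zz_m})$ for all $j$, this yields existence and uniqueness of (\ref{eq:IntroDecomp}) at once. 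The single point that genuinely needs $X$ connected is the well-definedness of the weights $\lambda_j$: here $\lambda_j$ is the mass that $p$ assigns to outcome $j$ on the cone edge at a vertex of $X$, and I would check this mass is independent of the chosen vertex. This I would prove by noting that, along any edge of $X$, the compatibility relations coming from the cone $2$-simplex force the two vertex-weights to coincide (pushforwards preserve total mass), and then propagating the equality along edge-paths; connectivity is exactly what lets the propagation reach every vertex.

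For part (1), I would show that (\ref{eq:Introstar}) restricts to an isomorphism of the noncontextual subpolytopes,
\[
\sDist^{\mathrm{nc}}(CX,\Delta_{\zz_m}) \;\cong\; \sDist^{\mathrm{nc}}(X,\Delta_{\zz_m}) \star \dots \star \sDist^{\mathrm{nc}}(X,\Delta_{\zz_m}),
\]
after which (1) is automatic: a join point lies in the right-hand side iff each factor $p^{(j)}$ does, i.e. iff every $p^{(j)}$ is noncontextual (factors with $\lambda_j=0$ contribute the apex point, which is trivially noncontextual). Since $\Theta_{CX,\Delta_{\zz_m}}$ is affine and its image is the convex hull of the deterministic distributions $\delta_\phi$ attached to simplicial maps $\phi\colon CX\to\Delta_{\zz_m}$, it suffices to understand these generators. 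The key lemma is that restriction to $X\subseteq CX$ together with the apex outcome identifies each $\phi$ with a pair consisting of an element $j\in\zz_m$ and a simplicial map $X\to\Delta_{\zz_m}$, and that under (\ref{eq:Introstar}) the generator $\delta_\phi$ goes to the join point concentrated in the single factor $j$ at the vertex $\delta_{\phi|_X}$. Granting this, the convex hull of all such single-factor vertices is exactly the join of the convex hulls $\sDist^{\mathrm{nc}}(X,\Delta_{\zz_m})$, giving the displayed isomorphism.

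For part (2), I would invoke the general description of the vertices of a join of polytopes, $\mathrm{vert}(P_1\star\dots\star P_m)=\bigsqcup_j \mathrm{vert}(P_j)$, where $P_j$ sits inside the join as the face $\{\lambda_j=1\}$. Indeed, if two or more weights $\lambda_j$ are strictly positive then $p$ is a proper convex combination of the single-factor point $\iota_j(p^{(j)})$ and the join point obtained by renormalizing the remaining factors, so $p$ is not extreme; and if a single $\lambda_j=1$ then $p=\iota_j(p^{(j)})$ is extreme in the join iff $p^{(j)}$ is extreme in $P_j=\sDist(X,\Delta_{\zz_m})$, since $\iota_j$ embeds $P_j$ as a face. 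Combined with (\ref{eq:Introstar}), this is precisely the stated criterion.

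The main obstacle is the key lemma in part (1): pinning down how the deterministic distributions on the cone sit inside the join. A simplicial map out of $CX$ is constrained by a cocycle condition on every cone $2$-simplex relating the two cone edges to the edge of $X$ beneath them, so the apex outcome at one vertex is not independent of the values taken on $X$. Making the identification of $\phi$ with $(j,\phi|_X)$ precise, and matching it against the coordinates supplied by (\ref{eq:Introstar}), is where connectivity of $X$ enters essentially, and is the step I expect to require the most care.
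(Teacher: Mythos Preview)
Your proposal is correct and follows essentially the same skeleton as the paper: take the join isomorphism as the source of the decomposition, deduce part (2) from the general description of vertices of a join (this is the paper's Proposition \ref{pro:vertexAstar}), and deduce part (1) by showing the isomorphism carries noncontextual to noncontextual.

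The one organizational difference worth noting is how your ``key lemma'' is handled. You propose to verify directly that a simplicial map $\phi\colon CX\to\Delta_{\zz_m}$ is determined by the apex value $j$ and the restriction $\phi|_X$, and that under (\ref{eq:Introstar}) this goes to $\kappa_j(\delta^{\phi|_X})$. The paper gets this for free by routing through the d\'ecalage: the cone/d\'ecalage adjunction $\catsSet(CX,\Delta_{\zz_m})\cong\catsSet(X,\Dec^0(\Delta_{\zz_m}))$ together with the splitting $\Dec^0(\Delta_{\zz_m})\cong\sqcup_{j\in\zz_m}\Delta_{\zz_m}$ turns the question into one about an outcome space that is a disjoint union, where the paper's general Section 3 machinery (Corollary \ref{cor:ThethaYj}) applies. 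This packaging produces the commutative square (\ref{eq:TheataConeeeeee}) intertwining $\Theta_{CX,\Delta_{\zz_m}}$ with $\Theta_{X,\Delta_{\zz_m}}\star\cdots\star\Theta_{X,\Delta_{\zz_m}}$, from which part (1) is immediate. Your direct argument would work too and is arguably more transparent for this single outcome space; the paper's route buys modularity (the same Section 3 results feed the suspension analysis) at the cost of setting up the d\'ecalage formalism.
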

%
In homotopy theory, any map from a cone space is trivial (i.e., homotopic to a constant map). However, in certain higher homotopical constructions, such as Toda brackets \cite{TodC,KharoH}, interactions between maps from the cone can induce non-trivial maps from the suspension. The suspension $\Sigma X$ of a space $X$ is obtained by gluing two copies of cones $CX$ along $X$ (see Figure \ref{fig:Suspen}). The decomposition in (\ref{eq:IntroDecomp}) (for simplicial distributions on a cone measurement space) leads to the second main result concerning simplicial distributions on a suspension measurement space:
\begin{prop}
If $X$ is a connected simplicial set, then we have the following pullback square 
$$
\xymatrix@R =25pt@C=35pt {  
\sDist(\Sigma X,\Delta_{\zz_m}) \ar[r]\ar[d] & \sDist(X,\Delta_{\zz_m}) \star \dots
\star \sDist(X,\Delta_{\zz_m}) \ar[d]^{\Id\star \dots
\star \Id}\\
\sDist(X,\Delta_{\zz_m}) \star \dots
\star \sDist(X,\Delta_{\zz_m}) \ar[r]^<<<<<<<<<<<<<<<<<{\Id\star \dots
\star \Id} &  \sDist(X,\Delta_{\zz_m})
}
$$
In addition, a simplicial 
distribution 
$$
\left(\lra{\lambda_{1},p^{up,1}} ,\dots,\lra{\lambda_{m},p^{up,m}};\lra{\mu_{1},p^{down,1}},\dots,\lra{\mu_{m},p^{down,m}}\right)
$$
on $(\Sigma X ,\Delta_{\zz_m})$ is noncontextual if and only if for every $1\leq j \leq m$ there is 
$Q^{up,j},Q^{down,j} \in D(\catsSet(X,\Delta_{\zz_m}))$  such that 
\begin{enumerate}
    \item $\Theta(Q^{up,j})=p^{up,j}$ and $\Theta(Q^{down,j})=p^{down,j}$ for every $1\leq j \leq m$,
    \item $\lambda_{1}Q^{up,1}+\dots+\lambda_{m}Q^{up,m}
=\mu_{1}Q^{down,1}+\dots+\mu_{m}Q^{down,m}$.
\end{enumerate}
\end{prop}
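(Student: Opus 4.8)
The plan is to get the pullback square for free from the pushout presentation of the suspension, and then to analyze noncontextuality by the same connectivity argument that underlies Theorem \ref{Introthm:Decom}, reducing the whole statement to a marginal/coupling problem for distributions on the set of deterministic maps.

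First I would record that $\Sigma X$ is the pushout of $CX \leftarrow X \rightarrow CX$ along the two base inclusions. Since $\sDist(-,\Delta_{\zz_m}) = \catsSet(-,D(\Delta_{\zz_m}))$ is a hom-functor into the fixed simplicial set $D(\Delta_{\zz_m})$, it is contravariant and sends this pushout of simplicial sets to a pullback of convex sets, giving $\sDist(\Sigma X,\Delta_{\zz_m}) \cong \sDist(CX,\Delta_{\zz_m}) \times_{\sDist(X,\Delta_{\zz_m})} \sDist(CX,\Delta_{\zz_m})$. Substituting (\ref{eq:Introstar}) into both corner copies turns this into the asserted square, provided I identify each restriction map $\sDist(CX,\Delta_{\zz_m}) \to \sDist(X,\Delta_{\zz_m})$ with $\Id \star \cdots \star \Id$. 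For this I would check directly that restricting $p=(\lra{\lambda_1,p^{(1)}},\dots,\lra{\lambda_m,p^{(m)}})$ along $X \hookrightarrow CX$ returns $\sum_j \lambda_j p^{(j)}$, which is exactly the join of the identity maps into the convex target $\sDist(X,\Delta_{\zz_m})$; this is a restatement of how (\ref{eq:IntroDecomp}) was constructed, so it is routine.

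For the noncontextuality statement I would first pin down the deterministic distributions, i.e.\ the simplicial maps. Because $X$ is connected, a simplicial map $CX \to \Delta_{\zz_m}$ is determined by its restriction to $X$ together with the single label in $\zz_m$ on one cone edge (the remaining cone edges being forced by the $2$-simplices of $CX$ and the group law of $\Delta_{\zz_m}=B\zz_m$); this is the combinatorial fact behind Theorem \ref{Introthm:Decom}. Applying it to both cones gives a natural identification $\catsSet(\Sigma X,\Delta_{\zz_m}) \cong \zz_m \times \zz_m \times \catsSet(X,\Delta_{\zz_m})$, the two $\zz_m$-factors being the up- and down-cone labels and the last factor the common restriction to $X$. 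A distribution $Q\in D(\catsSet(\Sigma X,\Delta_{\zz_m}))$ then has marginals $\lambda_j,\mu_j$ on the two label coordinates, and by naturality of $\Theta$ with respect to the two cone inclusions its image $\Theta_{\Sigma X}(Q)$ decomposes in the stated form, with $p^{up,j}$ (resp.\ $p^{down,j}$) equal to $\Theta_X$ of the conditional of $Q$ given up-label $j$ (resp.\ down-label $j$). So the noncontextual distributions on $(\Sigma X,\Delta_{\zz_m})$ are precisely the images of such $Q$, and it remains to match this image with (1) and (2). The forward direction is then immediate: taking $Q^{up,j},Q^{down,j}$ to be the conditionals above, (1) holds by construction and (2) holds because both sides compute the marginal of $Q$ on the $\catsSet(X,\Delta_{\zz_m})$-coordinate.

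The converse is where the real work lies, and I expect it to be the main obstacle. Given witnesses satisfying (1) and (2), I must reconstruct a single $Q$ on $\zz_m \times \zz_m \times \catsSet(X,\Delta_{\zz_m})$ realizing all of them at once. Working with the unnormalized measures $\lambda_j Q^{up,j}$ and $\mu_k Q^{down,k}$ on $\catsSet(X,\Delta_{\zz_m})$ (to avoid dividing by vanishing weights and to keep the two directions symmetric), condition (2) says these two families have the same total, so for each fixed map $\bar\phi$ the numbers $\{\lambda_j Q^{up,j}(\bar\phi)\}_j$ and $\{\mu_k Q^{down,k}(\bar\phi)\}_k$ have equal sum; hence a coupling $Q(j,k,\bar\phi)$ with these prescribed $j$- and $k$-marginals exists, e.g.\ the conditionally independent one. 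This $Q$ has the right label-marginals and conditionals, so $\Theta_{\Sigma X}(Q)=q$ by the decomposition above. The conceptual content of this step is that each cone only constrains the aggregate $\sum_j \lambda_j Q^{up,j}$ of its lift, so noncontextuality of the individual $p^{up,j}$ and $p^{down,j}$ is not enough: the lifts must be chosen coherently so the aggregated distributions on $\catsSet(X,\Delta_{\zz_m})$ coincide — which is exactly condition (2), and exactly the gluing data that the pullback square of the first paragraph encodes.
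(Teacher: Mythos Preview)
Your argument is correct and self-contained. The pullback square you obtain exactly as the paper does, by applying $\catsSet(-,D(\Delta_{\zz_m}))$ to the pushout defining $\Sigma X$ and then replacing each $\sDist(CX,\Delta_{\zz_m})$ by the $m$-fold join; the identification of the restriction map with $\Id\star\cdots\star\Id$ is the content of diagram~(\ref{eq:TheataConeeCompose}).

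For the noncontextuality statement you take a somewhat different route from the paper. The paper invokes \cite[Lemma~4.5]{OkayQuan}, a general gluing lemma saying that $p$ on a pushout $A\cup_C B$ is noncontextual iff there exist classical witnesses on $A$ and $B$ with equal push-forward to $D(\catsSet(C,Y))$, and then simply translates this condition through diagram~(\ref{eq:TheataConeeCompose}). You instead identify $\catsSet(\Sigma X,\Delta_{\zz_m})\cong \zz_m\times\zz_m\times\catsSet(X,\Delta_{\zz_m})$ explicitly and reduce the problem to a marginal/coupling question on this product, constructing the required $Q$ for the converse by hand (the conditionally-independent coupling over each $\bar\phi$). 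This is more elementary and makes transparent exactly where condition~(2) enters: it is precisely the equality of the two $\catsSet(X,\Delta_{\zz_m})$-marginals that guarantees a coupling exists. The paper's approach is shorter because the coupling step is hidden inside the cited lemma, but yours has the virtue of being self-contained.

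One small correction: $\Delta_{\zz_m}$ is not $B\zz_m=N\zz_m$ but rather $\Dec^0(N\zz_m)$ (see Section~\ref{subsec:Decla}), and no group law is actually used in your determination of $\catsSet(CX,\Delta_{\zz_m})$. What forces the extension is that the face maps of $\Delta_{\zz_m}$ are deletions: once $\varphi(c)\in\zz_m$ and $\varphi|_X$ are fixed, every $\varphi(c,\sigma)$ must equal $(\varphi(c),\varphi_\sigma)$ by $d_0(c,\sigma)=\sigma$. This gives the bijection $\catsSet(CX,\Delta_{\zz_m})\cong\zz_m\times\catsSet(X,\Delta_{\zz_m})$ directly, in agreement with the d\'ecalage argument of the paper.
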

Additional results proven in this paper are:
\begin{itemize}
    \item It is known that the set of distributions on a disjoint union is isomorphic (as a convex set) to the join of the sets of distributions on each individual set in the union (Proposition \ref{pro:LX1X2}). We generalize this fact to the set of distributions on a disjoint union parameterized by a connected space (Proposition \ref{pro:ConneDecom}).
%
    \item In Proposition \ref{pro:BellineqCone} we produce the Bell inequalities of the cone scenario $(CX,\Delta_{\zz_m})$ using the 
    Bell inequalities of the original scenario $(X,\Delta_{\zz_m})$ . This result is further applied to obtain a new type of Bell inequalities in Example \ref{ex:NewtypeBellineq}.
    \item   
    We demonstrate how specific families of vertices with particular properties in the scenario $(X,\Delta_{\zz_m})$ give rise to contextual vertices in the suspension scenario $(\Sigma X,\Delta_{\zz_m})$. The reason behind this lies in the restriction of the vertices to a line that included in $X$. As an application, we analyze the $(3,2,2)$ Bell scenario, i.e., three parties with two measurements per party and two outcomes per measurement. We provide a mathematical explanation for two classes of vertices appearing in the literature for this scenario.
\end{itemize}
The paper is organized as follows: In Section \ref{sec222}, we introduce the foundational material necessary for the study, including
simplicial sets, convex sets, simplicial distributions, and related tools. Section \ref{sec:DecomEvent} is about scenarios with event space that is equal to disjoint union of smaller components. We study simplicial distributions on such scenarios through those on scenarios with smaller components as their event space. This is used in section 
\ref{sec:Conescanarios} in order to prove our main theorem highlighted above. The transition to cone scenarios employs the D\' ecalage construction, introduced in section \ref{subsec:Decla}. In section \ref{sec:Suspentionnn}, we characterize 
noncontextual simplicial distributions on suspension scenarios. The section concludes with an analysis of two types of contextual vertices on suspension scenarios, which emerge from interactions between certain vertices on the original scenarios.







\paragraph{Acknowledgments.}
This work is supported by the Air Force Office of Scientific Research under award number FA9550-24-1-0257.



\section{Simplicial distributions}\label{sec222}
Simplicial distributions were first introduced in \cite{OkayQuan}. In this section, we briefly recapitulate the theory of simplicial distributions and provide some basic foundational tools. 
\subsection{Simplicial scenarios}
A simplicial distribution is defined for a space of measurements and a space of outcomes, both represented by simplicial sets. Simplicial sets are combinatorial models of topological spaces that are more expressive than simplicial complexes. A \emph{simplicial set} $X$ consists of a sequence of sets $X_0,X_1,\cdots$ and associated simplicial structure maps. 
These maps, given by face maps $d_i^X\colon X_n\to X_{n-1}$ and degeneracy maps $s_j^X\colon X_n\to X_{n+1}$. 
The elements in $X_n$ called $n$-simplices, so the face and degeneracy maps 
encode how to glue and collapse the simplices, respectively {(see, e.g.,\cite{FriedmansSett})}. We usually denote the simplicial structure maps by $d_i$ and $s_j$ omitting the simplicial set from the notation. 
A simplex is called 
\begin{itemize}
    \item \emph{degenerate} if it lies in the image of a degeneracy map,
    \item \emph{non-degenerate} if it is not degenerate, and
    \item A \emph{generating simplex} if it is non-degenerate and not a face of another simplex.
\end{itemize}
A \emph{simplicial map} $f\colon X\to Y$ between two simplicial sets is defined as a sequence of functions $\set{f_n\colon X_n\to Y_n}_{n\geq 0}$ 
that respect the face and the degeneracy maps. For a simplex $x\in X_n$ we will write $f_x$ instead of $f_n(x)$. With this notation the compatibility conditions are given by
$$
d_i(f_x)=f_{d_i(x)}  \;\; \text{and} \;\; s_j(f_x)=f_{s_j(x)}
$$
The category of simplicial sets is denoted by $\catsSet$.
\begin{defn}
The \emph{distribution monad} \cite[Section VI]{MaclaneCat} is a functor $D\colon  \catSet \to \catSet$ defined as follows:
\begin{itemize}
    \item For a set $X$, we define the set $D(X)$ of distributions on $X$ to be
$$
D(X)=\set{P\colon X \to [0,\infty):~ |\set{x\in X:~P(x)\neq 0}|<\infty \,\, \text{and} \,\,\sum_{x\in X}P(x)=1}    
$$
\item For a map $f\colon X\to Y$, the map $D(f)\colon  D(X) \to D(Y)$ defined by
$$
P\mapsto \left( y\mapsto \sum_{x\in f^{-1}(y)} P(x) \right). 
$$
\end{itemize}
The unit of the distribution monad $\delta_X\colon  X \xhookrightarrow{} D(X)$ sends $x\in X$ to the delta distribution
		$$
		\delta^x(x') = \left\lbrace
		\begin{array}{ll}
			1 & x'=x\\
			0 & \text{otherwise.}
		\end{array}
		\right.
		$$
\end{defn}
The distribution monad extended to a functor $D\colon \catsSet\to \catsSet$ by sending a simplicial set $X$
to the simplicial set $D(X)$ whose n-simplices are given by the set $D(X_n)$ and the simplicial structure maps are given by marginalization along the structure maps of $X$:
$$
D(d_i)\colon D(X_n) \to D(X_{n-1}) \;\; \text{and} \;\; D(s_i)\colon D(X_n) \to D(X_{n+1})
$$
It turns out that this extended functor is also a monad; see \cite[Proposition 2.4]{ConCat}. The unit of this monad satisfies $(\delta_X)_n=\delta_{X_n}$.

\begin{defn}\label{def:simpdist}
 A \emph{simplicial scenario} is a pair $(X,Y)$ of simplicial sets. A \emph{simplicial distribution} on this scenario is a simplicial map $p\colon X\to D(Y)$. We will write $\sDist(X,Y)$
for the set of simplicial distributions on $(X,Y)$. The simplicial sets $X$ and $Y$ are called the \emph{measurement space} and the \emph{outcome space}, respectively.
\end{defn}
\begin{defn}\label{def:deterdist}
A simplicial distribution of the form $\delta_Y\circ \varphi$ for some $\varphi \in \catsSet(X,Y)$ is called \emph{deterministic distribution} and denoted by $\delta^{\varphi}$.    
\end{defn}
\begin{defn}\label{defn:DeltaUU}
For a set $U$, let $\Delta_{U}$ be the simplicial set whose $n$-simplices are given by the
set $U^{n+1}$ and the simplicial structure maps are given by
$$
\begin{aligned}
d_i(a_0,a_1,\dots,a_n)&=(a_0,a_1,\dots,a_{i-1},a_{i+1},\dots,a_n) \\
s_j(a_0,a_1,\dots,a_n)&=(a_0,a_1,\dots a_{j-1},a_{j},a_{j},a_{j+1}\dots,a_n)
\end{aligned}
$$
\end{defn} 
For applications, our canonical choice for the outcome space will be $\Delta_{\zz_m}$, 
where $m \geq 2$. 
In this case, for a simplicial distribution $p \colon X \to D(\Delta_{\zz_m})$, and $\sigma \in X_n$, usually, we will write 
$p_{\sigma}^{a_0 a_1\dots a_n}$ instead of $p_{\sigma}(a_0,a_1,\dots,a_n)$.
Another useful outcome space is $N\zz_m$ the nerve of the group $\zz_m$; see for example \cite{JardineHomotopy}.

\begin{example}\label{ex:CHSHScenarioo}
A famous example, known as the CHSH scenario \cite{CHSHScennn}, can be described as a simplicial scenario. 
Consider a measurement space $X$ consisting of four generating $1$-simplices 
$\sigma_1,\sigma_2,\sigma_3,\sigma_4$, such that
\begin{equation}\label{eq:Howgluieddd}
d_0(\sigma_1)=d_1(\sigma_2)\;\; , \;\; d_0(\sigma_2)=d_1(\sigma_3)\;\; ,\;\;d_0(\sigma_3)=d_1(\sigma_4)\;\; , 
\;\; d_0(\sigma_4)=d_1(\sigma_1)
\end{equation}
These relations indicate that the simplices are cyclically connected, forming a loop-like structure. 
For a simplicial distribution $p \in \sDist(X,\Delta_{\zz_2})$ we have 
$$
p_{d_0(\sigma_k)}^a=D(d_0)(p_{\sigma_k})^a=\sum_{d_0(a_0,a_1)=a}p_{\sigma_k}^{a_0a_1}=
p_{\sigma_k}^{0a}+p_{\sigma_k}^{1a}
$$
for every $1\leq k \leq 4$ and $a \in \zz_2$. Similarly, $
p_{d_1(\sigma_k)}^a=
p_{\sigma_k}^{a0}+p_{\sigma_k}^{a1}
$. As a result, Equations (\ref{eq:Howgluieddd}) implies that 
$$
\begin{aligned}
p_{\sigma_1}^{00}+p_{\sigma_1}^{10} &=p_{\sigma_2}^{00}+p_{\sigma_2}^{01} \\
p_{\sigma_2}^{00}+p_{\sigma_2}^{10} &=p_{\sigma_3}^{00} +p_{\sigma_3}^{01} \\
p_{\sigma_3}^{00}+p_{\sigma_3}^{10} &=p_{\sigma_4}^{00}+p_{\sigma_4}^{01} \\
p_{\sigma_4}^{00}+p_{\sigma_4}^{10} &=p_{\sigma_1}^{00}+p_{\sigma_1}^{01}    
\end{aligned}
$$
\end{example}
A broader perspective on simplicial distributions is provided by the bundle approach, introduced in \cite{BundlePaper}. This framework extends the concept of simplicial scenarios by allowing measurements to take outcomes in different sets, rather than being restricted to a uniform outcome space. Another important advantage of working with bundle scenarios lies in the notion of morphisms between them.
\begin{defn}
A \emph{(simplicial) bundle scenario} is a simplicial map $f\colon E \to X$ such that $f_n$ is surjective for every $n \geq 0$. The simplicial sets $X$ and $E$ are called the \emph{measurement space} and the \emph{event space}, respectively.
A \emph{simplicial distribution} on $f$
is a simplicial map $p\colon X\to D(E)$ which makes the following diagram commutes
$$
\begin{tikzcd}[column sep=huge,row sep=large]
& D(E) \arrow[d,"{D(f)}"] \\
X \arrow[ur,"p"] \arrow[r,hook,"\delta_X"'] & D(X)
\end{tikzcd}
$$
We write $\sDist(f)$  
for the set of simplicial 
distributions on $f$.

\end{defn}
\begin{example}
Given a simplicial scenario $(X,Y)$. The projection map $f_{X,Y}\colon  
 X \times Y \to X$ is a bundle scenario.    
\end{example}

\begin{prop}\label{pro:sDist-characterization}  
(\!\!\cite[Proposition 4.9]{BundlePaper}) A simplicial map $p\colon  X\to D(E)$ belongs to $\sDist(f)$ if and only if
$
\set{e \in E_n~|~ p_n(x)(e) \neq 0} \subseteq f_n^{-1}(x) 
$
for every simplex $x\in X_n$.
\end{prop}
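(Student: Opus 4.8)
The plan is to unwind the defining commutativity condition $D(f)\circ p = \delta_X$ levelwise and simplexwise, reducing it to a statement about the support of each distribution $p_n(x)$. Since $p$ is already assumed to be a simplicial map, the only additional requirement for membership in $\sDist(f)$ is the commutativity of the triangle, and because both $D(f)$ and $\delta_X$ act levelwise this can be tested one simplex at a time: $p \in \sDist(f)$ if and only if $D(f_n)(p_n(x)) = \delta^x$ in $D(X_n)$ for every $n\geq 0$ and every $x \in X_n$. The simplicial-map structure of $p$ plays no role in this pointwise equivalence and is simply carried along.

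The key structural observation is that the fibers $\{f_n^{-1}(y)\}_{y \in X_n}$ partition $E_n$, so unravelling the definition of the marginalization map $D(f_n)$ gives, for each target simplex $y \in X_n$,
$$
D(f_n)(p_n(x))(y) = \sum_{e \in f_n^{-1}(y)} p_n(x)(e).
$$
Comparing this with $\delta^x(y)$, which is $1$ when $y = x$ and $0$ otherwise, splits the condition into two families of equalities: the mass over the fiber $f_n^{-1}(x)$ must equal $1$, and the mass over every other fiber $f_n^{-1}(y)$ with $y \neq x$ must vanish.

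For the forward direction I would invoke non-negativity of the entries of a distribution: each $p_n(x)(e) \geq 0$, so a sum over $f_n^{-1}(y)$ that vanishes forces every summand to vanish; hence $p_n(x)(e) = 0$ whenever $e$ lies in a fiber over some $y \neq x$, which is exactly the support containment $\{e : p_n(x)(e) \neq 0\} \subseteq f_n^{-1}(x)$. For the converse I would run the same computation backwards: if the support sits inside $f_n^{-1}(x)$, then for $y\neq x$ every term of $\sum_{e \in f_n^{-1}(y)} p_n(x)(e)$ vanishes since $f_n^{-1}(y)$ is disjoint from $f_n^{-1}(x)$, while the remaining fiber must carry all the mass because $p_n(x)$ is a distribution of total mass $1$, giving $\sum_{e \in f_n^{-1}(x)} p_n(x)(e) = 1$. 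This yields $D(f_n)(p_n(x)) = \delta^x$ for all $x$, i.e. $p \in \sDist(f)$.

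There is no genuine obstacle here: the statement is a direct translation between the diagrammatic condition and a support condition, and the only ingredient beyond unwinding definitions is the non-negativity of distribution values, which is precisely what upgrades ``a sum of non-negative terms is zero'' to ``each term is zero.'' The one point worth stating carefully is the decomposition of $E_n$ into the disjoint union of the fibers of $f_n$, so that the single normalization constraint $\sum_{e \in E_n} p_n(x)(e) = 1$ distributes across fibers and pins down the mass on $f_n^{-1}(x)$ once the other fibers are known to be empty.
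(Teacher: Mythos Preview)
Your argument is correct: unwinding the commutativity condition $D(f)\circ p=\delta_X$ levelwise, using the fiber decomposition of $E_n$ and the non-negativity of distribution values, is exactly what is needed. The paper does not actually prove this statement but simply cites \cite[Proposition 4.9]{BundlePaper}, so there is nothing further to compare; your proof is the standard direct verification.
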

In this paper we will use morphisms between bundles scenarios of the form $\alpha\colon f \to g$, which is encoded by the following commutative diagram in $\catsSet$:  
%
\begin{equation}\label{dia:Moralphaa}
\begin{tikzcd}[column sep=huge,row sep=large]
E
\arrow[rr,"\alpha"]
\arrow[dr,"f"'] && E'
\arrow[dl,"g"] \\
&  X &  
\end{tikzcd}
\end{equation}

\begin{defn}\label{def:Moralphaa} 
The \emph{push-forward} of a simplicial distribution $p \in \sDist(f)$ along the morphism in (\ref{dia:Moralphaa}) is 
a simplicial distribution $\alpha_{\ast}(p) \in \sDist(g)$ which defined to be $\alpha_{\ast}(p)=D(\alpha)\circ p$. 
\end{defn}
The induced map  $\alpha_{\ast}\colon \sDist(f) \to \sDist(g)$ is a convex map. See \cite[Definition 4.12 and Proposition 5.3]{BundlePaper} for the general case.     
We will use the same notation when we have a map between outcome spaces. This means, 
a simplicial map $\alpha\colon Y \to Y'$ induces a convex map $\alpha_{\ast}
\colon \sDist(X,Y) \to \sDist(X,Y')$ which defined by setting $\alpha_{\ast}(p)=D(\alpha)\circ p$.
\subsection{Contextuality}\label{subsec:Conttt}
In this section we give the definition of contextuality within the simplicial framework, following \cite[Definition 3.10]{OkayQuan} and \cite[Definition 5.8]{BundlePaper}. In addition, we introduce the definition of Bell inequalities, which serve as a fundamental tool for characterizing contextuality.

There is a natural map
\begin{equation}\label{eq:Thetaaaa}
\Theta=\Theta_{X,Y}\colon  D(\catsSet(X,Y)) \to \sDist(X,Y)
\end{equation}
defined by sending $P\in D(\catsSet(X,Y))$ to the simplicial distribution $\Theta(P)$ that satisfies $
\Theta_{}(P)_x(y)=\sum_{\varphi_x=y} P(\varphi)$ for every $x \in X_n$ and $y\in Y_n$. 
\begin{defn}\label{def:Contextulityyyy}
A simplicial distribution $p\colon X \to D(Y)$ is called \emph{contextual} if it does not lie in the image of $\Theta_{X,Y}$.
Otherwise it is called \emph{noncontextual}.    
\end{defn}
Similarly, for a bundle scenario $f\colon E \to X$ there is a natural map
\begin{equation}\label{eq:Thetaaaaaf}
\Theta_{f}\colon  D(\sseect(f)) \to \sDist(f)
\end{equation}
where $\sseect(f)$ is the set of sections of $f$. 
A simplicial distribution $p\in \sDist(f)$ is called \emph{contextual} if it does not lie in the image of $\Theta_{f}$.
Otherwise it is called \emph{noncontextual}. The next proposition shows that the theory of simplicial distributions on simplicial scenarios is embedded in the theory of simplicial distributions on bundle scenarios.

\begin{prop}\label{pro:XYfXY}
Given a simplicial scenario $(X,Y)$ and let $f_{X,Y}\colon  
 X \times Y \to X$ be the projection map. We identify $p\colon  X \to D(Y)$ with $p'\colon X \to D(X\times Y)$ by setting 
 $p'_x(x,y)={p}_x(y)$, to obtain an isomorphism $\sDist(X,Y) \cong \sDist(f_{X,Y})$ in $\catConv$ such that 
 the notions of contextuality coincide.     
\end{prop}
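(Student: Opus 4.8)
The plan is to write down explicit mutually inverse convex maps $\Phi\colon \sDist(X,Y)\to\sDist(f_{X,Y})$ and $\Psi$ in the other direction, and then to show that they intertwine the two $\Theta$-maps, so that noncontextual distributions are matched with noncontextual distributions.

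First I would check that $\Phi(p)=p'$ really lands in $\sDist(f_{X,Y})$. Writing $p'_x(x',y)=\delta_x(x')\,p_x(y)$, a direct marginalization computation shows $D(d_i^{X\times Y})(p'_x)=p'_{d_i x}$, and likewise for degeneracies, using $d_i^{X\times Y}=d_i^X\times d_i^Y$ together with the fact that $p$ is simplicial; hence $p'\colon X\to D(X\times Y)$ is a simplicial map. Since $p'_x$ is supported on $\{x\}\times Y_n=f_{X,Y}^{-1}(x)$, Proposition \ref{pro:sDist-characterization} gives $p'\in\sDist(f_{X,Y})$.

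For the inverse I would post-compose with the projection $\pi_Y\colon X\times Y\to Y$, setting $\Psi(q)=D(\pi_Y)\circ q$. Because any $q\in\sDist(f_{X,Y})$ is supported on the fibers of $f_{X,Y}$, marginalizing along $\pi_Y$ recovers $\Psi(q)_x(y)=q_x(x,y)$, and a short check shows $\Phi$ and $\Psi$ are mutually inverse. Both are affine: the formula for $p'_x$ is linear in $p_x$ and convex combinations are taken pointwise, while $\Psi$ is post-composition with $D(\pi_Y)$, which is affine since $D$ of a map is marginalization. Thus $\Phi$ is an isomorphism in $\catConv$.

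It remains to match the two notions of contextuality, and this is the step requiring the most care. The key point is the bijection $\catsSet(X,Y)\cong\sseect(f_{X,Y})$ sending $\varphi$ to the section $s_\varphi=(\Id_X,\varphi)$; conversely $f_{X,Y}\circ s=\Id_X$ forces every section to have this form. This induces an isomorphism $D(\catsSet(X,Y))\cong D(\sseect(f_{X,Y}))$, and I would verify that the square relating $\Theta_{X,Y}$ and $\Theta_{f_{X,Y}}$ through $\Phi$ commutes: evaluating both composites on $P$ at a simplex $(x',y)$ yields $\delta_x(x')\sum_{\varphi_x=y}P(\varphi)$, since $s_\varphi$ carries $x$ to $(x,\varphi_x)$. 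Commutativity says $\Phi$ carries $\mathrm{Im}(\Theta_{X,Y})$ onto $\mathrm{Im}(\Theta_{f_{X,Y}})$, whence $p$ is noncontextual if and only if $p'$ is. The only delicate bookkeeping is pushing the section/map correspondence through the sum defining $\Theta$ and confirming that the support conventions for $\sDist(f_{X,Y})$ line up; I do not expect a genuine obstacle beyond this verification.
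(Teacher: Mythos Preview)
Your proposal is correct and complete; the explicit inverse $\Psi(q)=D(\pi_Y)\circ q$, the fiber-support check via Proposition~\ref{pro:sDist-characterization}, and the section/map bijection $\varphi\leftrightarrow(\Id_X,\varphi)$ are exactly the ingredients needed, and your verification that the $\Theta$-square commutes is accurate. The paper itself does not give a proof but simply cites \cite[Remark~4.10]{BundlePaper}, so you have in fact supplied the argument that the paper defers; there is no alternative route to compare against here, as the identification is canonical and your direct verification is the natural one.
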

\begin{proof}
See \cite[Remark 4.10]{BundlePaper}.     
\end{proof}
We generalize the usual notion of Bell inequality, which is defined
for Bell-type scenarios for non-locality, to apply to any
simplicial scenario (see \cite{Fineee,ContexFract}).
\begin{defn}\label{def:Bellineq} 
For simplices $x_1,\dots,x_k$ in $X$, 
simplices $y_1,\dots,y_k$ in $Y$, and 
real coefficients $B_1,\dots,B_k, R$, the inequality  
$$
B_1 p_{x_1}(y_1)+ \dots  +B_k p_{x_k}(y_k) \leq R
$$
is called \emph{Bell inequality} for the scenario $(X,Y)$ if 
\begin{itemize}
    \item It is satisfied by every noncontextual simplicial distribution $p$ on $(X,Y)$.
    \item It is saturated by some noncontextual simplicial distribution $p$ on $(X,Y)$.
    \item It is violated by some contextual simplicial distribution $p$ on $(X,Y)$.
\end{itemize}
A family of Bell inequalities that characterize noncontextuality are called \emph{the Bell inequalities of the scenario}.
\end{defn}
\begin{example}
It is well-known that a simplicial distribution $p$ 
on the CHSH scenario (Example \ref{ex:CHSHScenarioo}) is noncontextual if and only if it
satisfies the following Bell inequalities (which is called the CHSH inequalities):
\begin{equation}\label{eq:CHSHineq}
\begin{aligned}
0\leq p_{\sigma_1}^{00}+p_{\sigma_1}^{11}&+p_{\sigma_2}^{00}+p_{\sigma_2}^{11} + p_{\sigma_3}^{00}+p_{\sigma_3}^{11}-p_{\sigma_4}^{00}-p_{\sigma_4}^{11} \leq 2\\
0\leq p_{\sigma_1}^{00}+p_{\sigma_1}^{11}&+p_{\sigma_2}^{00}+p_{\sigma_2}^{11} - p_{\sigma_3}^{00}-p_{\sigma_3}^{11}+p_{\sigma_4}^{00}+p_{\sigma_4}^{11} \leq 2 \\  
0\leq p_{\sigma_1}^{00}+p_{\sigma_1}^{11}&-p_{\sigma_2}^{00}-p_{\sigma_2}^{11} + p_{\sigma_3}^{00}+p_{\sigma_3}^{11}+p_{\sigma_4}^{00}+p_{\sigma_4}^{11} \leq 2 \\
0\leq -p_{\sigma_1}^{00}-p_{\sigma_1}^{11}&+p_{\sigma_2}^{00}+p_{\sigma_2}^{11} + p_{\sigma_3}^{00}+p_{\sigma_3}^{11}+p_{\sigma_4}^{00}+p_{\sigma_4}^{11} \leq 2
\end{aligned}
\end{equation}
\end{example}
Other famous Bell inequalities are the Froissart inequalities \cite{Foresat}, which belong to  
the complete bipartite graph $K_{3,3}$ as a measurement space and $\Delta_{\zz_2}$ as an outcome space (see 
\cite[Equation (28)]{TopoloMeth}).

\subsection{Convexity and vertices}\label{subsec:ConvVert}
A \emph{convex set} consists of a set equipped with a ternary operation $\lra{-,-,-}\colon  [0,1]\times X \times X \to X$ satisfying specific axioms that generalize the notion of convex combinations (see \cite[Definition 3]{JacobsConv}). It is more convenient to use the notation $t x +(1-t)y$ instead
of $\lra{t,x,y}$ for $t\in[0,1]$ and $x,y \in X$. We define the set of convex combinations for 
$x_1,\dots,x_n \in X$ to be   
$$
\conv\set{x_1,\dots,x_n}=\set{t_1x_1+\dots+t_n x_n:\; \sum_{i=1}^n t_i=1}
$$
Another equivalent way to define convex set is as an algebra over the distribution monad $D$ (see \cite[Theorem 4]{JacobsConv}). We will denote the category of convex sets by $\catConv$. There is an adjunction
\begin{equation}\label{eq:Set adjunction Conv}
D:\catSet \adjoint \catConv :U
\end{equation}
where $U$ is the forgetful functor and $D$ sends a set $X$ to the free convex set $D(X)$.
\begin{prop}\label{pro:Thetatrans}
The natural map $\Theta_{X,Y}$ given in (\ref{eq:Thetaaaa}) is the transpose of 
$(\delta_{Y})_\ast \colon  \catsSet(X,Y) \to \sDist(X,Y)$  in
$\catConv$, with respect to the adjunction in (\ref{eq:Set adjunction Conv}). An analogous fact holds for the map $\Theta_f$ in   
 (\ref{eq:Thetaaaaaf}).
\end{prop}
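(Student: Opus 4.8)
The plan is to make the free--forgetful adjunction $D \dashv U$ of (\ref{eq:Set adjunction Conv}) explicit and then match the resulting adjunct against the defining formula of $\Theta_{X,Y}$ on simplices. Recall that for a set $S$ and a convex set $C$ the adjunction bijection $\catConv(D(S),C) \cong \catSet(S,U(C))$ sends a convex map to its precomposition with the unit $\delta_S\colon S \to U D(S)$; conversely, a set map $f\colon S \to U(C)$ has a unique transpose $\tilde f \in \catConv(D(S),C)$ determined by $\tilde f\circ \delta_S = f$. Since $D(S)$ is the free convex set on $S$, every $P\in D(S)$ is the finite convex combination $P = \sum_{s}P(s)\,\delta^s$ of point masses, so convexity forces the transpose to be the weighted average $\tilde f(P) = \sum_{s}P(s)\,f(s)$. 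I will apply this with $S = \catsSet(X,Y)$, with $C = \sDist(X,Y)$ equipped with its pointwise convex structure, and with $f = (\delta_Y)_\ast$, which sends a simplicial map $\varphi$ to the deterministic distribution $\delta^\varphi = \delta_Y\circ \varphi$.

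By the uniqueness clause above, to prove $\Theta_{X,Y} = \widetilde{(\delta_Y)_\ast}$ it suffices to verify two things: that $\Theta_{X,Y}$ is a morphism in $\catConv$, and that it agrees with $(\delta_Y)_\ast$ after restricting along the unit, i.e. that $\Theta_{X,Y}$ carries the point mass at each $\varphi$ to the deterministic distribution $\delta^\varphi$. Convexity is immediate from the formula $\Theta(P)_x(y) = \sum_{\varphi_x = y}P(\varphi)$, which is affine in $P$ and therefore respects the pointwise convex combinations on $\sDist(X,Y)$. For the second point I would evaluate at $x\in X_n$ and $y\in Y_n$: feeding the point mass at $\varphi$ into the formula leaves only the term indexed by $\varphi$, and only when $\varphi_x = y$, so $\Theta(\delta^\varphi)_x(y)$ equals $1$ if $\varphi_x=y$ and $0$ otherwise; this is precisely the delta distribution $\delta^{\varphi_x}$ on $Y_n$. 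Since $(\delta_Y\circ \varphi)_x = \delta_Y(\varphi_x) = \delta^{\varphi_x}$, this is exactly $(\delta^\varphi)_x(y)$. Hence $\Theta_{X,Y}$ and the transpose agree on generators and are both convex, so they coincide.

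The statement for the bundle map $\Theta_f$ is obtained by repeating the argument verbatim, now taking $S = \sseect(f)$ and letting $f$ be the set map sending a section $s\colon X\to E$ to its associated deterministic distribution $\delta_E\circ s \in \sDist(f)$; one checks again that $\Theta_f$ is affine in its argument and sends the point mass at a section $s$ to $\delta_E\circ s$, which pins it down as the transpose by freeness of $D(\sseect(f))$. I do not expect a genuine obstacle here, since the content is just the universal property of the free convex set; the only thing to be careful about is the bookkeeping around the two roles played by $\delta$ (the unit $\delta_S$ of the set-level monad producing point masses, versus the outcome-space unit $\delta_Y$ producing deterministic distributions), together with the observation that the pointwise weighted average $\sum_\varphi P(\varphi)\,\delta^\varphi$ of deterministic distributions reproduces, coordinate by coordinate, the marginalization-style sum $\sum_{\varphi_x = y}P(\varphi)$ defining $\Theta$.
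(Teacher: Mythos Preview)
Your argument is correct: you verify that $\Theta_{X,Y}$ is convex and that it sends each point mass $\delta^\varphi$ to the deterministic distribution $\delta_Y\circ\varphi$, which by the universal property of the free convex set $D(\catsSet(X,Y))$ forces $\Theta_{X,Y}$ to be the transpose of $(\delta_Y)_\ast$. The paper itself does not spell out a proof here but simply cites \cite[Proposition 2.16]{ConCat} and \cite[Section 5.2]{BundlePaper}; your write-up is exactly the standard unpacking of those references, so there is no genuine divergence to compare.
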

\begin{proof}
See \cite[Proposition 2.16]{ConCat} and \cite[Section 5.2]{BundlePaper}.   
\end{proof}

\begin{defn}\label{def:vert}
Given a convex set $A$. An element $a \in A$ is called a \emph{vertex} (or \emph{extremal}) in $A$ if there is no distinct element $b,c \in A$  
and $0<t< 1$ such that $a=t b + (1-t)  c$.     
\end{defn}
For a simplicial scenario $(X,Y)$, every deterministic distribution $\delta^{\varphi}$ (see Definition \ref{def:deterdist}) is a vertex in $\sDist(X,Y)$ (see \cite[Proposition 5.14]{ConCat}). In fact, the deterministic distributions are the only noncontextual vertices. Here is an example of a contextual vertex. 
\begin{example}\label{ex:PRBoxxx}
The simplicial distribution 
$p:X \to D(\Delta_{\zz_2})$ on the CHSH scenario (Example \ref{ex:CHSHScenarioo}) defined by 
$p_{\sigma_1}=p_{+}$ and $p_{\sigma_{k}}=p_{-}$ for $2\leq k \leq 4$, where
\begin{equation}\label{eq:p+p-}    
p_{+}^{ab}=\left\lbrace
\begin{array}{cc}
1/2 & a+b=0\\
0 & a+b =1\\
\end{array}
\right.
\;\; ,\;\;
p_{-}^{ab}=\left\lbrace
\begin{array}{cc}
0 & a+b=0\\
1/2 & a+b =1\\
\end{array}
\right.
\end{equation}
is a contextual vertex called PR box \cite{PRBoxesss}. There are seven other PR boxes on the CHSH scenario. 
\end{example}
Now, we recall the definition of vertex support from \cite[Section 2.2]{EXtrCycle} as an important tool for studying the geometry of simplicial distributions.
\begin{defn}\label{def:Rrrelation}
{\rm 
Given two simplicial distributions $p,q\colon X \to D(Y)$ we write {$q\preceq p$} if for every $n \geq 0$, $x \in X_n$, and $y\in Y_n$, the inequality $q_{x}(y)\neq 0$ implies that $p_x(y)\neq 0$. 
}
\end{defn} 
\begin{defn}\label{def:Vsuppp}
{\rm
Given a simplicial distribution $p\colon X \to D(Y)$. The \emph{vertex support} $\Vsupp(p)$ is the set of vertices 
$q\in \sDist(X,Y)$
satisfying $q \preceq p$. 
}
\end{defn} 
\begin{defn}\label{def:closeVert}
A set $\set{q^{(1)},\dots,q^{(k)}}$ of vertices in $\sDist(X,Y)$ is called a \emph{closed set of vertices} if one of the following
equivalent conditions satisfied
\begin{itemize}
    \item $\Vsupp(\frac{1}{k}q^{(1)}+\dots+\frac{1}{k}q^{(k)})=\set{q^{(1)},\dots,q^{(k)}}$.
    \item $\Vsupp(\lambda_1 q^{(1)}+\dots+\lambda_k q^{(k)})\subseteq \set{q^{(1)},\dots,q^{(k)}}$ for every 
    $\lambda_1,\dots,\lambda_k \in [0,1]$ such that $\sum_{i=1}^k\lambda_i=1$.
\end{itemize}
\end{defn}
\begin{prop}
\label{thm:vertex on union} 
(\!\!\cite[Theorem 5.3]{EXtrCycle}) Given a finitely generated simplicial set $X$, such that 
$X=A\cup B$. A simplicial distribution $p\colon X\to D(\Delta_{\zz_m})$ is a vertex if and only if $p$ is the unique simplicial distribution in $\sDist(X,\Delta_{\zz_m})$ whose restrictions to $A$ and $B$ fall inside $\conv(\Vsupp(p|_A))$ and $\conv(\Vsupp(p|_B))$, respectively. 
\end{prop}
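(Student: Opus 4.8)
The plan is to realize both conditions in the statement --- being a vertex, and being the unique distribution with the prescribed restrictions --- as two descriptions of one and the same face of the polytope $\sDist(X,\Delta_{\zz_m})$. First I would fix coordinates: since $X$ is finitely generated, a simplicial distribution is determined by the finitely many numbers $p_\sigma(y)$ with $\sigma$ a non-degenerate simplex of $X$ and $y$ a simplex of $\Delta_{\zz_m}$ of matching dimension, and $\sDist(X,\Delta_{\zz_m})$ is cut out inside the affine space of normalized, structure-map-compatible families by the inequalities $p_\sigma(y)\geq 0$. Thus it is a genuine (bounded) polytope, and the relation $q\preceq p$ of Definition \ref{def:Rrrelation} says exactly that every coordinate that is $0$ at $p$ is also $0$ at $q$.

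Next I would record the elementary polytope lemma that, for any $q$ in such a polytope, the set $F_q=\{q'\preceq q\}$ is the minimal face containing $q$: it is obtained by turning the tight inequalities $q'_\sigma(y)\geq 0$ (those with $q_\sigma(y)=0$) into equalities, hence is a face, and $q$ lies in its relative interior. Consequently the vertices of $F_q$ are precisely the vertices $v$ of the ambient polytope with $v\preceq q$, that is $\Vsupp(q)$, and since $F_q$ is itself a bounded polytope we obtain $\conv(\Vsupp(q))=F_q$. Applying this to $\sDist(A,\Delta_{\zz_m})$ and $\sDist(B,\Delta_{\zz_m})$ gives $\conv(\Vsupp(p|_A))=\{s\preceq p|_A\}$ and $\conv(\Vsupp(p|_B))=\{t\preceq p|_B\}$.

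The key observation is then that restriction to $A$ (resp. $B$) is just the projection onto the sub-collection of coordinates indexed by simplices of $A$ (resp. $B$), and that every simplex of $X=A\cup B$ lies in $A$ or in $B$. Hence a distribution $r$ satisfies $r|_A\preceq p|_A$ and $r|_B\preceq p|_B$ if and only if $r\preceq p$. Combining this with the previous paragraph, the set $S$ of all $r\in\sDist(X,\Delta_{\zz_m})$ with $r|_A\in\conv(\Vsupp(p|_A))$ and $r|_B\in\conv(\Vsupp(p|_B))$ is exactly $\{r\in\sDist(X,\Delta_{\zz_m}):r\preceq p\}=F_p$, the support face of $p$. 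In particular $p\in S$ always, so the proposition's phrase ``$p$ is the unique such distribution'' is literally the assertion $S=\{p\}$.

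Finally I would invoke the fact that $p$ lies in the relative interior of its support face $F_p$, so that $F_p=\{p\}$ if and only if $\{p\}$ is a face, i.e. if and only if $p$ is a vertex; this yields both implications simultaneously. The routine part is the polytope bookkeeping; the step demanding the most care is the identification $\conv(\Vsupp(q))=\{q'\preceq q\}$, since this is precisely where finite generation of $X$ is used (to guarantee finitely many vertices and a bona fide face lattice) and where one must verify that non-degeneracy of a simplex is detected identically in $A$, $B$, and $X$, so that the coordinate projections and the relation $\preceq$ interact as claimed. As a more hands-on substitute for the direction ``vertex $\Rightarrow S=\{p\}$'', one can take $r\in S$, note $r\preceq p$, and push $p$ slightly along the ray from $r$ through $p$: nonnegativity can only fail on coordinates where $p$ already vanishes, and $r$ vanishes there too, so $p+\varepsilon(p-r)$ stays in the polytope for small $\varepsilon>0$, exhibiting $p$ as a proper convex combination of $r$ and $p+\varepsilon(p-r)$ unless $r=p$.
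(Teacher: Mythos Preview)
The paper does not actually prove this proposition; it merely quotes it from \cite[Theorem 5.3]{EXtrCycle}, so there is no in-paper argument to compare against. That said, your proof is correct and self-contained.

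Your reduction is the right one: once $\sDist(X,\Delta_{\zz_m})$ is realized as a polytope cut out by the nonnegativity inequalities, the identification $\conv(\Vsupp(q))=\{q'\preceq q\}$ with the minimal (support) face of $q$ is the key lemma, and it is standard polytope theory. The observation that $r|_A\preceq p|_A$ and $r|_B\preceq p|_B$ together are equivalent to $r\preceq p$ --- because every simplex of $X=A\cup B$ lies in $A$ or in $B$ --- collapses the two-piece restriction condition to the single condition $r\in F_p$. Both directions then follow from the fact that $p$ sits in the relative interior of $F_p$, so $F_p=\{p\}$ exactly when $p$ is a vertex; your alternative ray-pushing argument for ``vertex $\Rightarrow$ unique'' is also fine.

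The technical caveats you flag are the right ones. Finite generation is what makes $\sDist(X,\Delta_{\zz_m})$ an honest finite-dimensional polytope. For the compatibility of $\preceq$ with your chosen coordinates, note that the degeneracy maps of $\Delta_{\zz_m}$ are injective, so $D(s_j)$ is support-preserving and the relation $\preceq$ is already determined on non-degenerate simplices; since $A$ and $B$ are simplicial subsets, a simplex is degenerate in $A$ (or $B$) iff it is degenerate in $X$, and the restriction maps are literally coordinate projections, as you need.
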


\subsection{The monoid structure on simplicial distributions}\label{sec:Monoidalstucture}
An additional algebraic feature in the theory of simplicial distributions is the
monoid structure on $\sDist(X,Y)$ when Y is a simplicial group (the face and degeneracy maps are group homomorphisms). In this section, we will restrict the discussion to the case when $Y=\Delta_{\zz_m}$. For further details, see \cite{ConCat}. 

\begin{defn}
Given a group $(G,+)$ and distributions $P$ and $Q$ on $G$. The \emph{convolution product} $P \ast Q \in D(G)$ is defined by
$$
P\ast Q(a) =\sum_{a_1+a_2=a} P(a_1)Q(a_2)
$$
where the sum runs over pairs $(a_1,a_2)\in G \times G$ such that $a_1+a_2=a$.
\end{defn}
The set of distributions $D(G)$ with convolution product is a convex monoid \cite[Section 4]{ConCat}.
Next, we give an important example which will be used in Section \ref{subsec:Finalll}.

\begin{example}\label{def:RRRRR}
We define $S \in D(\zz_m^2)$ to be 
$$
S(a,b)= \begin{cases}
\frac{1}{m} & \text{if}\;\; b=a+1 \\
0  & \text{otherwise,}
\end{cases}
$$    
%
%
and define $S^j$ to be the result of the convolution product of $S$ by itself $j$ times. More concretely 
\begin{equation}\label{eq:Sjjj}
S^j(a,b)= \begin{cases}
\frac{1}{m} & \text{if}\;\; b=a+j\\
0  & \text{otherwise.}
\end{cases}
\end{equation}
We will denote $S^m$ by $I$ and define $S^0$ to be $I$ as well.
In fact, $\set{I,S,\dots,S^{m-1}}$ form a cyclic group, which 
will be called the \emph{average group} in $D(\zz^2_m)$.
If $m=2$ then the average group is $\set{p_{+},p_{-}}$ (see Equations (\ref{eq:p+p-})). 
The group $\zz_m^2$ acts on the average group by setting $(c,d)\cdot S^j= \delta^{(c,d)} \ast S^{j}$ since we 
have
\begin{equation}\label{eq:deltaR}
\delta^{(c,d)} \ast S^{j}=S^{j+d-c}
\end{equation}
\end{example}
\begin{example}\label{ex:Uniform}
For a finite group $G$, the \emph{uniform distribution} $U \in D(G)$ defined by $U(a)=\frac{1}{|G|}$ for 
every $a\in G$. It has the following property:
$$
U \ast P=P \ast U =U
$$
for every $P \in D(G)$.
\end{example}
Since $\Delta_{\zz_m}$ is a simplicial group, the set of simplicial distributions $\sDist(X,\Delta_{\zz_m})$ is also a convex monoid \cite[Lemma 5.1]{ConCat}. The product defined as follows:
\begin{defn}
Given $p,q \in \sDist(X,\Delta_{\zz_m})$, the product $p \cdot q \in \sDist(X,\Delta_{\zz_m})$ is defined by
$
(p\cdot q)_{x} = p_x \ast q_x
$
for every $x\in X_n$.
\end{defn}
An important special case is when $p$ is a deterministic distribution 
$\delta^{\varphi}$. 
In this case, we have
%
$$
(\delta^{\varphi}\cdot q)_{x}(y) =\sum_{y_1+y_2=y} \delta^{{\varphi}_x}(y_1)q_x(y_2)
=\delta^{{\varphi}_x}({\varphi}_x)q_x(y-{\varphi}_x)=q_x(y-{\varphi}_x)   
$$
%
This produces an action of the group $\catsSet(X,\Delta_{\zz_m})$ on $\sDist(X,\Delta_{\zz_m})$. For that reason, usually we will write $\varphi\cdot q$ instead of $\delta^{\varphi}\cdot q$. 

\begin{lemma}\label{lem:produpreorderr}
For $\tilde{p}, p, \tilde{q}, q \in \sDist(X,\Delta_{\zz_m})$, if $\tilde{p} \preceq p$ and $\tilde{q} \preceq q$, then $\tilde{p} \cdot \tilde{q} \preceq p \cdot q$ (see Definition \ref{def:Rrrelation}).    
\end{lemma}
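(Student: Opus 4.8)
The plan is to verify the support condition $\tilde p \cdot \tilde q \preceq p \cdot q$ pointwise, unwinding Definition \ref{def:Rrrelation} together with the formula for the convolution product, and to exploit crucially that every distribution takes values in $[0,\infty)$ so that no cancellation can occur in a convolution sum.

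First I would fix $n \geq 0$, a simplex $x \in X_n$, and an outcome $y \in \zz_m^{n+1}$, and assume $(\tilde p \cdot \tilde q)_x(y) \neq 0$; the goal is to deduce $(p \cdot q)_x(y) \neq 0$. Expanding the product via the convolution gives
$$
(\tilde p \cdot \tilde q)_x(y) = (\tilde p_x \ast \tilde q_x)(y) = \sum_{y_1 + y_2 = y} \tilde p_x(y_1)\, \tilde q_x(y_2).
$$
Since each summand $\tilde p_x(y_1)\,\tilde q_x(y_2)$ is a product of nonnegative numbers, the whole sum is nonzero if and only if at least one summand is strictly positive. Hence there exist $y_1, y_2 \in \zz_m^{n+1}$ with $y_1 + y_2 = y$, $\tilde p_x(y_1) \neq 0$, and $\tilde q_x(y_2) \neq 0$.

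Next I would apply the two hypotheses to these witnesses: $\tilde p \preceq p$ forces $p_x(y_1) \neq 0$, and $\tilde q \preceq q$ forces $q_x(y_2) \neq 0$. Consequently the term $p_x(y_1)\,q_x(y_2)$ is strictly positive and appears in the expansion $(p \cdot q)_x(y) = \sum_{y_1' + y_2' = y} p_x(y_1')\,q_x(y_2')$, which is again a sum of nonnegative terms. Therefore $(p \cdot q)_x(y) \neq 0$. As $n$, $x$, and $y$ were arbitrary, this is exactly the statement $\tilde p \cdot \tilde q \preceq p \cdot q$.

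The argument is short, and the only real subtlety—which I would state explicitly—is the nonnegativity observation used in the second step: because distributions are valued in $[0,\infty)$, a convolution sum vanishes precisely when every summand vanishes, so a single nonzero contribution from matching supports suffices. Everything else is a direct unwinding of the definitions, with no case analysis or estimation required.
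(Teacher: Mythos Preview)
Your proof is correct and follows essentially the same approach as the paper's own proof: both fix a simplex $x$ and an outcome $y$, use nonnegativity of the convolution summands to extract a witness pair $(y_1,y_2)$ from $(\tilde p\cdot\tilde q)_x(y)\neq 0$, transfer the nonvanishing to $p$ and $q$ via the hypotheses, and conclude that the corresponding term in $(p\cdot q)_x(y)$ is strictly positive. The only difference is cosmetic---you make the nonnegativity point more explicit, whereas the paper leaves it implicit in the inequality $(p\cdot q)_x^y\geq p_x^{y'_1}q_x^{y'_2}$.
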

\begin{proof}
Suppose that $(\tilde{p} \cdot \tilde{q})_x^y\neq 0$ for some $x\in X_n$ and $y\in \zz_m^{n+1}$. Remember that
$(\tilde{p} \cdot \tilde{q})_x^y=\sum_{y_1+y_2=y}\tilde{p}_x^{y_1}\tilde{q}_x^{y_2}$, so there is $y'_1,y'_2 \in \zz_m^{n+1}$ such that 
$y'_1+y'_2=y$, $\tilde{p}_x^{y'_1}\neq 0$, and $\tilde{q}_x^{y'_2}\neq 0$. Since $\tilde{p} \preceq p$ and $\tilde{q} \preceq q$, we 
get that ${p}_x^{y'_1}\neq 0$ and ${q}_x^{y'_2}\neq 0$. Thus $({p} \cdot {q})_x^y=\sum_{y_1+y_2=y}{p}_x^{y_1}{q}_x^{y_2}\geq
{p}_x^{y'_1} {q}_x^{y'_2}\neq 0$.

\end{proof}
\begin{prop}\label{pr:psiVsupp}
For $\varphi \in \catsSet(X,\Delta_{\zz_m})$ and $p \in \sDist(X,\Delta_{\zz_m})$ we have 
$$
\Vsupp(\varphi \cdot p)=\varphi\cdot \Vsupp(p)
$$      
\end{prop}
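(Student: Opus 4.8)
The plan is to exploit that multiplication by the deterministic distribution $\delta^{\varphi}$ is an invertible convex map on $\sDist(X,\Delta_{\zz_m})$ which preserves the support preorder $\preceq$; the equality of the two sets then follows by transporting the description of $\Vsupp$ along this automorphism. Concretely, set $\Phi\colon \sDist(X,\Delta_{\zz_m}) \to \sDist(X,\Delta_{\zz_m})$, $\Phi(q)=\varphi\cdot q$, and I will show $\Phi$ is a convex automorphism carrying $\Vsupp(p)$ bijectively onto $\Vsupp(\varphi\cdot p)$.

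First I would record the explicit shift formula already derived in the excerpt, namely $(\varphi \cdot q)_x(y) = q_x(y - \varphi_x)$ for every $x \in X_n$ and $y \in \zz_m^{n+1}$. Since $y \mapsto y-\varphi_x$ is a bijection of $\zz_m^{n+1}$, a one-line computation gives $\Phi(t q_1 + (1-t)q_2) = t\Phi(q_1)+(1-t)\Phi(q_2)$, so $\Phi$ is affine, i.e.\ a convex map. Moreover $\Phi$ is invertible with inverse $q \mapsto (-\varphi)\cdot q$, because $\delta^{-\varphi}\cdot \delta^{\varphi}=\delta^{0}$ is the identity of the monoid $\sDist(X,\Delta_{\zz_m})$ (here I use that $\catsSet(X,\Delta_{\zz_m})$ acts by group elements). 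Being a convex automorphism, $\Phi$ preserves vertices in both directions: by Definition \ref{def:vert}, $q$ is extremal if and only if $\varphi\cdot q$ is extremal. Next I would show $\Phi$ preserves the preorder in both directions, i.e.\ $q \preceq p \iff \varphi\cdot q \preceq \varphi\cdot p$. This can be read off the shift formula directly: $(\varphi\cdot q)_x(y)\neq 0 \iff q_x(y-\varphi_x)\neq 0$, so the implication $(\varphi\cdot q)_x(y)\neq 0 \Rightarrow (\varphi\cdot p)_x(y)\neq 0$ for all $x,y$ becomes, after the substitution $y' = y-\varphi_x$, exactly the condition $q \preceq p$. (Alternatively, one direction is Lemma \ref{lem:produpreorderr} with $\tilde p = p = \delta^{\varphi}$ and $\tilde q = q$, and the converse follows by multiplying through by $\delta^{-\varphi}$.)

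Finally I would assemble the inclusions. For $\varphi\cdot\Vsupp(p)\subseteq \Vsupp(\varphi\cdot p)$, take $q\in\Vsupp(p)$; then $q$ is a vertex with $q\preceq p$, hence $\varphi\cdot q$ is a vertex with $\varphi\cdot q\preceq \varphi\cdot p$, so $\varphi\cdot q\in\Vsupp(\varphi\cdot p)$. For the reverse inclusion, take a vertex $q'\preceq \varphi\cdot p$ and put $q=(-\varphi)\cdot q'$; applying the two preservation properties to $\delta^{-\varphi}$ shows $q$ is a vertex with $q\preceq p$, while $q'=\varphi\cdot q$, so $q'\in\varphi\cdot\Vsupp(p)$. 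The only point requiring care—and the heart of the argument—is establishing that $\Phi$ is simultaneously affine and preorder-preserving with a genuine two-sided inverse; once invertibility is secured from the group structure on $\catsSet(X,\Delta_{\zz_m})$, both properties drop out of the shift formula with no further computation.
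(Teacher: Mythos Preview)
Your proposal is correct and follows essentially the same approach as the paper: both establish $\varphi\cdot\Vsupp(p)\subseteq\Vsupp(\varphi\cdot p)$ from vertex preservation plus Lemma~\ref{lem:produpreorderr}, and then obtain the reverse inclusion by applying the forward inclusion with $-\varphi$ in place of $\varphi$. The only minor difference is that you justify vertex preservation internally via the fact that $q\mapsto\varphi\cdot q$ is a convex automorphism, whereas the paper cites an external reference for this.
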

\begin{proof}
Given a vertex $q$ in $\Vsupp(p)$. By \cite[part (3) of Proposition 2.5]{TopoloMeth} $\varphi\cdot p$ is a vertex. On the other hand, 
Lemma \ref{lem:produpreorderr} implies that  
$$
\varphi \cdot q =\delta^{\varphi} \cdot q\preceq \delta^{\varphi} \cdot p=\varphi \cdot p
$$
Thus $\varphi \cdot q\in \Vsupp(\varphi \cdot p)$. We proved that $\varphi\cdot \Vsupp(p) \subseteq \Vsupp(\varphi \cdot p)$. Let
$-\varphi$ be the inverse of $\varphi$ in $\catsSet(X,\Delta_{\zz_m})$, then we have
$$
\Vsupp(\varphi \cdot p) = \varphi \cdot (-\varphi) \cdot \Vsupp(\varphi \cdot p) \subseteq
\varphi \cdot \Vsupp\left(-\varphi \cdot (\varphi\cdot p)\right)
=\varphi\cdot \Vsupp(p)
$$
\end{proof}

%
%
%
%

\section{Decomposition of the event space}\label{sec:DecomEvent}

The join of convex sets serves as the coproduct in the category of convex sets, $\catConv$ \cite[Lemma 2.18]{OperadConv}. This section explores how decompositions of the event space or outcome space yield corresponding decompositions of simplicial distributions in terms of the join. The results are then applied to identify vertices of the associated convex polytope and analyze contextuality.

\subsection{The set of distributions on disjoint union}
The category of convex sets, being a category of algebras over a monad on $\catSet$, is cocomplete \cite[Proposition 9.3.4]{Cocompleteee}. We characterize finite coproduct of free convex sets as the set of distributions on disjoint union.  

\begin{defn}\label{def:bulletttt}
Let $\set{\bullet}$ be the one-element convex set. For any convex set $A$, we define the set 
$$
A_{\bullet}=\set{\lra{\lambda,a}\in [0,1]\times A \sqcup \set{\bullet}:\; \lambda=0 \;\; 
\text{iff} \;\; a=\bullet} 
$$
Sometimes we will write $\lra{0,x}$ even if $x$ is an expression that does not make sense. In this case, by $\lra{0,x}$ we mean $\lra{0,\bullet}$.
\end{defn}
The set $A_{\bullet}$ is a convex set with the structure:
$$
t_1\lra{\lambda_1,a_1}+\dots+t_m\lra{\lambda_m,a_m}=\lra{\zeta,\frac{t_1\lambda_1}{\zeta}a_1+
\dots+\frac{t_m\lambda_m}{\zeta}a_m}
$$
where $\zeta=t_1\lambda_1+\dots+t_m\lambda_m$. In addition, the construction in Definition \ref{def:bulletttt}
gives us an endfunctor on $\catConv$ such that for $f\colon A \to B$ in $\catConv$, the map 
$f_{\bullet}\colon  A_{\bullet} \to B_{\bullet}$ defined by $f_{\bullet}(\lambda,a)=(\lambda,f(a))$ (see \cite[Definition 3 and Lemma 4]{StateConv}).
Now, we define the coproduct of finite number of convex sets using \cite[Proposition 5]{StateConv}.
\begin{defn}
Given $A_1,\dots,A_m \in \catConv$, we define the coproduct (the \emph{join}) $A_1\star A_2 \star \dots \star A_m$ to be the following convex set: 
$$
\set{\left(\lra{\lambda_1,a_1},\dots,\lra{\lambda_m,a_m}\right) \in (A_1)_{\bullet}\times \dots \times(A_m)_{\bullet}:  \sum_{i=1}^m \lambda_i=1}
$$
with the structures maps $\kappa_i\colon A_i \to A_1\star A_2 \star \dots \star A_m$ defined by 
$\kappa_i(a)=\left(\lra{0,\bullet},\dots,\lra{1,a},\dots,\lra{0,\bullet}\right)$.
\end{defn}
\begin{prop}\label{pro:vertexAstar}
The vertices of $A_1\star A_2 \star \dots \star A_m$ are elements of the form $\kappa_i(a)$ where $a$ is a vertex in $A_i$.
\end{prop}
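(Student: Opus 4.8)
The plan is to prove the two inclusions separately: first that $\kappa_i(a)$ is a vertex whenever $a$ is a vertex of $A_i$, and then that every vertex of the join arises this way. Throughout I would work with the explicit coordinatewise description of the convex structure on $A_1\star\dots\star A_m$ inherited from the formula for $(A_i)_\bullet$: for $x=(\lra{\lambda_j,a_j})_j$ and $y=(\lra{\mu_j,b_j})_j$ in the join, the point $tx+(1-t)y$ has $j$-th weight $\nu_j=t\lambda_j+(1-t)\mu_j$ (which again sum to $1$) and $j$-th entry $\tfrac{t\lambda_j}{\nu_j}a_j+\tfrac{(1-t)\mu_j}{\nu_j}b_j$ when $\nu_j\neq 0$, and $\bullet$ otherwise. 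Everything reduces to manipulating this formula together with the membership constraint ``$\lambda_j=0$ iff the $j$-th entry is $\bullet$''.

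For the inclusion that $\kappa_i(a)$ is extremal, suppose $a$ is a vertex of $A_i$ and $\kappa_i(a)=tx+(1-t)y$ with $0<t<1$. For $j\neq i$ the $j$-th weight of $\kappa_i(a)$ is $0$, so $t\lambda_j+(1-t)\mu_j=0$; since $t,1-t>0$ and the weights are nonnegative, this forces $\lambda_j=\mu_j=0$, hence $a_j=b_j=\bullet$. The constraint $\sum_j\lambda_j=\sum_j\mu_j=1$ then gives $\lambda_i=\mu_i=1$, so $x=\kappa_i(a_i)$ and $y=\kappa_i(b_i)$. Reading off the $i$-th entry of $tx+(1-t)y$ yields $a=ta_i+(1-t)b_i$ in $A_i$, and extremality of $a$ forces $a_i=b_i=a$, i.e. $x=y$. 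Thus $\kappa_i(a)$ is a vertex.

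For the converse, let $v=(\lra{\lambda_1,a_1},\dots,\lra{\lambda_m,a_m})$ be a vertex. The key step is to show $v$ is supported on a single coordinate. If instead $\lambda_i,\lambda_k>0$ for some $i\neq k$, choose $0<\epsilon<\min(\lambda_i,\lambda_k)$ and form $x$ (resp.\ $y$) from $v$ by replacing the weights $\lambda_i,\lambda_k$ with $\lambda_i+\epsilon,\lambda_k-\epsilon$ (resp.\ $\lambda_i-\epsilon,\lambda_k+\epsilon$), keeping every entry $a_j$ fixed. The perturbed weights stay strictly positive and still sum to $1$, so (using $a_i,a_k\neq\bullet$, which holds because $\lambda_i,\lambda_k>0$) both $x$ and $y$ lie in the join; they are distinct since their $i$-th weights differ; and the coordinatewise formula gives $v=\tfrac12 x+\tfrac12 y$, contradicting extremality. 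Hence exactly one weight is nonzero, and $\sum_j\lambda_j=1$ forces $\lambda_i=1$, so $v=\kappa_i(a_i)$. To finish, I would use that the coprojection $\kappa_i$ is an injective morphism in $\catConv$ (immediate from its defining formula by inspecting the $i$-th coordinate): if $a_i=tb+(1-t)c$ with $b\neq c$ and $0<t<1$, then $\kappa_i(a_i)=t\kappa_i(b)+(1-t)\kappa_i(c)$ with $\kappa_i(b)\neq\kappa_i(c)$, contradicting extremality of $v$. Therefore $a_i$ is a vertex of $A_i$.

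The main obstacle is bookkeeping rather than conceptual difficulty: one must check that the perturbed points $x,y$ genuinely belong to the join, i.e.\ that the normalization $\zeta$ and the constraint ``$\lambda_j=0\Leftrightarrow$ entry $=\bullet$'' are respected, and that the decompositions produced really are nontrivial (distinct endpoints). Once the coordinatewise convex-combination formula is fixed, each verification is a short computation.
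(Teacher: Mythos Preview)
Your proof is correct. The overall two-direction structure, and in particular your argument that $\kappa_i(a)$ is extremal when $a$ is, matches the paper's proof essentially verbatim.

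The one genuine difference is in how you show that a vertex $v$ of the join must be supported on a single coordinate. The paper observes directly that any element decomposes as $v=\lambda_1\kappa_1(a_1)+\dots+\lambda_m\kappa_m(a_m)$; since the $\kappa_j(a_j)$ are pairwise distinct whenever two or more $\lambda_j$ are positive, extremality forces all weight onto one index. You instead run a weight-perturbation argument, producing $x\neq y$ with $v=\tfrac12 x+\tfrac12 y$. Both arguments are short and elementary; the paper's decomposition is perhaps the more natural first move (it is the canonical way elements of a join are written), while your perturbation avoids having to check that the $\kappa_j(a_j)$ are distinct and instead relies only on the coordinatewise formula. For the remaining step (vertex of the join at index $i$ implies $a_i$ is a vertex), you unfold the injectivity argument that the paper cites as \cite[Proposition~5.15]{ConCat}; this is exactly the content of that reference.
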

\begin{proof}
Every element $\left(\lra{\lambda_1,a_1},\dots,\lra{\lambda_m,a_m}\right) \in A_1\star A_2 \star \dots \star A_m$ can be written as follows:
$$
\begin{aligned}
&\lambda_1\left(\lra{1,a_1},\lra{0,a_2}\dots,\lra{0,a_m}\right) + \dots +
\lambda_m\left(\lra{0,a_1},\dots,\lra{0,a_{m-1}},\lra{1,a_m}\right) \\
&=\lambda_1\kappa_1(a_1) + \dots +
\lambda_m\kappa_m(a_m)
\end{aligned}
$$
Therefore, it suffices to show that $\kappa_i(a_i)$ is a vertex if and only if $a_i$ is 
a vertex. Since $\kappa_i$ is injective, if $\kappa_i(a_i)$ is a vertex then by \cite[Proposition 5.15]{ConCat} we get that $a_i$ is 
a vertex. Now, suppose conversely that for some $0<t<1$ we have
$$
\begin{aligned}
\kappa_i(a_i)&=t\left(\lra{\lambda_1,b_1},\dots,\lra{\lambda_m,b_m}\right)  +
(1-t)\left(\lra{\mu_1,c_1},\dots,\lra{\mu_m,c_{m}}\right) \\
&=\left(t\lra{\lambda_1,b_1}+(1-t)\lra{\mu_1,c_1},\dots,t\lra{\lambda_m,b_m}+(1-t)\lra{\mu_m,c_m}\right)\\
&=\left(\lra{\zeta_1,\frac{t\lambda_1}{\zeta_1}b_1+\frac{(1-t)\mu_1}{\zeta_1}c_1},\dots,
\lra{\zeta_m,\frac{t\lambda_m}{\zeta_m}b_m+\frac{(1-t)\mu_m}{\zeta_m}c_m}\right)
\end{aligned}
$$
where $\zeta_j=t\lambda_j+(1-t)\mu_j$ for $1 \leq j \leq m$. We conclude that 
$$
t\lambda_j+(1-t)\mu_j= 
\begin{cases}
1  & \text{if}\;\;  j =i \\
0  &  \text{otherwise.}
\end{cases}
$$
So $\lambda_i=\mu_i=1$ and $\lambda_j=\mu_j=0$ for every $j \neq i$. Thus $a_i=tb_i+(1-t)c_i$. 
If $a_i$ is a vertex then $b_i=c_i$. This means that  
$$
\begin{aligned}
\left(\lra{\lambda_1,b_1},\dots,\lra{\lambda_m,b_m}\right)
&=\left(\lra{0,b_1},\dots,\lra{1,b_i},\dots,\lra{0,b_{m}}\right)\\
&=\left(\lra{0,c_1},\dots,\lra{1,c_i},\dots,\lra{0,c_{m}}\right)\\
&=\left(\lra{\mu_1,c_1},\dots,\lra{\mu_m,c_{m}}\right)
\end{aligned}
$$
\end{proof}

The distribution monad as a left adjoint in (\ref{eq:Set adjunction Conv}) preserve colimits 
\cite[Theorem 4.5.3]{Riehlll}. In particular, we have the following: 
\begin{prop}\label{pro:LX1X2}
For $X_1,\dots,X_m \in \catSet$ we have a natural isomorphism
$$
\eta_{X_1,\dots,X_m}\colon  D(X_1 \sqcup  \dots \sqcup X_m) \to D(X_1) \star \dots \star D(X_m) 
$$
which is the inverse of the natural isomorphism  
$D(X_1) \star \dots \star D(X_m) \stk{\cong}  D(X_1 \sqcup  \dots \sqcup X_m)$
that induced by the inclusions $D(X_i) \xhookrightarrow{} D(X_1 \sqcup  \dots \sqcup X_m)$.
\end{prop}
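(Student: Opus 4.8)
The plan is to deduce the isomorphism from the general fact, recalled just above the statement, that the left adjoint $D\colon \catSet \to \catConv$ of the adjunction (\ref{eq:Set adjunction Conv}) preserves colimits, and in particular finite coproducts. First I would record that the coproduct of $X_1,\dots,X_m$ in $\catSet$ is the disjoint union $X_1\sqcup\dots\sqcup X_m$ together with the inclusions $\iota_i\colon X_i \hookrightarrow X_1\sqcup\dots\sqcup X_m$, while the coproduct of $D(X_1),\dots,D(X_m)$ in $\catConv$ is the join $D(X_1)\star\dots\star D(X_m)$ with the structure maps $\kappa_i$. The universal property of the join then yields a unique convex map
$$
\Phi\colon D(X_1)\star\dots\star D(X_m)\to D(X_1\sqcup\dots\sqcup X_m),\qquad \Phi\circ\kappa_i=D(\iota_i),
$$
and the colimit-preservation of $D$ forces $\Phi$ to be an isomorphism; I would then define $\eta_{X_1,\dots,X_m}=\Phi^{-1}$.

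Next I would identify $\Phi$ with the map ``induced by the inclusions'' appearing in the statement. Because each $\iota_i$ is injective, the formula for $D$ on morphisms shows that $D(\iota_i)$ sends a distribution on $X_i$ to the very same distribution, now regarded as supported on the $i$-th summand of the disjoint union; this is exactly the inclusion $D(X_i)\hookrightarrow D(X_1\sqcup\dots\sqcup X_m)$. Hence $\Phi$ is the comparison map determined by those inclusions, which is what the statement asserts $\eta$ inverts.

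To make $\eta$ usable in the later decomposition results, I would also spell it out explicitly: for $P\in D(X_1\sqcup\dots\sqcup X_m)$ set $\lambda_i=\sum_{x\in X_i}P(x)$ and, whenever $\lambda_i\neq 0$, let $a_i\in D(X_i)$ be the normalized restriction $a_i(x)=P(x)/\lambda_i$, so that
$$
\eta_{X_1,\dots,X_m}(P)=\bigl(\lra{\lambda_1,a_1},\dots,\lra{\lambda_m,a_m}\bigr),
$$
with the convention $\lra{0,\bullet}$ when $\lambda_i=0$. The finiteness of the support of $P$ together with $\sum_{x}P(x)=1$ gives $\sum_i\lambda_i=1$, so the target indeed lies in the join, and a direct computation confirms that this explicit assignment agrees with $\Phi^{-1}$.

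The routine but genuinely necessary bookkeeping is twofold, and is where I would spend the care. First, one must check that this explicit $\eta$ respects the convex structures — that forming the $\lambda_i$ and normalizing commutes with taking convex combinations in $D(X_1\sqcup\dots\sqcup X_m)$ on one side and with the join structure on the $(A_i)_\bullet$ (Definition \ref{def:bulletttt}) on the other — which is exactly the content that makes $\Phi$ a morphism in $\catConv$ rather than merely a bijection of underlying sets. Second, naturality in $(X_1,\dots,X_m)$: for maps $f_i\colon X_i\to X_i'$ the square relating $\eta$, $D(f_1\sqcup\dots\sqcup f_m)$, and $D(f_1)\star\dots\star D(f_m)$ commutes, which follows formally from the universal property of coproducts and functoriality of $D$. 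I do not expect a real obstacle here; the only point to watch is the handling of the basepoint $\bullet$ and the degenerate cases $\lambda_i=0$, where the normalization $a_i=P|_{X_i}/\lambda_i$ is undefined and must be read through the convention $\lra{0,\bullet}$ of Definition \ref{def:bulletttt}.
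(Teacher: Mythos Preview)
Your proposal is correct and follows essentially the same approach as the paper: the paper's entire proof is the sentence preceding the proposition, namely that $D$ is a left adjoint and hence preserves coproducts, and the explicit formula you derive matches the paper's formula (\ref{eq:LFormula}) given just afterward. The only cosmetic difference is that the paper phrases $\eta$ as the adjunction-transpose of the map $X_1\sqcup\dots\sqcup X_m\to D(X_1)\star\dots\star D(X_m)$ built from $\kappa_i\circ\delta_{X_i}$, whereas you describe its inverse $\Phi$ via the universal property of the join applied to the $D(\iota_i)$; these are the same comparison map.
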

To obtain the formula of the isomorphism $\eta_{X_1,\dots,X_m}$ in Proposition \ref{pro:LX1X2}, note that the compositions of the structure maps $X_i \xhookrightarrow{\delta_{X_i}} D(X_i)\stk{\kappa_i}  D(X_1) \star \dots \star D(X_m)$ induce a map 
\begin{equation}\label{eq:incluuuu}
X_1 \sqcup \dots \sqcup X_m \to D(X_1) \star \dots \star D(X_m)
\end{equation}
The map $\eta_{X_1,\dots,X_m}$ defined to be the transpose of the natural map in (\ref{eq:incluuuu}) under the adjunction in (\ref{eq:Set adjunction Conv}). One can check that for $P \in D(X_1 \sqcup  \dots \sqcup X_m)$ we have 
\begin{equation}\label{eq:LFormula}   
\eta_{X_1,\dots,X_m}(P)=\left(\lra{\sum_{x\in X_1} P(x),P^{(1)}},\dots,\lra{\sum_{x\in X_m} P(x),P^{(m)}}\right)
\end{equation}
where $P^{(i)} \in D(X_i)$ defined by $P^{(i)}(y)=\frac{P(y)}{\sum_{x\in X_i} P(x)}$. 

\subsection{Simplicial distributions with event space of disjoint union}

In this section, we extend the decomposition described in (\ref{eq:LFormula}) to simplicial distributions defined on connected measurement spaces.

\begin{defn}
Let $E$ be a simplicial set and $F \subseteq E$ a simplicial subset. We say that $F$ is a \emph{summand} of $E$ if 
there exists another simplicial subset $F'\subseteq E$ such that $E$ decomposes as a coproduct (disjoint union) $F \sqcup F'$. 
\end{defn}
%
%
\begin{lemma}\label{lem:pdx}
Given a bundle scenario $f\colon E \to X$ such that $F$ is a summand of $E$. Let $p$ be a simplicial distribution on $f$, then for 
every $x \in X_n$ we have:
\begin{enumerate}
    \item $\sum_{y\in F_n}p_x(y)=\sum_{y\in F_{n-1}}p_{d_i(x)}(y)$ for every $0\leq i \leq n$,
    \item  $\sum_{y\in F_n}p_x(y)=\sum_{y\in F_{n+1}}p_{s_i(x)}(y)$ for every $0\leq i \leq n$.
\end{enumerate}
\end{lemma}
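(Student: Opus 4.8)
The plan is to reduce the whole statement to two facts: that a simplicial distribution is in particular a simplicial map $p\colon X\to D(E)$, and the explicit form of the marginalization maps from the definition of the distribution monad. First I would unwind the naturality of $p$. Since $p$ commutes with the simplicial structure maps, for every $x\in X_n$ we have
\begin{equation*}
p_{d_i(x)}=D(d_i)(p_x)\qquad\text{and}\qquad p_{s_i(x)}=D(s_i)(p_x),
\end{equation*}
where $D(d_i)\colon D(E_n)\to D(E_{n-1})$ and $D(s_i)\colon D(E_n)\to D(E_{n+1})$ are the marginalization maps. Note that only the simplicial-map property of $p$ is used; the support-in-fibers condition from Proposition \ref{pro:sDist-characterization} plays no role.

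Next I would substitute the marginalization formula $D(d_i)(P)(e')=\sum_{d_i(e)=e'}P(e)$ into the right-hand side of part (1) and interchange the two sums (legitimate since $p_x$ has finite support, so only finitely many terms are nonzero):
\begin{equation*}
\sum_{y\in F_{n-1}}p_{d_i(x)}(y)=\sum_{y\in F_{n-1}}\;\sum_{\substack{e\in E_n\\ d_i(e)=y}}p_x(e)=\sum_{\substack{e\in E_n\\ d_i(e)\in F_{n-1}}}p_x(e).
\end{equation*}
The entire lemma then hinges on the identity $\set{e\in E_n:\ d_i(e)\in F_{n-1}}=F_n$, which I would establish from the summand hypothesis.

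This preimage computation is really the only nonformal ingredient. Writing $E=F\sqcup F'$ for the coproduct decomposition, both $F$ and $F'$ are simplicial subsets, so $d_i$ restricts to maps $F_n\to F_{n-1}$ and $F'_n\to F'_{n-1}$. Since $E_n=F_n\sqcup F'_n$ and $F_{n-1}\cap F'_{n-1}=\emptyset$, a simplex $e\in E_n$ satisfies $d_i(e)\in F_{n-1}$ precisely when $e\in F_n$: if $e\in F_n$ then $d_i(e)\in F_{n-1}$, while if $e\in F'_n$ then $d_i(e)\in F'_{n-1}$ lies outside $F_{n-1}$. Hence $\sum_{y\in F_{n-1}}p_{d_i(x)}(y)=\sum_{e\in F_n}p_x(e)=\sum_{y\in F_n}p_x(y)$, which is part (1).

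For part (2) I would run the identical argument with $s_i$ in place of $d_i$. The same summand reasoning gives $\set{e\in E_n:\ s_i(e)\in F_{n+1}}=F_n$, because the degeneracies also preserve the decomposition ($s_i(F_n)\subseteq F_{n+1}$ and $s_i(F'_n)\subseteq F'_{n+1}$), and the marginalization-plus-interchange computation is otherwise unchanged. I do not expect a genuine obstacle: the only point needing care is making explicit that the face and degeneracy maps respect the summand splitting, so that the relevant preimage is exactly $F_n$ rather than merely containing it.
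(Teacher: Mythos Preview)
Your proposal is correct and follows essentially the same approach as the paper: use $p_{d_i(x)}=D(d_i)(p_x)$, expand the marginalization, and invoke the summand hypothesis to identify the preimage with $F_n$. The paper compresses this into a single displayed chain of equalities and merely remarks that the summand property is used in the last step, whereas you spell out explicitly why $\{e\in E_n:\ d_i(e)\in F_{n-1}\}=F_n$; but the argument is the same.
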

\begin{proof}
Part $(1)$ follows from the following computation:
\begin{equation}\label{eq:sumdi=sum}
\begin{aligned}
\sum_{y\in F_{n-1}}p_{d_i(x)}(y)&=
\sum_{y\in F_{n-1}}D(d_i)(p_{x})(y)\\
&=
\sum_{y\in F_{n-1}} \sum_{d_i(y')=y}p_{x}(y')\\
&=\sum_{z\in F_n}p_x(z)
\end{aligned}
\end{equation}
We used the fact that $F$ is a summand of $E$ in the last step in (\ref{eq:sumdi=sum}). The proof of the second part is similar.
\end{proof}
\begin{defn}\label{def:lineee}
A \emph{line} is a simplicial set $L$ that generated by 
a sequence of pairwise distinct $1$-simplices $\sigma_1,\cdots,\sigma_n\in X_1$ satisfying
$$
d_{i_1}(\sigma_1)=d_{1-i_2}(\sigma_2)\, , \, 
d_{i_2}(\sigma_2)=d_{1-i_3}(\sigma_3)\,,\cdots,\, d_{i_{n-1}}(\sigma_{n-1})=d_{1-i_n}(\sigma_n)
$$ 
where $i_1,i_2, \cdots,i_n \in \{0,1\}$.
\end{defn}
\begin{defn}
A simplicial set $X$ is called \emph{connected} if for every vertices $u,v \in X_0$ there is a line $L \subseteq X$ as in
Definition \ref{def:lineee} such that $u=d_{1-i_1}(\sigma_1)$ and $v=d_{i_n}(\sigma_n)$. 
\end{defn}

\begin{lemma}\label{lem:sum=sum}
Given a bundle scenario $f\colon E \to X$ where $X$ is a connected simplicial set. Let $F$ be a summand of $E$ and let $p$ be a simplicial distribution on $f$. Then for every two simplices $x \in X_n$ and $x' \in X_k$, we have 
$$
\sum_{y\in F_n} p_x(y)=\sum_{y \in F_k} p_{x'}(y)
$$
\end{lemma}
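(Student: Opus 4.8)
The plan is to introduce the shorthand $P(x) := \sum_{y \in F_n} p_x(y)$ for a simplex $x \in X_n$, and to prove that $P$ is constant across all simplices of $X$ regardless of their dimension. The essential content is already packaged in Lemma \ref{lem:pdx}: it says precisely that $P$ is invariant under the simplicial structure maps, i.e. $P(x) = P(d_i(x))$ and $P(x) = P(s_j(x))$ for all admissible $i,j$. The whole argument then reduces to showing that, under connectedness, any two simplices can be linked through a chain of such structure-map moves.

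First I would reduce to the case of vertices. Given any $x \in X_n$ with $n \geq 1$, applying the face map $d_0$ repeatedly $n$ times produces a vertex $v_x := d_0^n(x) \in X_0$, and iterating part $(1)$ of Lemma \ref{lem:pdx} gives $P(x) = P(d_0(x)) = \cdots = P(v_x)$. Hence it suffices to prove that $P$ takes the same value on every vertex of $X$.

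Next I would handle the vertices using connectedness. For a single $1$-simplex $\sigma \in X_1$, part $(1)$ of Lemma \ref{lem:pdx} applied with $i=0$ and with $i=1$ yields $P(d_0(\sigma)) = P(\sigma) = P(d_1(\sigma))$, so the two endpoints of any edge carry the same $P$-value. Now fix two vertices $u,v \in X_0$; by connectedness there is a line $L$ generated by $\sigma_1,\dots,\sigma_n$ with $u = d_{1-i_1}(\sigma_1)$, $v = d_{i_n}(\sigma_n)$, and the matching relations $d_{i_k}(\sigma_k) = d_{1-i_{k+1}}(\sigma_{k+1})$. Reading off the endpoints of the $\sigma_k$ produces a chain $P(u) = P(d_{1-i_1}(\sigma_1)) = P(d_{i_1}(\sigma_1)) = P(d_{1-i_2}(\sigma_2)) = \cdots = P(d_{i_n}(\sigma_n)) = P(v)$, where each equality between the two endpoints of a fixed $\sigma_k$ comes from the edge computation above, and each equality across consecutive edges is the matching relation (literally the same vertex). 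Thus $P(u) = P(v)$.

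The final step is to combine the two reductions: for arbitrary $x \in X_n$ and $x' \in X_k$, Step~1 gives $P(x) = P(v_x)$ and $P(x') = P(v_{x'})$, and Step~2 gives $P(v_x) = P(v_{x'})$, whence $P(x) = P(x')$, which is the assertion. I expect the only delicate point to be Step~2 --- specifically, keeping the bookkeeping of the indices $i_k$ straight so that the endpoints identified by the line's matching relations really do chain together into one connected path of equal $P$-values; everything else is a direct invocation of Lemma \ref{lem:pdx}.
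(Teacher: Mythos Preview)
Your proposal is correct and follows essentially the same approach as the paper: reduce to vertices via iterated $d_0$, then chain along a line using Lemma~\ref{lem:pdx} part~(1). Your shorthand $P(x)$ makes the exposition slightly cleaner, but the logical structure is identical.
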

\begin{proof}
Let $u=d_0^n(x)$ and $v=d_0^k(x')$ be the vertices obtained by applying $n$ times the face map $d_0$ on $x$ and $x'$, respectively. 
The space $X$ is connected, so there is a sequence of pairwise distinct $1$-simplices $\sigma_1,\cdots,\sigma_n\in X_1$ 
as in Definition \ref{def:lineee}
such that $u=d_{1-i_1}(\sigma_1)$ and $v=d_{i_n}(\sigma_n)$. Using 
part $(1)$ of Lemma \ref{lem:pdx} we obtain that  
$$
\begin{aligned}
\sum_{y\in F_0} p_u(y)&=\sum_{y \in F_1} p_{\sigma_1}(y)
=\sum_{y\in F_0} p_{d_{i_1}(\sigma_1)}(y)\\
&=\sum_{y\in F_0} p_{d_{1-i_2}(\sigma_2)}(y)
=\sum_{y \in F_1} p_{\sigma_2}(y)\\
&= \dots  =\sum_{y\in F_0} p_{d_{1-i_n}(\sigma_n)}(y)\\
&=\sum_{y \in F_1} p_{\sigma_n}(y) 
=\sum_{y\in F_0} p_v(y)
\end{aligned}
$$
Again, part $(1)$ of Lemma \ref{lem:pdx} implies that $\sum_{y\in F_n} p_x(y)=\sum_{y\in F_0} p_u(y)$ and $\sum_{y\in F_k} p_{x'}(y)=\sum_{y\in F_0} p_v(y)$.
\end{proof}

\begin{defn}\label{def:lambdpF}
Given a bundle scenario $f\colon E \to X$ where $X$ is a connected simplicial set. Let $F$ be a summand of $E$ and let $p$ be a simplicial distribution on $f$. We define 
\begin{itemize}
    \item The scalar $\lambda(p,F)$ to be $\sum_{y \in F_n} p_x(y)$ for some simplex $x \in X_n$. 
    \item The simplicial distribution $p|_{F} \in \sDist(f|_{F})$ by setting $(p|_{F})_x(y)=\frac{p_x(y)}{\lambda(p,F)}$ for every $x \in X_n$ and $y \in F_n$.
\end{itemize}
\end{defn}
By Lemma \ref{lem:sum=sum} the scalar $\lambda(p,F)$ is well-defined. We prove that $p|_F$ 
is well defined while we ignore the case that $\lambda(p,F)=0$. First, 
 we have 
$$
\sum_{y \in F_n}(p|_F)_x(y)=\sum_{y \in F_n} \frac{p_x(y)}{\lambda(p,F)} = \frac{\sum_{y \in F_n}p_x(y)}{\lambda(p,F)} =1
$$
Thus $(p|_F)_x \in D(F_n)$ for every simplex $x\in X_n$. 
Now, we prove that $p|_F$ is a simplicial map from $X$ to $D(F)$. Given $y\in F_{n-1}$, we have
$$
\begin{aligned}  
(p|_F)_{d_i(x)}(y)&=\frac{p_{d_i(x)}(y)}{\lambda(p,F)}
= \frac{D(d_i)(p_{x})(y)}{\lambda(p,F)}  \\
&=\frac{\sum_{z\in E_n \,\, \text{s.t} \,\, d_i(z)=y }p_x(z)}{\lambda(p,F)}  \\
&=\frac{\sum_{z\in F_n \,\, \text{s.t} \,\, d_i(z)=y }p_x(z)}{\lambda(p,F)} \\
&=\sum_{z\in F_n \,\, \text{s.t} \,\, d_i(z)=y }\frac{p_x(z)}{\lambda(p,F)} \\
&=\sum_{z\in F_n \,\, \text{s.t} \,\, d_i(z)=y }(p|_{F})_x(z)\\
&=D(d_i)\left((p|_{F})_x\right)(y)
\end{aligned}
$$
Similar argue works for the degeneracy maps. Finally, by the definition of $p|_{F}$ and Proposition \ref{pro:sDist-characterization} we conclude that $p|_{F} \in \sDist(f|_{F})$. 

\begin{prop}\label{pro:ConneDecom}
Given a bundle scenario $f\colon E \to X$ where $X$ is a connected simplicial set. Suppose that $E=\sqcup_{i=1}^m F^{(i)}$ and let $\alpha^{(j)}\colon  F^{(j)} \hookrightarrow E$ be the inclusion maps. Then the map 
$$
\eta=\eta_{F^{(1)},\dots,F^{(m)}}\colon \sDist(f) \to \sDist(f|_{F^{(1)}})\star \dots \star \sDist(f|_{F^{(m)}})
$$
$$
\;\;\;\;\;\; \;\;\;\;\; \;\;\;\;\;\;\; p \mapsto \left(\lra{\lambda(p,F^{(1)}),p|_{F^{(1)}}},\dots,\lra{\lambda(p,F^{(m)}),p|_{F^{(m)}}}
\right)
$$
is the inverse of the map $\alpha^{(1)}_\ast \star 
\dots \star \alpha^{(m)}_\ast$ in $\catConv$ (see Definition \ref{def:Moralphaa}).
\end{prop}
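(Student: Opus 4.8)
The plan is to exhibit a concrete two‑sided inverse. Write $\Phi := \alpha^{(1)}_\ast \star \dots \star \alpha^{(m)}_\ast$ for the map out of the join induced by the pushforwards $\alpha^{(j)}_\ast$; since the join is the coproduct in $\catConv$, $\Phi$ is automatically a convex map. The key structural observation I would use is that $\catConv$ is a category of algebras over the monad $D$, so a \emph{bijective} convex map has an inverse that is again a convex map. Hence it suffices to check that $\eta$ is a well-defined function into the join and that $\eta\circ\Phi$ and $\Phi\circ\eta$ are the identity on underlying sets; the morphism property of $\eta = \Phi^{-1}$ then comes for free, and I avoid a separate (fiddly) verification that $\eta$ respects the renormalized convex structure with normalizing factor $\zeta = \sum_i t_i\lambda_i$.

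First I would pin down the explicit form of $\Phi$. Every element of the join is $\sum_{j=1}^m \lambda_j\,\kappa_j(p^{(j)})$, and by the defining property of the copairing together with convex-linearity, $\Phi\!\left(\left(\lra{\lambda_1,p^{(1)}},\dots,\lra{\lambda_m,p^{(m)}}\right)\right)=\sum_{j=1}^m \lambda_j\,\alpha^{(j)}_\ast(p^{(j)})$. Because $\alpha^{(j)}\colon F^{(j)}\hookrightarrow E$ is an inclusion, $\alpha^{(j)}_\ast(p^{(j)})=D(\alpha^{(j)})\circ p^{(j)}$ is just $p^{(j)}$ regarded as a distribution on $E$ supported on $F^{(j)}$; so for $x\in X_n$ and $y\in F^{(j)}_n$ one gets $\Phi(\cdots)_x(y)=\lambda_j\, p^{(j)}_x(y)$. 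Since $\sDist(f)$ is convex and each summand lies in it, this value indeed defines an element of $\sDist(f)$.

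Next I would confirm $\eta$ lands in the join: the scalars $\lambda(p,F^{(j)})$ are well-defined by Lemma \ref{lem:sum=sum}, the restrictions $p|_{F^{(j)}}$ were already shown to be genuine elements of $\sDist(f|_{F^{(j)}})$ right after Definition \ref{def:lambdpF}, and for any fixed $x\in X_n$ we have $\sum_{j=1}^m \lambda(p,F^{(j)})=\sum_{j}\sum_{y\in F^{(j)}_n}p_x(y)=\sum_{y\in E_n}p_x(y)=1$ because the $F^{(j)}_n$ partition $E_n$ and $p_x\in D(E_n)$. The two composites are then verified by direct evaluation, which is the computational heart but is short: for $\Phi\circ\eta$ and $y\in F^{(j)}_n$ one has $\Phi(\eta(p))_x(y)=\lambda(p,F^{(j)})\cdot\frac{p_x(y)}{\lambda(p,F^{(j)})}=p_x(y)$; for $\eta\circ\Phi$, writing $q=\Phi\big((\lra{\lambda_j,p^{(j)}})_j\big)$, the $j$-th scalar of $\eta(q)$ is $\sum_{y\in F^{(j)}_n}q_x(y)=\lambda_j$ and the $j$-th restriction is $q_x(y)/\lambda_j=p^{(j)}_x(y)$, recovering the original element.

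The step requiring the most care is the degenerate basepoint case $\lambda_j=0$, where $p^{(j)}$ is the symbol $\bullet$ and the divisions above are vacuous; here one must check that the formulas stay consistent under the convention $\lra{0,\bullet}$ of Definition \ref{def:bulletttt} (a component with zero weight is simply dropped on both sides, and $\lambda(q,F^{(j)})=0$ forces the corresponding entry of $\eta(q)$ to be $\lra{0,\bullet}$). With both composites established as identities on underlying sets and $\Phi$ already convex, the conclusion that $\eta$ is the inverse of $\alpha^{(1)}_\ast\star\dots\star\alpha^{(m)}_\ast$ in $\catConv$ follows at once.
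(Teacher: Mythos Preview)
Your proof is correct and follows essentially the same route as the paper: verify that $\eta$ lands in the join using $\sum_j \lambda(p,F^{(j)})=1$, then check that $\eta$ and $\alpha^{(1)}_\ast\star\dots\star\alpha^{(m)}_\ast$ are mutual inverses by direct evaluation on points. The only noteworthy difference is that the paper verifies $\eta\circ\Phi=\Id$ just on the generators $\kappa_j(q)$ (implicitly using that $\eta$ is convex so that agreement on generators suffices), whereas you compute $\eta\circ\Phi$ on an arbitrary element and then \emph{deduce} the convexity of $\eta$ from the fact that a bijective morphism of $D$-algebras has an inverse morphism; your bookkeeping is slightly cleaner on this point and also makes the $\lambda_j=0$ edge case explicit.
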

\begin{proof}
First note that 
$$
\sum_{i=1}^m\lambda(p,F^{(i)})=\sum_{i=1}^m \sum_{y\in F^{(i)}_n} p_x(y)=\sum_{y\in E_n}p_x(y)=1
$$
Thus $\left(\lra{\lambda(p,F^{(1)}),p|_{F^{(1)}}},\dots,\lra{\lambda(p,F^{(m)}),p|_{F^{(m)}}}\right) \in 
\sDist(f|_{F^{(1)}})\star \dots \star \sDist(f|_{F^{(m)}})$.  
Now, Given $q \in \sDist(f|_{F^{(j)}})$ and let us denote 
$\alpha^{(j)}_\ast(q)$ by $q'$, then $\lambda(q',F^{(i)})=\sum_{y\in F^{(i)}_n}q'_x(y)=0$ for every $i\neq j$. Therefore, we have  
$$
\begin{aligned}
\eta\left(\alpha^{(j)}_\ast(q)\right)&= \eta(q') 
=
(\lra{0,q'|_{F^{(1)}}},\dots,\lra{\lambda(q',F^{(j)})
,q'|_{F^{(j)}}}\dots,\lra{0
,q'|_{F^{(m)}}})\\
&=(\lra{0,\bullet},\dots,\lra{1
,q}\dots,\lra{0
,\bullet})  =\kappa_j(q)
\end{aligned}
$$
where $\kappa_j$ is the structure map from $\sDist(f|_{F^{(j)}})$ to $\sDist(f|_{F^{(1)}})\star \dots \star \sDist(f|_{F^{(m)}})$. We conclude that $\eta \circ (\alpha^{(1)}_\ast \star 
\dots \star \alpha^{(m)}_\ast) =\Id$. For the converse composition given $p \in \sDist(f)$, we have the following:
\begin{equation}\label{eq:alphastar}
\begin{aligned}
(\alpha^{(1)}_\ast \star 
\dots \star \alpha^{(m)}_\ast)\left(\eta(p)\right)&=(\alpha^{(1)}_\ast \star 
\dots \star \alpha^{(m)}_\ast)\left(\lra{\lambda(p,F^{(1)}),p|_{F^{(1)}}},\dots,\lra{\lambda(p,F^{(m)}),p|_{F^{(m)}}}\right)\\
&=\lambda(p,F^{(1)})\alpha^{(1)}_\ast(p|_{F^{(1)}})+\dots+\lambda(p,F^{(m)})\alpha^{(m)}_\ast(p|_{F^{(m)}})
\end{aligned}
\end{equation}
Given $x \in X_n$ and $y\in E_n$, then there is $j$ such that $y \in F^{(j)}_n$. As a result 
$\alpha^{(i)}_\ast(p|_{F^{(i)}})_x(y)=0$ for every $i \neq j$. Therefore, by Equation (\ref{eq:alphastar}) we obtain that
$$
\begin{aligned}
(\alpha^{(1)}_\ast \star 
\dots \star \alpha^{(m)}_\ast)\left(\eta(p)\right)_x(y)&=\lambda(p,F^{(j)})\alpha^{(j)}_\ast(p|_{F^{(j)}})_x(y) \\
&=\lambda(p,F^{(j)})(p|_{F^{(j)}})_x(y) \\
&=\lambda(p,F^{(j)})\frac{p_x(y)}{\lambda(p,F^{(j)})}=p_x(y)
\end{aligned}
$$ 
\end{proof}

\begin{defn}\label{def:lambdaXY}
Given a simplicial scenario $(X,Y)$ such that $X$ is connected. Let $Z$ be a summand of $Y$ and let $p$ be a simplicial distribution on $(X,Y)$. We define 
\begin{itemize}
    \item The scalar $\lambda(p,Z)$ to be $\sum_{y \in Z_n} p_x(y)$ for some simplex $x \in X_n$. 
    \item The simplicial distribution $p|_{Z} \in \sDist(X,Z)$ by setting $(p|_{Z})_x(y)=\frac{p_x(y)}{\lambda(p,Z)}$ for $x \in X_n$ and $y \in Z_n$.
\end{itemize}
\end{defn}
Under the assumptions of Definition \ref{def:lambdaXY} the space $X \times Z$ is a summand of $X \times Y$. So if $p' \in \sDist(f_{X,Y})$ is the corresponding simplicial distribution for $p$ as in Proposition \ref{pro:XYfXY}, then we have 
$$
\lambda(p,Z)=\lambda(p',X\times Z)  \;\; \text{and} \;\; (p|_{Z})'=p'|_{X\times Z}
$$
as in Definition \ref{def:lambdpF}.
\begin{corollary}\label{cor:ConneDecomXY}
Given a simplicial scenario $(X,Y)$ where $X$ is a connected simplicial set. Suppose that $Y=\sqcup_{i=1}^m Y^{(i)}$ and let $\alpha^{(j)}\colon  Y^{(j)} \hookrightarrow Y$ be the inclusion maps. Then the map 
$$
\eta_{Y^{(1)},\dots,Y^{(m)}}\colon \sDist(X,Y) \to \sDist(X,Y^{(1)})\star \dots \star \sDist(X,Y^{(m)})
$$
$$
\;\;\;\;\;\; \;\;\;\;\; \;\;\;\;\;\;\; p \mapsto \left(\lra{\lambda(p,Y^{(1)}),p|_{Y^{(1)}}},\dots,\lra{\lambda(p,Y^{(m)}),p|_{Y^{(m)}}}\right)
$$
is the inverse of the map $\alpha^{(1)}_\ast \star 
\dots \star \alpha^{(m)}_\ast$ in $\catConv$.
\end{corollary}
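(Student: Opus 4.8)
The plan is to reduce this statement to its bundle-scenario analogue, Proposition \ref{pro:ConneDecom}, via the identification $\sDist(X,Y)\cong\sDist(f_{X,Y})$ of Proposition \ref{pro:XYfXY}. First I would observe that the decomposition $Y=\sqcup_{i=1}^m Y^{(i)}$ induces a decomposition of the event space $X\times Y=\sqcup_{i=1}^m (X\times Y^{(i)})$, so that each $F^{(i)}:=X\times Y^{(i)}$ is a summand of $X\times Y$, and the restricted bundle $f_{X,Y}|_{X\times Y^{(i)}}$ is exactly the projection $f_{X,Y^{(i)}}\colon X\times Y^{(i)}\to X$. Applying Proposition \ref{pro:XYfXY} to each factor gives $\sDist(f_{X,Y}|_{F^{(i)}})\cong \sDist(X,Y^{(i)})$, so the target of $\eta_{F^{(1)},\dots,F^{(m)}}$ from Proposition \ref{pro:ConneDecom} is identified with $\sDist(X,Y^{(1)})\star\dots\star\sDist(X,Y^{(m)})$, matching the target in the corollary.

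Next I would check that under these identifications the two maps agree. Writing $p'\in\sDist(f_{X,Y})$ for the distribution corresponding to $p$, the remarks following Definition \ref{def:lambdaXY} give $\lambda(p,Y^{(i)})=\lambda(p',X\times Y^{(i)})$ and $(p|_{Y^{(i)}})'=p'|_{X\times Y^{(i)}}$, which is precisely the assertion that $\eta_{Y^{(1)},\dots,Y^{(m)}}$ is carried to $\eta_{F^{(1)},\dots,F^{(m)}}$ by the convex isomorphisms of Proposition \ref{pro:XYfXY}. It then remains to match the maps being inverted: the outcome-space inclusion $\alpha^{(j)}\colon Y^{(j)}\hookrightarrow Y$ induces $\alpha^{(j)}_\ast$ on simplicial distributions, and under Proposition \ref{pro:XYfXY} this corresponds to the pushforward along the bundle morphism $\Id_X\times\alpha^{(j)}\colon X\times Y^{(j)}\hookrightarrow X\times Y$, i.e.\ to the inclusion of the summand $F^{(j)}$. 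Hence $\alpha^{(1)}_\ast\star\dots\star\alpha^{(m)}_\ast$ corresponds exactly to the map inverted in Proposition \ref{pro:ConneDecom}, and the corollary follows.

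The only point requiring care, rather than a genuine obstacle, is the compatibility of the isomorphism of Proposition \ref{pro:XYfXY} with the pushforward of Definition \ref{def:Moralphaa}: one must verify that identifying $p$ with $p'$ intertwines the outcome-space map $\alpha^{(j)}_\ast$ with the event-space pushforward along $\Id_X\times\alpha^{(j)}$. This is a direct unwinding of the defining formulas $p'_x(x,y)=p_x(y)$ and $\alpha_\ast(p)=D(\alpha)\circ p$, and once it is in place the conclusion is immediate from Proposition \ref{pro:ConneDecom}.
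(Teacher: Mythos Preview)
Your proposal is correct and matches the paper's intended argument. The paper states the corollary without an explicit proof, relying on the remarks immediately following Definition~\ref{def:lambdaXY}, which set up precisely the identification $\lambda(p,Y^{(i)})=\lambda(p',X\times Y^{(i)})$ and $(p|_{Y^{(i)}})'=p'|_{X\times Y^{(i)}}$ that you invoke to transport Proposition~\ref{pro:ConneDecom} across the isomorphism of Proposition~\ref{pro:XYfXY}.
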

Corollary \ref{cor:ConneDecomXY} (as well Proposition \ref{pro:ConneDecom}) is not correct if $X$ is not connected. 
For example, if $X=\Delta^0 \sqcup \Delta^0$, where $\Delta^0$ is the standard $0$-simplex (the simplicial set that generated by one vertex), then we have
$$
\begin{aligned}
\sDist(\Delta^0 \sqcup \Delta^0,\Delta_{\zz_m} \sqcup \Delta_{\zz_m})&=
\catsSet(\Delta^0 \sqcup \Delta^0,D(\Delta_{\zz_m} \sqcup \Delta_{\zz_m})) \\ 
&\cong\catsSet(\Delta^0 ,D(\Delta_{\zz_m}\sqcup \Delta_{\zz_m})) \times \catsSet(\Delta^0 ,D(\Delta_{\zz_m}\sqcup \Delta_{\zz_m})) \\
&\cong D(\zz_m \sqcup \zz_m) \times D(\zz_m \sqcup \zz_m) \\
& \cong \left(D(\zz_m) \star D(\zz_m) \right) \times \left(D(\zz_m) \star D(\zz_m) \right)
\end{aligned}
$$
Meanwhile, 
$$
\begin{aligned}
\sDist(&\Delta^0 \sqcup \Delta^0,\Delta_{\zz_m}) \star \sDist(\Delta^0 \sqcup \Delta^0,\Delta_{\zz_m})
=\catsSet(\Delta^0 \sqcup \Delta^0,D(\Delta_{\zz_m})) \star \catsSet(\Delta^0 \sqcup \Delta^0,D(\Delta_{\zz_m}))\\
&\cong \left(\catsSet(\Delta^0 ,D(\Delta_{\zz_m}))\times \catsSet(\Delta^0 ,D(\Delta_{\zz_m})) \right) \star \left(\catsSet(\Delta^0 ,D(\Delta_{\zz_m}))\times \catsSet(\Delta^0 ,D(\Delta_{\zz_m})) \right)\\
& \cong \left(D(\zz_m)\times D(\zz_m)\right) \star \left(D(\zz_m)\times D(\zz_m)\right)
\end{aligned}
$$
\subsection{Vertices and contextuality with event space of disjoint union}
The decomposition described in Proposition \ref{pro:ConneDecom} is a powerful tool for simplifying the analysis of simplicial distributions. Here, we demonstrate how this decomposition aids in characterizing vertices and analyzing contextuality.
\begin{prop}\label{pro:vertexalphaq}
Given a bundle scenario $f\colon E \to X$ where $X$ is connected and 
$E=\sqcup_{i=1}^m F^{(i)}$. Let $\alpha^{(j)}\colon  F^{(j)} \hookrightarrow E$ be the inclusion maps. A simplicial 
distribution $p$ is a vertex in $\sDist(f)$ if and only if there exists $1\leq j \leq m$ and a vertex $q$ in 
$\sDist(f|_{F^{(j)}})$ such that $p=\alpha^{(j)}_\ast(q)$.
\end{prop}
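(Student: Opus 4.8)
The plan is to reduce everything to the two facts already established: the convex isomorphism $\eta$ of Proposition \ref{pro:ConneDecom} and the description of the vertices of a join in Proposition \ref{pro:vertexAstar}. First I would record the elementary observation that an isomorphism in $\catConv$ both preserves and reflects vertices: if $\phi\colon A \to B$ is a convex isomorphism and $a = tb+(1-t)c$ is a nontrivial convex decomposition (so $b \neq c$ and $0<t<1$), then $\phi$ and $\phi^{-1}$ transport such decompositions bijectively, so $a$ is a vertex of $A$ if and only if $\phi(a)$ is a vertex of $B$. Alternatively this follows from the fact that an injective convex map reflects vertices, \cite[Proposition 5.15]{ConCat}, applied to both $\phi$ and $\phi^{-1}$.

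Next I would apply this to the map $\eta=\eta_{F^{(1)},\dots,F^{(m)}}\colon \sDist(f) \to \sDist(f|_{F^{(1)}})\star \dots \star \sDist(f|_{F^{(m)}})$, which is an isomorphism in $\catConv$ by Proposition \ref{pro:ConneDecom}. Thus $p$ is a vertex of $\sDist(f)$ if and only if $\eta(p)$ is a vertex of the join on the right. By Proposition \ref{pro:vertexAstar}, the vertices of this join are exactly the elements $\kappa_j(q)$ where $q$ is a vertex of $\sDist(f|_{F^{(j)}})$ for some $1 \leq j \leq m$. Hence $p$ is a vertex if and only if $\eta(p)=\kappa_j(q)$ for some such $j$ and $q$.

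Finally I would transport this back through $\eta^{-1}=\alpha^{(1)}_\ast \star \dots \star \alpha^{(m)}_\ast$. The condition $\eta(p)=\kappa_j(q)$ is equivalent to $p=\eta^{-1}(\kappa_j(q))$, and the computation already carried out inside the proof of Proposition \ref{pro:ConneDecom}, namely $\eta(\alpha^{(j)}_\ast(q))=\kappa_j(q)$, shows precisely that $\eta^{-1}\circ \kappa_j = \alpha^{(j)}_\ast$. Therefore $p=\alpha^{(j)}_\ast(q)$, which is exactly the claimed characterization.

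There is no serious obstacle here: the proposition is essentially a formal consequence of the coproduct (join) structure together with the connectedness-driven isomorphism $\eta$. The one point that deserves care is the identification $\eta^{-1}\circ \kappa_j = \alpha^{(j)}_\ast$, but this is not new; it is exactly the verification $\eta(\alpha^{(j)}_\ast(q))=\kappa_j(q)$ performed in the proof of Proposition \ref{pro:ConneDecom}, so I would simply invoke it rather than repeat it.
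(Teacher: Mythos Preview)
Your proposal is correct and follows essentially the same route as the paper: use the convex isomorphism $\eta$ of Proposition~\ref{pro:ConneDecom} to transport the vertex question to the join, invoke Proposition~\ref{pro:vertexAstar} there, and then translate $\eta(p)=\kappa_j(q)$ back as $p=\alpha^{(j)}_\ast(q)$. The only cosmetic difference is that the paper writes the last step as $(\alpha^{(1)}_\ast\star\dots\star\alpha^{(m)}_\ast)(\kappa_j(q))=\alpha^{(j)}_\ast(q)$ directly, whereas you phrase it via $\eta^{-1}\circ\kappa_j=\alpha^{(j)}_\ast$; these are the same observation.
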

\begin{proof}
By Proposition \ref{pro:ConneDecom} the simplicial distribution $p$ is a vertex in $\sDist(f)$ if and only if $\eta(p)$ is a vertex in 
$\sDist(f|_{F^{(1)}})\star \dots \star \sDist(f|_{F^{(m)}})$. According to Proposition 
\ref{pro:vertexAstar} the latter holds if and only there is $1 \leq j \leq m$ and a vertex $q$ in $\sDist(f|_{F^{(j)}})$ such that
$\eta(p)=\kappa_j(q)$. In this case
$$
p=(\alpha^{(1)}_\ast \star 
\dots \star \alpha^{(m)}_\ast )\left(\eta(p)\right)=(\alpha^{(1)}_\ast \star 
\dots \star \alpha^{(m)}_\ast)\left(\kappa_j(q)\right)= \alpha^{(j)}_\ast(q)
$$
\end{proof}

\begin{corollary}\label{cor:conn2}
Given a simplicial scenario $(X,Y)$ such that $X$ is connected and $Y=\sqcup_{i=1}^m Y^{(i)}$. 
Let $\alpha^{(j)}\colon  F^{(j)} \hookrightarrow E$ be the inclusion maps. A simplicial distribution $p$
is a vertex in $\sDist(X,Y)$ if and only if there exists $1\leq j \leq m$ and a vertex $q$ in $\sDist(X,Y^{(j)})$
such that $\alpha^{(j)}_{\ast}(q)= p$.  
\end{corollary}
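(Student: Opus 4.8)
The plan is to deduce this corollary from its bundle-scenario counterpart, Proposition \ref{pro:vertexalphaq}, by transporting everything across the isomorphism $\sDist(X,Y) \cong \sDist(f_{X,Y})$ supplied by Proposition \ref{pro:XYfXY}. First I would record that the decomposition $Y = \sqcup_{i=1}^m Y^{(i)}$ induces a decomposition of the event space of the projection bundle $f_{X,Y}\colon X\times Y\to X$, namely
$$
X\times Y = \bigsqcup_{i=1}^m \bigl(X\times Y^{(i)}\bigr),
$$
so that, setting $E = X\times Y$ and $F^{(i)} = X\times Y^{(i)}$, the hypotheses of Proposition \ref{pro:vertexalphaq} are met: $X$ is connected and $E = \sqcup_{i=1}^m F^{(i)}$. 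Moreover the restricted bundle $f_{X,Y}|_{F^{(j)}}$ is precisely the projection $f_{X,Y^{(j)}}\colon X\times Y^{(j)}\to X$, so that $\sDist(f_{X,Y}|_{F^{(j)}}) = \sDist(f_{X,Y^{(j)}})$.

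Next I would invoke that the identification of Proposition \ref{pro:XYfXY} is an isomorphism in $\catConv$, hence preserves vertices, since being a vertex (Definition \ref{def:vert}) is a purely convex-geometric condition transported by any convex isomorphism. Thus $p\in\sDist(X,Y)$ is a vertex if and only if the corresponding $p'\in\sDist(f_{X,Y})$ is a vertex. By Proposition \ref{pro:vertexalphaq}, the latter holds exactly when there are an index $1\le j\le m$ and a vertex $q'$ in $\sDist(f_{X,Y^{(j)}})$ with $p' = \alpha^{(j)}_\ast(q')$, where $\alpha^{(j)}\colon F^{(j)}\hookrightarrow E$ is the inclusion.

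Finally I would translate this back to the scenario picture. Applying Proposition \ref{pro:XYfXY} to the scenario $(X,Y^{(j)})$ identifies $\sDist(X,Y^{(j)})\cong\sDist(f_{X,Y^{(j)}})$ and turns $q'$ into a vertex $q\in\sDist(X,Y^{(j)})$. Under these identifications the inclusion $\alpha^{(j)}\colon X\times Y^{(j)}\hookrightarrow X\times Y$ is $\Id_X\times\iota^{(j)}$ for the outcome-space inclusion $\iota^{(j)}\colon Y^{(j)}\hookrightarrow Y$, and the push-forward along the bundle morphism agrees with the push-forward along the map of outcome spaces (both instances of Definition \ref{def:Moralphaa}). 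Hence $p' = \alpha^{(j)}_\ast(q')$ corresponds to $p = \alpha^{(j)}_\ast(q)$, giving the claimed equivalence. The one point I would make explicit is this last compatibility of the two push-forwards across the isomorphism of Proposition \ref{pro:XYfXY}; everything else is immediate once the event-space decomposition is in place. I expect no genuine obstacle, as the statement is a direct specialization of the bundle result.
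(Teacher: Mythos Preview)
Your proposal is correct and follows exactly the route the paper intends: the corollary is stated immediately after Proposition~\ref{pro:vertexalphaq} with no separate proof, and the implicit argument is precisely the transport along the identification $\sDist(X,Y)\cong\sDist(f_{X,Y})$ of Proposition~\ref{pro:XYfXY} that you spell out. An equally short alternative, equally faithful to the paper, would be to mimic the proof of Proposition~\ref{pro:vertexalphaq} directly using Corollary~\ref{cor:ConneDecomXY} and Proposition~\ref{pro:vertexAstar}; but what you wrote is fine.
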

%
%
Remember that for a bundle scenario $f\colon E \to X$ we write $\sseect(f)$ for the set of sections of $f$. Let $\alpha\colon f \to g$ be a morphism between bundle 
scenarios, if $\varphi\in \sseect(f)$ then 
$\alpha \circ \varphi \in \sseect(g)$. The induced map from $\sseect(f)$ to $\sseect(g)$ will also be denoted by $\alpha_{\ast}$.   
\begin{lemma}\label{lem:sectDecom}
Given a bundle scenario $f\colon E \to X$ where $X$ is connected. Suppose that $E=\sqcup_{i=1}^m F^{(i)}$ and let $\alpha^{(j)}\colon  F^{(j)} \hookrightarrow E$ be the inclusion maps. Then the map 
$$
D(\alpha^{(1)}_\ast) \star 
\dots \star D(\alpha^{(m)}_\ast)\colon  
D(\sseect(f|_{F^{(1)}}))\star \dots \star D\left(\sseect(f|_{F^{(m)}})\right) \to D\left(\sseect(f)\right)
$$ 
is invertible. 
\end{lemma}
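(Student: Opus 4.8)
The plan is to reduce the statement to the already-established isomorphism of Proposition \ref{pro:LX1X2} by first proving a purely set-theoretic decomposition of the section set $\sseect(f)$.

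The key step is to show that, because $X$ is connected, every section $\varphi \in \sseect(f)$ has image contained in a single summand $F^{(j)}$. First I would observe that a $1$-simplex of $E=\sqcup_i F^{(i)}$ lies in exactly one $F^{(k)}_1$, and both of its faces then lie in the same $F^{(k)}_0$, since each $F^{(i)}$ is a simplicial subset. For a section $\varphi$ the compatibility $d_\ell(\varphi_\sigma)=\varphi_{d_\ell(\sigma)}$ therefore forces $\varphi_{d_0(\sigma)}$ and $\varphi_{d_1(\sigma)}$ to lie in the same summand for every $\sigma \in X_1$. Walking along a line (Definition \ref{def:lineee}) and using connectivity, the summand containing $\varphi_v$ is independent of the vertex $v \in X_0$; call it $F^{(j)}$. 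Finally, for an arbitrary simplex $x \in X_n$ the vertex $d_0^n(x)$ satisfies $\varphi_{d_0^n(x)}=d_0^n(\varphi_x)$, and since faces preserve the summand, $\varphi_x$ lies in $F^{(j)}_n$ as well. Hence $\varphi$ factors as $\alpha^{(j)}\circ \psi$ for a unique $\psi \in \sseect(f|_{F^{(j)}})$.

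This yields a bijection of sets
$$
\sseect(f) \;\cong\; \bigsqcup_{j=1}^m \sseect(f|_{F^{(j)}}),
$$
where the $j$-th inclusion into $\sseect(f)$ is exactly the injection $\alpha^{(j)}_\ast$ (injective because $\alpha^{(j)}$ is), and the images are pairwise disjoint since the $F^{(i)}$ are disjoint and $X$ is nonempty. I would then apply Proposition \ref{pro:LX1X2} with $X_i=\sseect(f|_{F^{(i)}})$: the natural isomorphism $D(\sqcup_i \sseect(f|_{F^{(i)}})) \cong D(\sseect(f|_{F^{(1)}}))\star \dots \star D(\sseect(f|_{F^{(m)}}))$ is induced by the inclusions $D(\sseect(f|_{F^{(i)}})) \hookrightarrow D(\sqcup_j \sseect(f|_{F^{(j)}}))$, and under the bijection above these inclusions become precisely $D(\alpha^{(i)}_\ast)$. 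Since the join is the coproduct in $\catConv$, the copairing $D(\alpha^{(1)}_\ast) \star \dots \star D(\alpha^{(m)}_\ast)$ coincides with the inverse of the isomorphism $\eta$ of Proposition \ref{pro:LX1X2}, and is therefore invertible.

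I expect the only real content to be the connectivity argument in the first step; everything afterward is a formal consequence of Proposition \ref{pro:LX1X2} and the universal property of the join. A minor point to handle cleanly is the possibility that some $\sseect(f|_{F^{(j)}})$ is empty (when $f|_{F^{(j)}}$ admits no section): then the corresponding factor contributes only the basepoint $\lra{0,\bullet}$ in the $(-)_{\bullet}$ construction, which is consistent with the disjoint-union bijection and does not affect the argument.
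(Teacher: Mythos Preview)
Your proposal is correct and follows essentially the same approach as the paper: first use connectivity of $X$ to show that $\alpha^{(1)}_\ast \sqcup \dots \sqcup \alpha^{(m)}_\ast$ gives a bijection $\sqcup_j \sseect(f|_{F^{(j)}}) \to \sseect(f)$, then invoke Proposition \ref{pro:LX1X2} and identify the induced map with $D(\alpha^{(1)}_\ast) \star \dots \star D(\alpha^{(m)}_\ast)$. The paper merely asserts the first step in one line (``Since the space $X$ is connected, the image of $\varphi$ lies in one of the components''), whereas you spell out the line-walking argument in detail; otherwise the two proofs are the same.
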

\begin{proof}
Since the space $X$ is connected, the image of $\varphi \in \sseect(f)$ lies in one of the components 
$F^{(j)}$ of $E$. In other words, there is a unique 
$\psi \in \sseect(f|_{F^{(j)}})$ for some $1\leq j \leq m$ such that $\alpha^{(j)}_\ast(\psi)=\varphi$. We conclude that the map
$$
\alpha^{(1)}_\ast \sqcup \dots \sqcup \alpha^{(m)}_\ast \colon   \sseect(f|_{F^{(1)}}) \sqcup \dots \sqcup \sseect(f|_{F^{(m)}}) \to \sseect(f)
$$
is an isomorphism. On the other hand, by Proposition \ref{pro:LX1X2} we have an isomorphism 
\begin{equation}\label{eq:isomor1}
D(\sseect(f|_{F^{(1)}})) \star \dots \star D(\sseect(f|_{F^{(m)}})) 
\stk{\cong}  D\left(\sseect(f|_{F^{(1)}}) \sqcup  \dots \sqcup \sseect(f|_{F^{(m)}})\right)
\end{equation}
that induced by the inclusions $D\left(\sseect(f|_{F^{(i)}})\right) \xhookrightarrow{} D\left(\sseect(f|_{F^{(1)}}) \sqcup  \dots \sqcup \sseect(f|_{F^{(m)}})\right)$. Finally, the composition of the isomorphism in (\ref{eq:isomor1}) with the isomorphism
$D(\alpha^{(1)}_\ast \sqcup \dots \sqcup \alpha^{(m)}_\ast)$ is equal to the map
$D(\alpha^{(1)}_\ast) \star 
\dots \star D(\alpha^{(m)}_\ast)$.
\end{proof}
Now, we give the main result in this section.
\begin{prop}\label{theoremprop:Thm1}
Given a bundle scenario $f\colon E \to X$ where $X$ is connected. If $E=\sqcup_{i=1}^m F^{(i)}$, then the following diagram commutes
$$
\begin{tikzcd}[column sep=huge,row sep=large]
D\left(\sseect(f|_{F^{(1)}})\right)\star \dots \star D\left(\sseect(f|_{F^{(m)}})\right)
\arrow[d,"\cong"',"D(\alpha^{(1)}_\ast) \star 
\dots \star D(\alpha^{(m)}_\ast)"] \arrow[rr,"\Theta_{f|_{F^{(1)}}} \star \dots \star \Theta_{f|_{F^{(m)}}}"] && \sDist(f|_{F^{(1)}})\star \dots \star \sDist(f|_{F^{(m)}})\arrow[d,"\cong"',"\alpha^{(1)}_\ast \star 
\dots \star \alpha^{(m)}_\ast"]
\\
D(\sseect(f))
\arrow[rr,"\Theta_f"']
&&
\sDist(f)
\end{tikzcd}
$$
See Proposition \ref{pro:ConneDecom}.
\end{prop}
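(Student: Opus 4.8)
The plan is to reduce commutativity of the square to a check on generators. By Lemma~\ref{lem:sectDecom} together with Proposition~\ref{pro:LX1X2}, the top-left object $D(\sseect(f|_{F^{(1)}}))\star\dots\star D(\sseect(f|_{F^{(m)}}))$ is the free convex set on the disjoint union $\sseect(f|_{F^{(1)}})\sqcup\dots\sqcup\sseect(f|_{F^{(m)}})$. Since the join is the coproduct in $\catConv$ and each summand is free, a convex map out of this object is determined by its values on the generating delta distributions $\kappa_j(\delta^{\varphi})$, where $\varphi$ ranges over $\sseect(f|_{F^{(j)}})$ and $1\le j\le m$. Hence it suffices to prove that the two composites around the square agree on each such $\kappa_j(\delta^{\varphi})$, and then conclude by the universal property of the coproduct.

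First I would evaluate the left-then-bottom composite. From the construction of the left vertical map in the proof of Lemma~\ref{lem:sectDecom}, precomposition with the structure map $\kappa_j$ collapses it to $D(\alpha^{(j)}_\ast)$; applied to $\delta^{\varphi}$ this is the delta distribution at the section $\alpha^{(j)}\circ\varphi\in\sseect(f)$, since the push-forward of a delta distribution along a map of sets is the delta distribution at the image. Applying $\Theta_f$ and using its description as the transpose of $(\delta_E)_\ast$ from Proposition~\ref{pro:Thetatrans} (so that $\Theta_f$ carries the delta distribution at a section $\psi$ to the deterministic distribution $\delta^{\psi}$), the composite returns the deterministic distribution $\delta^{\alpha^{(j)}\circ\varphi}\in\sDist(f)$.

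Next I would evaluate the top-then-right composite on the same generator. Functoriality of the join gives that $\Theta_{f|_{F^{(1)}}}\star\dots\star\Theta_{f|_{F^{(m)}}}$ precomposed with $\kappa_j$ equals the corresponding structure map of $\sDist(f|_{F^{(1)}})\star\dots\star\sDist(f|_{F^{(m)}})$ precomposed with $\Theta_{f|_{F^{(j)}}}$, and $\Theta_{f|_{F^{(j)}}}(\delta^{\varphi})=\delta^{\varphi}$ is the deterministic distribution in $\sDist(f|_{F^{(j)}})$. The right vertical map, being inverse to $\eta$, satisfies $(\alpha^{(1)}_\ast\star\dots\star\alpha^{(m)}_\ast)\circ\kappa_j=\alpha^{(j)}_\ast$ by the identity $\eta\circ\alpha^{(j)}_\ast=\kappa_j$ established in the proof of Proposition~\ref{pro:ConneDecom}, so the composite produces $\alpha^{(j)}_\ast(\delta^{\varphi})=D(\alpha^{(j)})\circ\delta^{\varphi}$. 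By naturality of the unit of the distribution monad, $D(\alpha^{(j)})\circ\delta_{F^{(j)}}=\delta_E\circ\alpha^{(j)}$, this equals the deterministic distribution $\delta^{\alpha^{(j)}\circ\varphi}$, agreeing with the first composite. Since the two maps agree on all generators, they coincide and the square commutes.

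The difficulty here is organizational rather than conceptual: one must track the several distinct structure maps $\kappa_j$ (on section sets, on their distributions, and on the distribution polytopes) and confirm that the vertical maps, defined through Lemma~\ref{lem:sectDecom} and Proposition~\ref{pro:ConneDecom}, restrict along $\kappa_j$ to $D(\alpha^{(j)}_\ast)$ and $\alpha^{(j)}_\ast$ respectively; one must also keep the two roles of $\delta^{(-)}$ separate, namely a delta distribution on a set of sections versus a deterministic simplicial distribution. At bottom the statement is the naturality of $\Theta$ with respect to the bundle morphisms $\alpha^{(j)}\colon f|_{F^{(j)}}\to f$, which rests only on naturality of the monad unit and the adjunction defining $\Theta$; connectedness of $X$ enters solely through Lemma~\ref{lem:sectDecom}, ensuring the left vertical map is an isomorphism and that every section of $f$ factors through a single summand.
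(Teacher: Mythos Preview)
Your proof is correct and takes essentially the same approach as the paper. The paper packages the same computation more categorically: it first records the commutative square
\[
\begin{tikzcd}
\sseect(f|_{F^{(i)}}) \arrow[r,"(\delta_{F^{(i)}})_\ast"] \arrow[d,"\alpha^{(i)}_\ast"'] & \sDist(f|_{F^{(i)}}) \arrow[d,"\alpha^{(i)}_\ast"] \\
\sseect(f) \arrow[r,"(\delta_E)_\ast"'] & \sDist(f)
\end{tikzcd}
\]
coming from naturality of the unit $\delta$, then transposes it under the free--forgetful adjunction (Proposition~\ref{pro:Thetatrans}) to get the square with the $\Theta$'s, and finally assembles these over $i=1,\dots,m$ using Proposition~\ref{pro:ConneDecom} and Lemma~\ref{lem:sectDecom}. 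Your argument unwinds exactly this by evaluating on the generators $\kappa_j(\delta^{\varphi})$, which is precisely what the transpose step encodes; the key ingredient in both is the naturality identity $D(\alpha^{(j)})\circ\delta_{F^{(j)}}=\delta_E\circ\alpha^{(j)}$.
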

\begin{proof}
Given $\varphi \in \sseect(f|_{F^{(i)}})$. Because of the naturality of $\delta$ (the unit of the distribution monad) we get that  
$
D(\alpha^{(i)}) \circ \delta_{F^{(i)}}\circ \varphi = \delta_E \circ \alpha^{(i)}\circ \varphi 
$. Therefore, the following diagram commutes
\begin{equation}\label{dia:deltaalpha}
\begin{tikzcd}[column sep=huge,row sep=large]
\sseect(f|_{F^{(i)}}) \arrow[r,hook,"\left(\delta_{F^{(i)}}\right)_\ast"]
\arrow[d,hook,"\alpha^{(i)}_\ast"]
 & \sDist(f|_{F^{(i)}})  \arrow[d,hook,"\alpha^{(i)}_\ast"]  \\
\sseect(f)   \arrow[r,hook,"\left(\delta_{E}\right)_\ast"] &  \sDist(f)   
\end{tikzcd}
\end{equation}
Using Proposition \ref{pro:Thetatrans} the following commutative diagram induced by Diagram (\ref{dia:deltaalpha}) 
\begin{equation}\label{dia:Thetaalpha}
\begin{tikzcd}[column sep=huge,row sep=large]
D\left(\sseect(f|_{F^{(i)}})\right) \arrow[r,"\Theta_{f|_{F^{(i)}}}"]
\arrow[d,hook,"D(\alpha^{(i)}_\ast)"]
 & \sDist(f|_{F^{(i)}})  \arrow[d,hook,"\alpha^{(i)}_\ast"]  \\
D\left(\sseect(f)\right)   \arrow[r,"\Theta_f"] &  \sDist(f)   
\end{tikzcd}
\end{equation}
Finally, we get the result by Diagram (\ref{dia:Thetaalpha}), Proposition \ref{pro:ConneDecom}, and Lemma \ref{lem:sectDecom}.
\end{proof}
\begin{corollary}\label{cor:conn1}
Given a bundle scenario $f\colon E \to X$ where $X$ is connected. If $E=\sqcup_{i=1}^m F^{(i)}$, then 
a simplicial distribution $p \in \sDist(f)$ is noncontextual if and only if  $p|_{F^{(j)}}$ is noncontextual for every 
$1\leq j \leq m$ (see Definition \ref{def:lambdpF}).
\end{corollary}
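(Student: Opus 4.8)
The plan is to deduce the statement directly from the commutative square of Proposition \ref{theoremprop:Thm1}, whose two vertical maps are isomorphisms. Recall that, by definition, $p \in \sDist(f)$ is noncontextual precisely when $p$ lies in $\Image(\Theta_f)$. The strategy is therefore to transport $\Image(\Theta_f)$ across the square to the join $\sDist(f|_{F^{(1)}}) \star \dots \star \sDist(f|_{F^{(m)}})$ and to read off there the condition in terms of the components $p|_{F^{(j)}}$.

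First I would use commutativity together with the fact that the left vertical map $D(\alpha^{(1)}_\ast) \star \dots \star D(\alpha^{(m)}_\ast)$ is an isomorphism. Since this map is in particular surjective, every element of $D(\sseect(f))$ is the image of some element of the top-left corner, so
$$
\Image(\Theta_f) = \left(\alpha^{(1)}_\ast \star \dots \star \alpha^{(m)}_\ast\right)\left(\Image\left(\Theta_{f|_{F^{(1)}}} \star \dots \star \Theta_{f|_{F^{(m)}}}\right)\right).
$$
Because the right vertical map is an isomorphism with inverse $\eta = \eta_{F^{(1)},\dots,F^{(m)}}$ (Proposition \ref{pro:ConneDecom}), it follows that $p$ is noncontextual if and only if
$$
\eta(p) = \left(\lra{\lambda(p,F^{(1)}),p|_{F^{(1)}}},\dots,\lra{\lambda(p,F^{(m)}),p|_{F^{(m)}}}\right)
$$
lies in $\Image\left(\Theta_{f|_{F^{(1)}}} \star \dots \star \Theta_{f|_{F^{(m)}}}\right)$.

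It then remains to characterize the image of a join of convex maps coordinate-wise. Using the description of the join of maps through the endofunctor $(-)_\bullet$ (Definition \ref{def:bulletttt}), the map $\Theta_{f|_{F^{(1)}}} \star \dots \star \Theta_{f|_{F^{(m)}}}$ sends $(\lra{\mu_1,Q_1},\dots,\lra{\mu_m,Q_m})$ to $(\lra{\mu_1,\Theta_{f|_{F^{(1)}}}(Q_1)},\dots,\lra{\mu_m,\Theta_{f|_{F^{(m)}}}(Q_m)})$, where a slot with $\mu_j = 0$ is simply $\lra{0,\bullet}$. Hence a tuple $(\lra{\lambda_1,q_1},\dots,\lra{\lambda_m,q_m})$ lies in this image if and only if, for every $j$ with $\lambda_j \neq 0$, the component $q_j$ lies in $\Image(\Theta_{f|_{F^{(j)}}})$, while the slots with $\lambda_j = 0$ impose no condition. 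Applying this to $\eta(p)$ yields exactly that $p$ is noncontextual if and only if $p|_{F^{(j)}}$ is noncontextual for every $j$ with $\lambda(p,F^{(j)}) \neq 0$.

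I expect the coordinate-wise image characterization to be the only delicate point, and the handling of the zero-weight slots to be where care is needed: when $\lambda(p,F^{(j)}) = 0$ the component $p|_{F^{(j)}}$ is not defined (cf. the convention following Definition \ref{def:lambdpF}), so the clause \emph{for every} $1 \leq j \leq m$ in the statement is understood to range over those $j$ carrying nonzero weight, the remaining slots imposing no constraint. Everything else is a formal consequence of the square in Proposition \ref{theoremprop:Thm1} and the explicit isomorphism $\eta$ of Proposition \ref{pro:ConneDecom}.
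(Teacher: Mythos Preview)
Your proposal is correct and follows exactly the approach the paper intends: the paper's own proof is the single line ``By Proposition \ref{pro:ConneDecom} and Proposition \ref{theoremprop:Thm1}'', and you have faithfully unpacked how the commutative square with vertical isomorphisms transports $\Image(\Theta_f)$ to the coordinate-wise condition on the $p|_{F^{(j)}}$. Your remark about the zero-weight slots is also well taken and matches the paper's implicit convention.
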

\begin{proof}
By Proposition \ref{pro:ConneDecom} and Proposition \ref{theoremprop:Thm1}.    
\end{proof}
\begin{corollary}\label{cor:ThethaYj}
Given a simplicial scenario $(X,Y)$ where $X$ is connected. If $Y=\sqcup_{i=1}^m Y^{(i)}$ then
we have the following commutative diagram:
\begin{equation}\label{dia:ThetaYj}
\begin{tikzcd}[column sep=small,row sep=large]
D\left(\catsSet(X,Y^{(1)})\right)\star \dots \star D\left(\catsSet(X,Y^{(m)})\right) 
\arrow[r,"
"]  \arrow[d,"\cong"',"D(\alpha^{(1)}_\ast) \star 
\dots \star D(\alpha^{(m)}_\ast)"]
&
\sDist\left(X,Y^{(1)}\right)\star \dots \star \sDist\left(X,Y^{(m)}\right) \arrow[d,"\cong"',"\alpha^{(1)}_\ast \star 
\dots \star \alpha^{(m)}_\ast"]  \\
D\left(\catsSet(X,Y)\right) \arrow[r,"\Theta_{X,Y}"'] & \sDist(X,Y) 
\end{tikzcd}
\end{equation}
where the top map is $\Theta_{X,Y^{(1)}} \star \dots \star \Theta_{X,Y^{(m)}}$. See Corollary \ref{cor:ConneDecomXY}.

\end{corollary}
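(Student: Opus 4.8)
The plan is to deduce this corollary from its bundle counterpart, Proposition \ref{theoremprop:Thm1}, by transporting that commutative square along the identification of Proposition \ref{pro:XYfXY}. First I would apply Proposition \ref{pro:XYfXY} to the projection bundle $f_{X,Y}\colon X\times Y \to X$, which gives an isomorphism $\sDist(X,Y)\cong\sDist(f_{X,Y})$ in $\catConv$ under which the two notions of contextuality agree. The key observation is that the decomposition $Y=\sqcup_{i=1}^m Y^{(i)}$ induces a decomposition of the event space $X\times Y=\sqcup_{i=1}^m (X\times Y^{(i)})$, so that $F^{(i)}=X\times Y^{(i)}$ is a summand of $E=X\times Y$ and the restricted bundle $f_{X,Y}|_{F^{(i)}}$ is precisely the projection $f_{X,Y^{(i)}}\colon X\times Y^{(i)}\to X$. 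Thus Proposition \ref{theoremprop:Thm1} already supplies a commutative square with the correct shape; the task is only to match its corners and maps with those of Diagram (\ref{dia:ThetaYj}).

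Next I would identify each corner and each arrow of the square in Proposition \ref{theoremprop:Thm1} with its counterpart here. On the distribution side, Proposition \ref{pro:XYfXY} gives $\sDist(f_{X,Y})\cong\sDist(X,Y)$ and $\sDist(f_{X,Y^{(i)}})\cong\sDist(X,Y^{(i)})$, and these identifications are compatible with the push-forwards: the inclusion $\alpha^{(i)}\colon F^{(i)}\hookrightarrow E$ corresponds to $\Id_X\times\alpha^{(i)}$ and hence to the outcome-space inclusion $\alpha^{(i)}\colon Y^{(i)}\hookrightarrow Y$, so $\alpha^{(i)}_\ast$ on the bundle side is carried to $\alpha^{(i)}_\ast$ on $\sDist(X,Y)$. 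This makes the right-hand vertical isomorphism of Proposition \ref{theoremprop:Thm1} the one of Corollary \ref{cor:ConneDecomXY}. On the section side, a section of $f_{X,Y}$ is exactly a simplicial map $X\to Y$, so $\sseect(f_{X,Y})=\catsSet(X,Y)$ and likewise $\sseect(f_{X,Y^{(i)}})=\catsSet(X,Y^{(i)})$; applying $D$ identifies the left-hand corners and, by naturality of $D$, carries $D(\alpha^{(i)}_\ast)$ to $D(\alpha^{(i)}_\ast)$, so the left vertical isomorphism becomes the one in the statement. Finally, under these identifications $\Theta_{f_{X,Y}}$ becomes $\Theta_{X,Y}$ and each $\Theta_{f_{X,Y^{(i)}}}$ becomes $\Theta_{X,Y^{(i)}}$, since by Proposition \ref{pro:Thetatrans} both are transposes of the respective $(\delta)_\ast$ maps, which already commute with the identifications.

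With all four corners and every map matched, the commutativity of Diagram (\ref{dia:ThetaYj}) is exactly the commutativity established in Proposition \ref{theoremprop:Thm1}, and nothing further is needed. The only genuine content beyond bookkeeping is verifying that the identification of Proposition \ref{pro:XYfXY} is natural in the outcome space, i.e. that it intertwines $\alpha^{(i)}_\ast$ on both sides and carries $\Theta_{f_{X,Y}}$ to $\Theta_{X,Y}$; this is where I expect the main (though routine) effort to go, and it follows by unwinding the definitions $p'_x(x,y)={p}_x(y)$ and $\alpha_\ast(p)=D(\alpha)\circ p$. I would also record the elementary point, used implicitly, that because $X$ is connected the image of any $\varphi\in\catsSet(X,Y)$ lies in a single component $Y^{(i)}$, giving $\catsSet(X,Y)=\sqcup_{i=1}^m\catsSet(X,Y^{(i)})$ and hence, via Proposition \ref{pro:LX1X2}, the left vertical isomorphism at the level of underlying sets.
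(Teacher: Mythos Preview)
Your proposal is correct and matches the paper's implicit argument: the corollary is stated without proof as an immediate specialization of Proposition~\ref{theoremprop:Thm1} to the bundle $f_{X,Y}$ via Proposition~\ref{pro:XYfXY}, and your write-up simply spells out the routine identifications (sections with simplicial maps, $F^{(i)}=X\times Y^{(i)}$, compatibility of $\alpha^{(i)}_\ast$ and $\Theta$) that justify this passage.
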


\begin{corollary}
Given a simplicial scenario $(X,Y)$ where $X$ is connected. If $Y=\sqcup_{i=1}^m Y^{(i)}$ then 
a simplicial distribution $p \in \sDist(X,Y)$ is noncontextual if and only if  $p|_{Y^{(j)}}$ is noncontextual for every 
$1\leq j \leq m$ (see Definition \ref{def:lambdaXY}).
\end{corollary}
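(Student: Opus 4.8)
The plan is to reduce the claim to its bundle-scenario counterpart, Corollary \ref{cor:conn1}, through the identification of simplicial scenarios with bundle scenarios furnished by Proposition \ref{pro:XYfXY}. First I would pass from $(X,Y)$ to the projection bundle $f_{X,Y}\colon X\times Y\to X$, letting $p'\in\sDist(f_{X,Y})$ be the simplicial distribution corresponding to $p$; by Proposition \ref{pro:XYfXY} the distribution $p$ is noncontextual if and only if $p'$ is noncontextual.

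Next I would observe that the decomposition $Y=\sqcup_{i=1}^m Y^{(i)}$ induces a decomposition $X\times Y=\sqcup_{i=1}^m (X\times Y^{(i)})$ of the event space, so that each $F^{(i)}:=X\times Y^{(i)}$ is a summand and the restriction $f_{X,Y}|_{F^{(i)}}$ is exactly the projection bundle of the smaller scenario $(X,Y^{(i)})$. The key bookkeeping step, already recorded in the remark following Definition \ref{def:lambdaXY}, is the compatibility $(p|_{Y^{(j)}})'=p'|_{F^{(j)}}$ between the two notions of restriction (Definitions \ref{def:lambdaXY} and \ref{def:lambdpF}). Applying Proposition \ref{pro:XYfXY} once more, now to the scenario $(X,Y^{(j)})$, shows that $p|_{Y^{(j)}}$ is noncontextual if and only if its bundle counterpart $p'|_{F^{(j)}}$ is noncontextual.

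With these identifications in hand the result follows by chaining equivalences: $p$ is noncontextual iff $p'$ is noncontextual; by Corollary \ref{cor:conn1}, whose connectedness hypothesis on $X$ is exactly our hypothesis, this holds iff $p'|_{F^{(j)}}$ is noncontextual for every $j$; and by the compatibility above this is equivalent to $p|_{Y^{(j)}}$ being noncontextual for every $j$.

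I expect the only real subtlety to be bookkeeping rather than mathematical depth: one must keep the two restriction operations $p|_{Y^{(j)}}$ and $p'|_{F^{(j)}}$ carefully aligned, and handle the degenerate case $\lambda(p,Y^{(j)})=0$, in which $p|_{Y^{(j)}}$ is only defined by convention and the corresponding component of the join is the basepoint $\bullet$, carrying no noncontextuality constraint. An alternative, entirely self-contained route avoids Proposition \ref{pro:XYfXY} and instead reads the equivalence directly off the commutative square of Corollary \ref{cor:ThethaYj}: since both vertical maps there are isomorphisms and noncontextual distributions are by definition the image of $\Theta$, an element $p$ lies in $\Image(\Theta_{X,Y})$ iff $\eta_{Y^{(1)},\dots,Y^{(m)}}(p)$ lies in the image of the join map $\Theta_{X,Y^{(1)}}\star\dots\star\Theta_{X,Y^{(m)}}$, and the image of a join of maps is characterized componentwise, which is precisely the asserted equivalence.
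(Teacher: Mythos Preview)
Your proposal is correct and matches the paper's approach: the paper states this corollary without an explicit proof, treating it as immediate from the preceding results, and both of the routes you outline (reducing to Corollary \ref{cor:conn1} via Proposition \ref{pro:XYfXY}, or reading it off the commutative square in Corollary \ref{cor:ThethaYj}) are exactly the inferences the paper's arrangement invites.
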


\section{Simplicial distributions on cone scenarios}\label{sec:Conescanarios}
In this section, we apply the decomposition of simplicial distributions done in section 
\ref{sec:DecomEvent} to obtain a similar decomposition when the measurement space is a cone space. This analysis leverages the adjunction relationship between the cone functor and the d\' ecalage functor as well as several critical properties of the d\' ecalage functor. 

\subsection{The D\' ecalage functor}\label{subsec:Decla} 
In this section, we show that the d\' ecalage endfunctor on simplicial sets can be naturally restricted to an endfunctor on small categories.

\begin{defn}(\!\!\cite{Decle})
\label{def:decalage} 
{\rm
The \emph{d\' ecalage} of a simplicial set $X$ is the simplicial set $\Dec^0 (X)$ obtained by shifting the degrees of the simplices of $X$ down, i.e., $(\Dec^0 (X))_n = X_{n+1}$, and forgetting the first face and degeneracy maps. 
}
\end{defn}  
Definition \ref{def:decalage} induces a functor $\Dec^0\colon \catsSet \to \catsSet$ equipped with the natural projection map 
$
d_0\colon  \Dec^0(X) \to X
$.
\begin{rem}
In fact, the d\' ecalage of simplicial set is an augmented simplicial set
(A simplicial set with additional set $X_{-1}$ and additional face map $d \colon X_0 \to X_{-1}$ 
such that $d\circ d_0=d\circ d_1$). Anyway, we can compose the d\' ecalage functor with the forgetful functor to simplicial sets since any augmented simplicial set has an underlying unaugmented simplicial set found by forgetting $X_{-1}$.
\end{rem}
%
%

The outcome space $\Delta_{\zz_m}$ can be described in terms of the d\'ecalage. Explicitly, there is an isomorphism of simplicial sets
$
\Delta_{\zz_m} \xrightarrow{\cong} \Dec^0(N\zz_m)
$
defined in degree $n$ by sending $(a_0,a_1,\cdots,a_{n})$ to the tuple $(a_0,a_1-a_0,a_2-a_1,\dots,a_{n}-a_{n-1})$. 

\begin{defn}\label{def:DecCatccc}
For a category $\catC$, the category $\Dec^0 (\catC)$ consists of the following data:
\begin{itemize}
    \item The objects of $\Dec^0 (\catC)$ are the morphisms of $\catC$.
    \item A sequence $A\stk{f} B \stk{g} C$ in $\catC$ considered as a morphism 
    from $f$ to $g\circ f$ in $\Dec^0 (\catC)$. We denote this morphism by $(f,g)$.
   \item The composition of $(f,g)$ with $(g\circ f,h)$ defined to be 
   $(f,h\circ g)$. 
\end{itemize}
\end{defn}
The identity map in $\Dec^0(\catC)(f,f)$ is 
$(f,\Id)$. For associativity, given morphisms
$(f,g)$, $(g\circ f,h)$, and $(h\circ g \circ f,k)$. Then we have 
$$
\begin{aligned}
(h\circ g \circ f,k)\circ \left((g\circ f,h)\circ (f,g)\right)&= (h\circ g \circ f,k)\circ (f,h \circ g)\\
&=(f,k \circ h \circ g)\\
&=(g\circ f,k\circ h)\circ (f,g)\\
&=\left((h\circ g \circ f,k)\circ (g\circ f,h)\right)\circ (f,g)
\end{aligned}
$$
%
%
%
The construction in Definition \ref{def:DecCatccc} is functorial and if we restrict ourselves to small categories, then Definition \ref{def:DecCatccc} is a special case of Definition \ref{def:decalage} in the following sense:
\begin{prop}\label{pro:DecN=NDec}
Let $\catCat$ denote the category of small categories. We have a natural isomorphism from the composition $\catCat \stk{\Dec^0} \catCat \stk{N} \catsSet$ to  the composition $\catCat \stk{N} \catsSet \stk{\Dec^0} \catsSet$ given by identifying 
$$
\xymatrix@R=15pt @C=20pt{
\ar[rr]^{(f_1,f_2)} && \ar[rr]^>>>>>>>>>{(f_2 \circ f_1,f_3)} && \dots \dots\ar[rrrr]^>>>>>>>>>>>>>>>>>{(f_n \circ\dots \circ f_2\circ f_1,f_{n+1})} &&&& 
\, \in  \left(N \Dec^0(\catC)\right)_n }
$$
with
$$
\xymatrix@R=15pt @C=20pt{
\ar[r]^{f_1} & \ar[r]^{f_2} & 
\ar[r]^>>>>{f_3} & \dots \dots\ar[r]^>>>>{f_{n+1}} &
\, \in \left(\Dec^0(N\catC)\right)_n }
$$
for a small category $\catC$.     
\end{prop}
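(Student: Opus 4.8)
The plan is to exhibit, for each small category $\catC$ and each $n\ge 0$, an explicit bijection between the $n$-simplices of $N(\Dec^0\catC)$ and those of $\Dec^0(N\catC)$, then to check that this bijection is compatible with all face and degeneracy maps and natural in $\catC$. The starting observation is that, as sets, both sides are just the set of chains of $n+1$ composable morphisms in $\catC$. Indeed, by Definition \ref{def:decalage} we have $(\Dec^0(N\catC))_n=(N\catC)_{n+1}$, which is exactly the set of diagrams $A_0\xrightarrow{f_1}A_1\xrightarrow{f_2}\cdots\xrightarrow{f_{n+1}}A_{n+1}$. On the other side, an $n$-simplex of $N(\Dec^0\catC)$ is a chain of $n$ composable arrows in $\Dec^0\catC$; writing such a chain as $f_{(0)}\xrightarrow{(f_{(0)},h_1)}f_{(1)}\xrightarrow{(f_{(1)},h_2)}\cdots$, composability in $\Dec^0\catC$ forces $f_{(i)}=h_i\circ\cdots\circ h_1\circ f_{(0)}$, so the chain is completely determined by the base arrow $f_1:=f_{(0)}$ together with the increments $f_2:=h_1,\ldots,f_{n+1}:=h_n$. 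This is precisely the data $(f_1,\ldots,f_{n+1})$, and the proposition's identification is the resulting bijection $\Phi_{\catC,n}$, which is the identity on underlying data.

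Next I would verify that $\Phi$ commutes with the simplicial structure maps, which is the technical heart of the argument. Here one must track the degree shift: the face and degeneracy maps of $\Dec^0(N\catC)$ are $d_i^{\Dec^0}=d_{i+1}^{N\catC}$ and $s_j^{\Dec^0}=s_{j+1}^{N\catC}$, while on $N(\Dec^0\catC)$ they are the ordinary nerve operations. For $1\le i\le n$ this is a direct comparison: $d_{i+1}^{N\catC}$ composes the adjacent arrows $f_{i+1},f_{i+2}$ (or, for $i=n$, deletes the last object), and on the $N(\Dec^0\catC)$ side the corresponding nerve face composes the adjacent morphisms $(f_i\circ\cdots\circ f_1,f_{i+1})$ and $(f_{i+1}\circ\cdots\circ f_1,f_{i+2})$ of $\Dec^0\catC$, which by the composition rule of Definition \ref{def:DecCatccc} equals $(f_i\circ\cdots\circ f_1,f_{i+2}\circ f_{i+1})$ — i.e. exactly the increment $f_{i+2}\circ f_{i+1}$. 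The degeneracies are handled the same way: $s_{j+1}^{N\catC}$ inserts an identity arrow on the object $A_{j+1}$, matching the insertion of the identity morphism $(f_{j+1}\circ\cdots\circ f_1,\Id)$ of $\Dec^0\catC$.

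The step deserving the most care — which I expect to be the main subtlety rather than a genuine obstacle — is the face $d_0$. On the décalage side $d_0^{\Dec^0}=d_1^{N\catC}$ composes $f_1$ and $f_2$, so it deletes nothing but produces the chain with base arrow $f_2\circ f_1$ and increments $f_3,\ldots,f_{n+1}$. On the $N(\Dec^0\catC)$ side, however, $d_0$ of the nerve deletes the initial object $f_1$ of the chain in $\Dec^0\catC$, leaving the chain whose base object is $f_2\circ f_1$ and whose morphisms are $(f_2\circ f_1,f_3),\ldots$; reading off its underlying data gives the same chain $(f_2\circ f_1,f_3,\ldots,f_{n+1})$. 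Thus the two a priori different-looking operations agree under $\Phi$. Finally, naturality in $\catC$ is immediate: a functor $\catC\to\catD$ acts on both sides by applying it to all the $f_i$, and the partial composites and identities are preserved because functors respect composition and identities, so the naturality square commutes on the nose. Assembling these verifications yields the asserted natural isomorphism of functors.
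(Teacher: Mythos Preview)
Your argument is correct. The paper states Proposition~\ref{pro:DecN=NDec} without proof, so there is no proof in the paper to compare against; your direct verification---exhibiting the bijection on $n$-simplices as the identity on the underlying chain $(f_1,\ldots,f_{n+1})$ and then checking the simplicial identities, with the $d_0$ case handled via the observation that deleting the initial object $f_1$ in $\Dec^0\catC$ leaves a chain based at $f_2\circ f_1$, matching $d_1^{N\catC}$---is exactly the natural way to fill in what the paper leaves implicit.
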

\begin{prop}
If $\catC$ is a groupoid, then    \
$$
|\Dec^0(\catC)(f,g)|= \begin{cases}
1 & \text{if $f$ and $g$ have the same domain}  \\
0  & \text{otherwise.}
\end{cases} 
$$
\end{prop}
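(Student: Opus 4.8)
The plan is to unwind the definition of morphisms in $\Dec^0(\catC)$ from Definition \ref{def:DecCatccc} and then exploit the invertibility of morphisms in a groupoid. Recall that an object of $\Dec^0(\catC)$ is a morphism of $\catC$, and that a morphism from $f$ to $g$ in $\Dec^0(\catC)$ is a morphism $h$ of $\catC$ (composable with $f$) with $h\circ f=g$; in the notation of the definition this is the morphism $(f,h)$, whose target is $h\circ f$. Thus I would first identify the hom-set $\Dec^0(\catC)(f,g)$ with the set $\set{h\in \Mor(\catC):\ \text{dom}(h)=\text{cod}(f)\ \text{and}\ h\circ f=g}$, reducing the proposition to counting such $h$.

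First I would dispose of the vanishing case. If some $h$ satisfies $h\circ f=g$, then taking domains gives $\text{dom}(g)=\text{dom}(h\circ f)=\text{dom}(f)$. Contrapositively, if $f$ and $g$ do not share a domain then no such $h$ can exist, so $|\Dec^0(\catC)(f,g)|=0$, which is the second line of the formula.

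Next, assume $f$ and $g$ have the same domain. Here is where the groupoid hypothesis enters: $f$ is invertible, so I would set $h=g\circ f^{-1}$, which is composable with $f$ and satisfies $h\circ f=g\circ f^{-1}\circ f=g$, giving existence. For uniqueness, if $h_1\circ f=g=h_2\circ f$, then right-cancelling the invertible $f$ yields $h_1=h_2$. Hence there is exactly one such $h$ and $|\Dec^0(\catC)(f,g)|=1$, matching the first line.

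Since the argument is purely formal, I do not expect a genuine obstacle; the only care needed is the domain/codomain bookkeeping, so that the two conditions \emph{composable with $f$} and \emph{$h\circ f=g$} are simultaneously respected. Note that invertibility of $f$ is used twice — once to construct $h$ and once to force its uniqueness — which is precisely the point at which the groupoid assumption, rather than a bare category, is essential.
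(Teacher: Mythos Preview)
Your proof is correct; the paper states this proposition without proof, and your argument is exactly the natural unwinding of Definition~\ref{def:DecCatccc} together with the observation that in a groupoid the equation $h\circ f=g$ has a unique solution $h=g\circ f^{-1}$ whenever $\text{dom}(f)=\text{dom}(g)$.
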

\begin{example}\label{Ex: NCG=DeltaG}
Given a group $G$, then $\Dec^0(G)$ is the category with objects $x\in G$ and a unique morphism between any two objects. We will denote this category by $\catC_G$, and write $x\Rightarrow y$ for the unique morphism from $x$ to $y$. Note that the nerve of $\catC_G$ is isomorphic to $\Delta_G$ (see Definition \ref{defn:DeltaUU}).
\end{example}

\begin{prop}\label{pro:CCCG}
Given a group $G$, then $\Dec^0(\catC_G)\cong\sqcup_{x\in G} \catC_G$.     
\end{prop}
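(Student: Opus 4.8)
The plan is to compute $\Dec^0(\catC_G)$ directly from Definition \ref{def:DecCatccc} and then recognize the answer as a disjoint union of codiscrete groupoids, using the preceding proposition (on $|\Dec^0(\catC)(f,g)|$ for a groupoid $\catC$) to control the morphisms. First I would identify the objects: by Definition \ref{def:DecCatccc} the objects of $\Dec^0(\catC_G)$ are the morphisms of $\catC_G$, and by Example \ref{Ex: NCG=DeltaG} the morphisms of $\catC_G$ are exactly the arrows $x\Rightarrow y$, one for each ordered pair $(x,y)\in G\times G$. Hence $\Obj(\Dec^0(\catC_G))$ is in bijection with $G\times G$, the arrow $x\Rightarrow y$ corresponding to $(x,y)$.

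Next I would analyze the morphisms. Since $\catC_G$ is a groupoid, the preceding proposition applies: for morphisms $f,g$ of $\catC_G$ there is a (necessarily unique) arrow $f\to g$ in $\Dec^0(\catC_G)$ precisely when $f$ and $g$ have the same domain, and none otherwise. The domain of the arrow $x\Rightarrow y$ is $x$, so under the identification $\Obj(\Dec^0(\catC_G))\cong G\times G$ this reads: there is a unique morphism $(x,y)\to(x',y')$ when $x=x'$, and no morphism when $x\neq x'$. In particular a morphism of $\Dec^0(\catC_G)$ never changes the first coordinate, which is exactly the feature that forces a coproduct decomposition.

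This immediately exhibits the decomposition. For each fixed $x\in G$ let $\catC_G^{(x)}$ denote the full subcategory of $\Dec^0(\catC_G)$ on the objects $\{(x,y):y\in G\}$; since there are no arrows between objects with distinct first coordinates, $\Dec^0(\catC_G)=\sqcup_{x\in G}\catC_G^{(x)}$ as a coproduct in $\catCat$. Each $\catC_G^{(x)}$ has object set in bijection with $G$ via $(x,y)\mapsto y$ and a unique arrow between any ordered pair of its objects, so it is codiscrete on $G$; thus $(x,y)\mapsto y$ is an isomorphism $\catC_G^{(x)}\xrightarrow{\cong}\catC_G$. Assembling these isomorphisms over $x\in G$ yields $\Dec^0(\catC_G)\cong\sqcup_{x\in G}\catC_G$.

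The only remaining point — a mild bookkeeping check rather than a genuine obstacle — is that these object and morphism bijections assemble into a functor, i.e.\ that they respect composition. This reduces to comparing the composition law of Definition \ref{def:DecCatccc}, which sends $(f,g)$ followed by $(g\circ f,h)$ to $(f,h\circ g)$, with ordinary composition in the codiscrete groupoid $\catC_G$; because every hom-set in sight has at most one element, both composites are forced by their domains and codomains and therefore agree automatically. I would record this verification in one line and conclude.
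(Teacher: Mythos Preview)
Your proof is correct and follows essentially the same approach as the paper: both define, for each $x\in G$, the full subcategory of $\Dec^0(\catC_G)$ on the objects $x\Rightarrow y$ (the paper calls it $\catC_{G,x}$), observe that these subcategories partition $\Dec^0(\catC_G)$, and identify each with $\catC_G$ via $(x\Rightarrow y)\mapsto y$. Your version is more explicit in invoking the preceding proposition on hom-sets of $\Dec^0$ of a groupoid to justify why no morphisms cross between distinct first coordinates, whereas the paper leaves this as ``one can check''.
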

\begin{proof}
For every object $x \in G$ we have a subcategory $\catC_{G,x}$ of $\Dec^0(\catC_G)$ consists of objects of the form $x \Rightarrow y$ where $y \in G$, and morphisms of the form $(x \Rightarrow y,y\Rightarrow z)$ where $y,z \in G$. One can check that $\catC_{G,x}\cong \catC_{G}$ and $\Dec^0(\catC_G)=\sqcup_{x\in G} \catC_{G,x}$.    
\end{proof}

\begin{prop}
We have an isomorphism of simplicial sets
\begin{equation}\label{eq:isomophissss}
\gamma\colon  \Dec^0(\Delta_{\zz_m})\stk{\cong}   
\sqcup_{a \in \zz_m} \Delta_{\zz_m}
\end{equation}
which is defined by $\gamma_n(a_{0}, a_{1},\dots ,a_{n+1})=
\left(a_{0},(a_{1},\dots ,a_{n+1})\right)$
\end{prop}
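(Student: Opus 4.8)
The plan is to verify directly that the stated level-wise maps $\gamma_n$ assemble into a simplicial map that is a bijection in every degree. First I would record the explicit form of the structure maps on both sides. By Definition \ref{def:decalage}, an $n$-simplex of $\Dec^0(\Delta_{\zz_m})$ is an $(n+1)$-simplex of $\Delta_{\zz_m}$, i.e.\ a tuple $(a_0,a_1,\dots,a_{n+1})\in\zz_m^{n+2}$, and its face and degeneracy maps are the shifted ones $d_i^{\Dec^0}=d_{i+1}^{\Delta_{\zz_m}}$ and $s_i^{\Dec^0}=s_{i+1}^{\Delta_{\zz_m}}$ (the maps $d_0,s_0$ having been forgotten). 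In particular, using Definition \ref{defn:DeltaUU}, every face $d_i^{\Dec^0}$ deletes the coordinate $a_{i+1}$ and every degeneracy $s_i^{\Dec^0}$ repeats $a_{i+1}$, so in both cases the initial coordinate $a_0$ is never altered. On the right-hand side, an $n$-simplex of $\sqcup_{a\in\zz_m}\Delta_{\zz_m}$ is a pair $(a,(b_0,\dots,b_n))$ with $a\in\zz_m$ the component index, and the structure maps act only on the second entry while preserving $a$.

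With these descriptions in hand, $\gamma_n(a_0,a_1,\dots,a_{n+1})=(a_0,(a_1,\dots,a_{n+1}))$ is evidently a bijection $\zz_m^{n+2}\xrightarrow{\cong}\zz_m\times\zz_m^{n+1}$, with inverse $(a,(b_0,\dots,b_n))\mapsto(a,b_0,\dots,b_n)$. It then remains to check that $\gamma$ commutes with faces and degeneracies. For faces, on the left $d_i^{\Dec^0}$ deletes $a_{i+1}$ and $\gamma_{n-1}$ retains $a_0$ as the index; on the right $\gamma_n$ sends the index to $a_0$ and then $d_i$ deletes the $i$-th entry of $(a_1,\dots,a_{n+1})$, which is exactly $a_{i+1}$. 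Both routes yield $(a_0,(a_1,\dots,a_i,a_{i+2},\dots,a_{n+1}))$, so the square commutes; the degeneracy check is identical, with deletion of $a_{i+1}$ replaced by repetition of $a_{i+1}$. This establishes that $\gamma$ is an isomorphism of simplicial sets.

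I expect no serious obstacle: the only point requiring care is the index bookkeeping introduced by the décalage, namely that forgetting $d_0,s_0$ shifts all indices by one, so that $a_0$ plays the role of the coproduct label precisely because it is the single coordinate left untouched by the shifted structure maps. As a conceptual cross-check — and an alternative proof — one can bypass the computation by passing through the nerve: combining $\Delta_{\zz_m}\cong N\catC_{\zz_m}$ (Example \ref{Ex: NCG=DeltaG}), the identification $\Dec^0(N\catC_{\zz_m})\cong N\Dec^0(\catC_{\zz_m})$ (Proposition \ref{pro:DecN=NDec}), the splitting $\Dec^0(\catC_{\zz_m})\cong\sqcup_{a\in\zz_m}\catC_{\zz_m}$ (Proposition \ref{pro:CCCG} with $G=\zz_m$), and the fact that the nerve preserves coproducts yields $\Dec^0(\Delta_{\zz_m})\cong\sqcup_{a\in\zz_m}\Delta_{\zz_m}$; tracing objects through these isomorphisms recovers exactly the formula for $\gamma_n$.
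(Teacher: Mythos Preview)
Your proposal is correct. Your primary argument is a direct, self-contained verification that $\gamma_n$ is a level-wise bijection compatible with the shifted structure maps, whereas the paper's proof is precisely the categorical chain you sketch as your ``conceptual cross-check'': it composes the isomorphisms from Example~\ref{Ex: NCG=DeltaG}, Proposition~\ref{pro:DecN=NDec}, Proposition~\ref{pro:CCCG}, and the fact that the nerve preserves coproducts, and then traces an $n$-simplex through the chain to recover the explicit formula for $\gamma_n$. Your direct route is shorter and avoids any reliance on the nerve/d\'ecalage comparison for categories; the paper's route, on the other hand, explains the structural origin of the splitting (the d\'ecalage of the indiscrete groupoid $\catC_{\zz_m}$ decomposes over its object set), which is what motivates the appearance of $a_0$ as the coproduct label rather than it being an accident of index bookkeeping.
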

\begin{proof}
By Example \ref{Ex: NCG=DeltaG}, Propositions \ref{pro:DecN=NDec} and \ref{pro:CCCG}, we have
\begin{equation}\label{eq:isomophissss2}
\begin{aligned}
\Dec^0(\Delta_{\zz_m}) &\cong \Dec^0(N\catC_{\zz_m})  
\cong N \Dec^0(\catC_{\zz_m}) \\ 
&\cong N(\sqcup_{a \in \zz_m} \catC_{\zz_m})
=\sqcup_{a\in \zz_m} N\catC_{\zz_m} \\
 &\cong \sqcup_{a \in \zz_m} \Delta_{\zz_m}
\end{aligned}
\end{equation}
The first isomorphism in Equation (\ref{eq:isomophissss2}) sends 
the $n$-simplex $(a_{0}, a_{1},\dots ,a_{n+1})$ in $\Dec^0(\Delta_{\zz_m})$
to 
$a_{0}\Rightarrow a_{1} \Rightarrow \dots \Rightarrow a_{n+1}$,
which is sent by the second isomorphism in Equation (\ref{eq:isomophissss2}) to  
\begin{equation}\label{eq:iso3}
\xymatrix@R=15pt @C=20pt{
\ar[rrr]^{(a_0\Rightarrow a_{1},a_{1}\Rightarrow a_{2})} &&& \ar[rrr]^>>>>>>>>>>>>>{(a_0 \Rightarrow a_{2},a_{2}\Rightarrow a_{3})} 
&&& \dots \dots\ar[rrrr]^>>>>>>>>>>>>>>>>>{(a_0 \Rightarrow a_{{n}},a_{{n}}
\Rightarrow a_{n+1})} &&&&   }
\end{equation}
The third isomorphism in (\ref{eq:isomophissss2}) sends the simplex in (\ref{eq:iso3}) to $(a_0,a_{1}\Rightarrow a_{2}\Rightarrow \dots \Rightarrow a_{n+1})$, which is finally sent to 
$\left(a_0,(a_{1}, a_{2}, \dots ,a_{n+1})\right) \in \left(\sqcup_{a \in \zz_m} \Delta_{\zz_m}\right)_{n}$
\end{proof}

\subsection{The geometry of simplicial distributions on cone scenarios}
In this section, we introduce the cone scenario $(CX,\Delta_{\zz_m})$ derived from a given scenario $(X,\Delta_{\zz_m})$. Next, we characterize the contextuality on the cone scenario $(CX,\Delta_{\zz_m})$ in terms of the contextuality on the original scenario $(X,\Delta_{\zz_m})$. Same thing is done for characterizing vertices on cone scenarios. Finally, we derive the Bell inequalities of the scenario $(CX,\Delta_{\zz_m})$ using the Bell inequalities of the scenario $(X,\Delta_{\zz_m})$.
\begin{defn}\label{def:cone}
{\rm
The \emph{cone} of a simplicial set $X$, denoted $CX$, is defined as a new simplicial set constructed as follows:  
\begin{itemize}
\item $(C X)_n = \set{c_n} \sqcup X_n \sqcup \left( \sqcup_{k+l=n-1} \set{c_k} \times X_l \right)$ where $c_n=(s_0)^n(c)$, 
 and $c$ is the cone point.
\item For $(c_k,\sigma)\in \set{c_k}\times X_l$ 
$$
d_i(c_k,\sigma) = \left\lbrace
\begin{array}{ll}
(c_{k-1},\sigma) & i\leq k\\
(c_k,d_{i-1-k}\sigma) & i>k,
\end{array}
\right.
$$
where $(c_{-1},\sigma)=\sigma$,
and
$$
s_j(c_k,\sigma) = \left\lbrace
\begin{array}{ll}
(c_{k+1},\sigma) & j\leq k\\
(c_k,s_{{j}-1-k}\sigma) & {j}>k.
\end{array}
\right.
$$
\item For simplices in $\set{c_n}$ and $X_n$, the face and the degeneracy maps act as in $\Delta^0$ and $X$, respectively. 
\end{itemize} 
}
\end{defn}

The cone construction is functorial, and there exists a natural inclusion $i\colon X \xhookrightarrow{} CX$. Moreover, for 
a connected simplicial set $X$, we have a natural bijection  
\begin{equation}\label{eq:Cone vs declage adjunction}
 \catsSet(X,\Dec^0 (Y))\stk{\cong} \catsSet(CX,Y)
\end{equation}
see \cite[Corollary 2.1]{Decle2}, that comes from a more general adjunction. This bijection is given by sending $\varphi\colon  X \to 
\Dec^0 (Y)$ to $\varphi'\colon  C X \to Y$ where  
\begin{equation}\label{eq:Transposeeee}
\varphi'_{n+1}(c,x)=\varphi_n(x)
\end{equation}
for every $x \in X_n$. Equation (\ref{eq:Transposeeee}) determines $\varphi'$ since $(c_k,x) \in C X $ is a degenerate simplex
for every $k \geq 1$ and $d_0(c,x)=x$.
The bijection in (\ref{eq:Cone vs declage adjunction}) induces  
the following commutative diagram in $\catConv$:
\begin{equation}\label{eq:Pro 2.23}
\begin{tikzcd}[column sep=huge,row sep=large]
D\left(\catsSet(X,\Dec^0(Y))\right) \arrow[rr,"\Theta_{X,\Dec^0(Y)}"] \arrow[d,"\cong"]
&& \sDist\left(X,\Dec^0(Y)\right)
\arrow[d,"\cong"] \\
D\left(\catsSet(C X,Y )\right) 
\arrow[rr,"\Theta_{C X,Y}"]
  && \sDist(C X,Y)
\end{tikzcd}
\end{equation}
See \cite[Proposition 2.23]{ConCat} and note that $D(\Dec^0(Y))=\Dec^0(D(Y))$.
\begin{prop}\label{pro:ThetaCXX}
For a connected measurement space $X$, the following commutative diagram holds in $\catConv$:
\begin{equation}\label{eq:TheataConeeeeee}
\begin{tikzcd}[column sep=small,row sep=large]
D\left(\catsSet(X,\Delta_{\zz_m})\right) \star \dots \star D\left(\catsSet(X,\Delta_{\zz_m})\right)
\arrow[r,"
"] \arrow[d,"\cong"] & \sDist(X,\Delta_{\zz_m})\star \dots \star\sDist(X,\Delta_{\zz_m})
 \arrow[d,"\cong","\beta^{(0)} \star 
\dots \star \beta^{(m-1)}"'] \\
 D\left(\catsSet(C X,\Delta_{\zz_m} )\right) 
\arrow[r,"\Theta_{C X,\Delta_{\zz_m}}"']
  & \sDist(C X,\Delta_{\zz_m}) 
\end{tikzcd}
\end{equation}
where the top map is $\Theta_{X,\Delta_{\zz_m}} \star \dots \star \Theta_{X,\Delta_{\zz_m}}$ and $\beta^{(j)}$ is defined as follows:
$$
\beta^{(j)}(p)_{(c,x)}(a_0,a_1,\dots,a_{n})
=\begin{cases}
p_{x}(a_1,\dots,a_{n})  & \text{if} \;\; a_0=j \\
0 & \text{otherwise}
\end{cases}
$$
for every $p \in \sDist(X,\Delta_{\zz_m})$, $x \in X_{n-1}$ and $(a_0,a_1,\dots,a_{n}) \in (\Delta_{\zz_m})_n$.
\end{prop}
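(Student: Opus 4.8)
The plan is to assemble the square by vertically stacking two commutative squares that are already available: the square of Corollary~\ref{cor:ThethaYj} applied to the outcome space $\Dec^0(\Delta_{\zz_m})$, and the square (\ref{eq:Pro 2.23}) applied with $Y=\Delta_{\zz_m}$.

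First I would use the isomorphism $\gamma$ of (\ref{eq:isomophissss}) to identify $\Dec^0(\Delta_{\zz_m})$ with the disjoint union $\sqcup_{a\in\zz_m}\Delta_{\zz_m}$ of $m$ copies of $\Delta_{\zz_m}$ indexed by $a\in\zz_m$. Since $X$ is connected, Corollary~\ref{cor:ThethaYj} with $Y=\Dec^0(\Delta_{\zz_m})$ and each $Y^{(j)}=\Delta_{\zz_m}$ produces a commutative square whose top arrow is $\Theta_{X,\Delta_{\zz_m}}\star\dots\star\Theta_{X,\Delta_{\zz_m}}$, whose bottom arrow is $\Theta_{X,\Dec^0(\Delta_{\zz_m})}$, and whose vertical arrows are the join isomorphisms induced by the inclusions $\alpha^{(j)}\colon\Delta_{\zz_m}\hookrightarrow\Dec^0(\Delta_{\zz_m})$ of the $j$-th copy (via $\gamma^{-1}$).

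Next I would observe that, because $D(\Dec^0(\Delta_{\zz_m}))=\Dec^0(D(\Delta_{\zz_m}))$, the top row of (\ref{eq:Pro 2.23}) with $Y=\Delta_{\zz_m}$ is literally the bottom row of the square just described. Hence the two squares glue along this shared edge into a single commutative rectangle, whose bottom arrow is $\Theta_{CX,\Delta_{\zz_m}}$ and whose composite left vertical is the claimed isomorphism $D(\catsSet(X,\Delta_{\zz_m}))\star\dots\star D(\catsSet(X,\Delta_{\zz_m}))\to D(\catsSet(CX,\Delta_{\zz_m}))$.

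The one substantive point — and the main obstacle — is to check that the composite right vertical equals $\beta^{(0)}\star\dots\star\beta^{(m-1)}$. Since the join is the coproduct in $\catConv$ and post-composition distributes over copairings, this composite is the copairing of the maps $\Phi\circ\alpha^{(j)}_{\ast}$, where $\Phi$ denotes the cone isomorphism $\sDist(X,\Dec^0(\Delta_{\zz_m}))\xrightarrow{\cong}\sDist(CX,\Delta_{\zz_m})$ appearing as the right vertical of (\ref{eq:Pro 2.23}). It therefore suffices to verify $\Phi\circ\alpha^{(j)}_{\ast}=\beta^{(j)}$ for each $j$, which I would do by a direct chase through the distribution monad: the inclusion $\alpha^{(j)}$ sends an $n$-simplex $(c_0,\dots,c_n)$ of $\Delta_{\zz_m}$ to $(j,c_0,\dots,c_n)\in(\Dec^0(\Delta_{\zz_m}))_n=(\Delta_{\zz_m})_{n+1}$, and being injective it makes the push-forward $\alpha^{(j)}_{\ast}(p)_x$ supported on tuples with leading entry $j$, with value $p_x(c_0,\dots,c_n)$ there. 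Applying the cone--d\'ecalage transpose (\ref{eq:Transposeeee}), namely $q_{(c,x)}=\alpha^{(j)}_{\ast}(p)_x$, then reproduces precisely the defining formula of $\beta^{(j)}$ after re-indexing to $x\in X_{n-1}$. This identifies the composite right vertical with $\beta^{(0)}\star\dots\star\beta^{(m-1)}$ and completes the rectangle.
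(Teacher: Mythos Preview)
Your proposal is correct and follows essentially the same route as the paper: both compose the square of Corollary~\ref{cor:ThethaYj} (for the decomposition $\Dec^0(\Delta_{\zz_m})\cong\sqcup_{a\in\zz_m}\Delta_{\zz_m}$ via $\gamma$) with the cone--d\'ecalage square (\ref{eq:Pro 2.23}), and then verify by the same direct computation that the composite right vertical on the $j$-th component equals $\beta^{(j)}$. The only cosmetic difference is that the paper inserts an explicit intermediate square expressing the naturality of $\Theta$ along $\gamma^{-1}$, whereas you absorb that identification into the inclusions $\alpha^{(j)}$ from the start.
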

%
%
\begin{proof}
The isomorphism $\gamma$ in (\ref{eq:isomophissss}) induces the following commutative diagram:
\begin{equation}\label{dia:gammainduces}
\begin{tikzcd}[column sep=huge,row sep=large]
D\left(\catsSet(X,\bigsqcup_{a \in \zz_m} \Delta_{\zz_m} )\right) 
\arrow[rr,"\Theta_{X,\bigsqcup_{a \in \zz_m}\Delta_{\zz_m}}"]
 \arrow[d,"\cong","D\left((\gamma^{-1})_\ast\right)"'] && \sDist( X,\bigsqcup_{a \in \zz_m} \Delta_{\zz_m}) \arrow[d,"\cong","(\gamma^{-1})_\ast"'] \\
D\left(\catsSet(X,\Dec^0(\Delta_{\zz_m}))\right) \arrow[rr,"\Theta_{X,\Dec^0(\Delta_{\zz_m})}"]
&& \sDist\left(X,\Dec^0(\Delta_{\zz_m})\right)
\end{tikzcd}
\end{equation}
Consider Diagram (\ref{dia:ThetaYj}) where the index runs from $0$ to $m-1$ and $Y^{(j)}=\Delta_{\zz_m}$. Composing it with Diagrams (\ref{dia:gammainduces}) and (\ref{eq:Pro 2.23}) yields 
Diagram (\ref{eq:TheataConeeeeee}). It remains to prove that for $p\in \sDist(X,\Delta_{\zz_m})$,
the image of the map $\left((\gamma^{-1})_\ast \circ \alpha^{(j)}_\ast\right)(p)$
under the isomorphism given in 
(\ref{eq:Cone vs declage adjunction})
is equal to $\beta^{(j)}(p)$. Given $x \in X_{n-1}$ and $(a_0,a_1,\dots,a_{n}) \in \zz_m^{n+1}$, we have 
$$
\begin{aligned}
\left((\gamma^{-1})_\ast \left(\alpha^{(j)}_\ast (p)\right)\right)'_{(c,x)}(a_0,a_1,\dots,a_{n})&=
\left((\gamma^{-1})_\ast \left(\alpha^{(j)}_\ast (p)\right)\right)_{x}(a_0,a_1,\dots,a_{n})\\
&=\left(\alpha^{(j)}_\ast (p)\right)_{x}\left(a_0,(a_1,\dots,a_{n})\right) \\
&=\begin{cases}
p_{x}(a_1,\dots,a_{n})  & \text{if} \;\; a_0=j \\
0 & \text{otherwise}
\end{cases} \\
& =\beta^{(j)}(p)_{(c,x)}(a_0,a_1,\dots,a_{n})
\end{aligned}
$$
see Equation (\ref{eq:Transposeeee}). 
\end{proof}
According to Proposition \ref{pro:ThetaCXX} every simplicial distribution $p \in \sDist(CX,\Delta_{\zz_m})$ can be written uniquely as follows:
\begin{equation}\label{eq:Decompositionn}
p=\left(\lra{\lambda_0,p^{(0)}},\dots,\lra{\lambda_{m-1},p^{(m-1)}}
\right)    
\end{equation}
where $\sum_{i=0}^{m-1}\lambda_{i}=1$, $\lambda_{j} \in [0,1]$ and $p^{(j)} \in \sDist(X,\Delta_{\zz_m})$ for every $0 \leq j \leq m-1$. In order to find formulas for $\lambda_j$ and $p^{(j)}$, let $\tilde{p}$ be the image of $p$ under the inverse of the isomorphism in (\ref{eq:Cone vs declage adjunction}). By applying $\left(\alpha^{(0)}_\ast \star 
\dots \star \alpha^{(m-1)}_\ast\right)^{-1}\circ\gamma_\ast$ (see the map in (\ref{eq:isomophissss}) and Corollary \ref{cor:ConneDecomXY})
on $\tilde{p}$ we get that
\begin{equation}\label{pjlambdajformula}
\lambda_j=\sum_{a_1,\dots,a_n \in \zz_m}p_{(c,x)}(j,a_1,\dots,a_n)
\;\; \text{and} \;\; p^{(j)}_x(a_1,\dots,a_n)=\frac{p_{(c,x)}(j,a_1,\dots,a_n)}{\lambda_j}
\end{equation}
where in the formula of $\lambda_j$, it doesn't matter which simplex $x$ is chosen as in Lemma \ref{lem:sum=sum}.

\begin{thm}\label{thm:Decom}
Let $X$ be a connected measurement space, and let $p$ be a simplicial distribution on $(CX,\Delta_{\zz_m})$ written as in Equation (\ref{eq:Decompositionn}). The following statements hold:
%
\begin{enumerate}
    \item $p$ is noncontextual if and only if
$p^{(j)}$ is noncontextual for every $0\leq j \leq m-1$.    
    \item $p$ is a vertex in $\sDist(CX,\Delta_{\zz_m})$ if
and only if there is $0\leq j \leq m-1$ such that $\lambda_j=1$ and $p^{(j)}$ is a vertex in $\sDist(X,\Delta_{\zz_m})$.
\end{enumerate}
\end{thm}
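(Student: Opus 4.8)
The plan is to read off both statements directly from the isomorphism established in Proposition \ref{pro:ThetaCXX}. That proposition identifies $\sDist(CX,\Delta_{\zz_m})$ with the $m$-fold join $\sDist(X,\Delta_{\zz_m})\star\dots\star\sDist(X,\Delta_{\zz_m})$ through the convex isomorphism $\beta:=\beta^{(0)}\star\dots\star\beta^{(m-1)}$, and it does so compatibly with the noncontextuality maps $\Theta$: the square (\ref{eq:TheataConeeeeee}) commutes and both vertical arrows are isomorphisms. Under $\beta^{-1}$ the distribution $p$ corresponds exactly to the tuple $(\lra{\lambda_0,p^{(0)}},\dots,\lra{\lambda_{m-1},p^{(m-1)}})$ of (\ref{eq:Decompositionn}), so the whole theorem becomes a statement about this tuple inside the join, to be settled using Proposition \ref{pro:vertexAstar} and the definition of the join.

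For part (1), I would first record the elementary fact that for convex maps $f_i\colon A_i\to B_i$ the image of the join $f_1\star\dots\star f_m$ equals $\operatorname{Im}(f_1)\star\dots\star\operatorname{Im}(f_m)$ (as a convex subset of $B_1\star\dots\star B_m$); one inclusion is immediate from the formula for $\star$ on morphisms, and the other follows by choosing preimages slot by slot. Applying this with $f_i=\Theta_{X,\Delta_{\zz_m}}$ and using the commutativity of (\ref{eq:TheataConeeeeee}), the distribution $p$ lies in $\operatorname{Im}(\Theta_{CX,\Delta_{\zz_m}})$ if and only if $\beta^{-1}(p)=(\lra{\lambda_0,p^{(0)}},\dots)$ lies in $\operatorname{Im}(\Theta_{X,\Delta_{\zz_m}})\star\dots\star\operatorname{Im}(\Theta_{X,\Delta_{\zz_m}})$. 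By the definition of the join this membership is equivalent to $p^{(j)}\in\operatorname{Im}(\Theta_{X,\Delta_{\zz_m}})$, i.e.\ $p^{(j)}$ noncontextual, for every slot $j$ with $\lambda_j>0$; for the remaining slots the entry is $\lra{0,\bullet}$ and no condition is imposed, which matches the convention that $p^{(j)}$ is only defined when $\lambda_j\neq0$.

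For part (2), since $\beta$ is an isomorphism in $\catConv$ it carries vertices to vertices, so $p$ is a vertex of $\sDist(CX,\Delta_{\zz_m})$ if and only if $\beta^{-1}(p)$ is a vertex of the join. Proposition \ref{pro:vertexAstar} describes these vertices precisely: they are exactly the elements $\kappa_j(q)=(\lra{0,\bullet},\dots,\lra{1,q},\dots,\lra{0,\bullet})$ with $q$ a vertex of $\sDist(X,\Delta_{\zz_m})$. Comparing with (\ref{eq:Decompositionn}), the equality $\beta^{-1}(p)=\kappa_j(q)$ means exactly that $\lambda_j=1$ (hence $\lambda_i=0$ for $i\neq j$) and $p^{(j)}=q$ is a vertex of $\sDist(X,\Delta_{\zz_m})$, which is the asserted characterization.

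The argument is essentially formal once Proposition \ref{pro:ThetaCXX} is in hand; the only points requiring care are the bookkeeping of the degenerate slots where $\lambda_j=0$ and $p^{(j)}$ is undefined (handled uniformly by the $\bullet$-convention of Definition \ref{def:bulletttt}), and the short verification that the image of a join of convex maps is the join of the images, which is where I expect the only genuine, if minor, work to lie.
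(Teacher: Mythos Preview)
Your proposal is correct and follows essentially the same route as the paper: both parts are read off from the isomorphism and commuting square of Proposition~\ref{pro:ThetaCXX}, with part~(2) reduced to Proposition~\ref{pro:vertexAstar}. You are simply more explicit than the paper in unpacking why commutativity of (\ref{eq:TheataConeeeeee}) yields part~(1) (via the join-of-images observation) and in handling the $\lambda_j=0$ slots.
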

\begin{proof}
Part $(1)$ is obtained by the commutativity of Diagram (\ref{eq:TheataConeeeeee}).
For part $(2)$, by Proposition \ref{pro:vertexAstar} the simplicial distribution
$
\left(\lra{\lambda_0,p^{(0)}},\dots,
\lra{\lambda_{m-1},p^{(m-1)}}\right)
$ 
is a vertex in  $\sDist(X,\Delta_{\zz_m})\star \dots \star\sDist(X,\Delta_{\zz_m})$ if and only if 
there is $0\leq j \leq m-1$ such that $\lambda_j=1$ and $p^{(j)}$ is a vertex in $\sDist(X,\Delta_{\zz_m})$.
\end{proof}
Recall Definition \ref{def:Vsuppp} from section \ref{subsec:ConvVert}. The vertex support of a simplicial distribution on $(CX,\Delta_{\zz_m})$ can be computed using the decomposition in (\ref{eq:Decompositionn}). 

\begin{prop}\label{pro:VsuppDecompo}
Let $p \in \sDist(CX,\Delta_{\zz_m})$ be the simplicial distribution in (\ref{eq:Decompositionn}). Then we have the following:
$$
\Vsupp(p)= \bigsqcup_{j \;:\;\;\lambda_j \neq 0} \kappa_j
\left(\Vsupp(p^{(j)})\right)
$$
\end{prop}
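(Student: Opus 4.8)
The plan is to enumerate the vertices of $\sDist(CX,\Delta_{\zz_m})$ using Theorem \ref{thm:Decom}, and then to test the relation $\preceq$ of Definition \ref{def:Rrrelation} against the explicit coordinates of $p$ recorded in (\ref{pjlambdajformula}). By part (2) of Theorem \ref{thm:Decom}, together with the isomorphism $\beta^{(0)}\star\dots\star\beta^{(m-1)}$ of Proposition \ref{pro:ThetaCXX}, every vertex of $\sDist(CX,\Delta_{\zz_m})$ is of the form $\kappa_j(r)=\beta^{(j)}(r)$ for a unique $0\le j\le m-1$ and a vertex $r\in\sDist(X,\Delta_{\zz_m})$. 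Hence the whole computation reduces to deciding, for each such $j$ and $r$, whether $\beta^{(j)}(r)\preceq p$.

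First I would reduce the relation $\preceq$ on $CX$ to a condition involving only the cone simplices $(c,x)$. The non-degenerate simplices of $CX$ are the cone point $c$, the non-degenerate simplices of $X$, and the simplices $(c,x)$ with $x$ non-degenerate; all remaining simplices are degenerate. Because the degeneracy maps of $\Delta_{\zz_m}$ are injective and both $p$ and $\beta^{(j)}(r)$ are simplicial maps, the support on a degenerate simplex is the image of the support on the underlying non-degenerate simplex, so $\preceq$ on degenerate simplices is automatic once it holds on non-degenerate ones. Furthermore, $d_0(c,x)=x$ gives $p_x=D(d_0)(p_{(c,x)})$ for every $x\in X$, and $d_1(c,x)=c$ gives $p_c=D(d_1)(p_{(c,x)})$ for $x\in X_0$; since marginalization preserves inclusion of supports, the relation $\preceq$ on the $X$-simplices and on $c$ follows from the relation on the simplices $(c,x)$. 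Thus $\beta^{(j)}(r)\preceq p$ is equivalent to $\beta^{(j)}(r)_{(c,x)}\preceq p_{(c,x)}$ for all $x$.

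Next I would carry out the support computation on $(c,x)$. Rewriting (\ref{pjlambdajformula}) gives $p_{(c,x)}(a_0,a_1,\dots,a_n)=\lambda_{a_0}\,p^{(a_0)}_x(a_1,\dots,a_n)$, while Proposition \ref{pro:ThetaCXX} gives $\beta^{(j)}(r)_{(c,x)}(a_0,a_1,\dots,a_n)=r_x(a_1,\dots,a_n)$ when $a_0=j$ and $0$ otherwise. The latter is nonzero precisely when $a_0=j$ and $r_x(a_1,\dots,a_n)\neq 0$, and in that case $\beta^{(j)}(r)\preceq p$ demands $\lambda_j\,p^{(j)}_x(a_1,\dots,a_n)\neq 0$. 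Since $r$ is an honest simplicial distribution, $r_x$ has nonempty support for every $x$, which forces $\lambda_j\neq 0$; granting $\lambda_j\neq 0$, the remaining requirement reads $r_x(a_1,\dots,a_n)\neq 0\Rightarrow p^{(j)}_x(a_1,\dots,a_n)\neq 0$, i.e.\ $r\preceq p^{(j)}$. Therefore $\kappa_j(r)\in\Vsupp(p)$ if and only if $\lambda_j\neq 0$ and $r\in\Vsupp(p^{(j)})$, which is exactly the asserted description. The union over $j$ is disjoint because $\beta^{(j)}(r)$ is supported on outcome tuples with first coordinate $j$, so vertices coming from different indices are distinct.

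I expect the only delicate point to be the reduction of $\preceq$ from all simplices of $CX$ to the cone simplices $(c,x)$; once that marginalization argument is in place, the conclusion follows from matching the two explicit coordinate formulas.
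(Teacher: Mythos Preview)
Your proof is correct and follows the same two-step strategy as the paper: first enumerate the vertices of $\sDist(CX,\Delta_{\zz_m})$ as the $\kappa_j(r)$ for $r$ a vertex of $\sDist(X,\Delta_{\zz_m})$, then determine when $\kappa_j(r)\preceq p$. The paper's own proof is only two sentences---it cites Proposition~\ref{pro:vertexAstar} for the vertex enumeration and then asserts without further comment that $\kappa_j(q)\preceq p$ if and only if $q\preceq p^{(j)}$---whereas you verify this last equivalence explicitly via the formulas of (\ref{pjlambdajformula}) and Proposition~\ref{pro:ThetaCXX}, including the boundary case $\lambda_j=0$ and the reduction of the support test to the simplices $(c,x)$; so your argument is a fleshed-out version of the same idea rather than a different route.
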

\begin{proof}
By Proposition \ref{pro:vertexAstar} every vertex in $\sDist(CX,\Delta_{\zz_m})\cong \sDist(X,\Delta_{\zz_m}) \star \dots \star \sDist(X,\Delta_{\zz_m})$ is of the form $\kappa_j(q)$ for some vertex $q$ in $\sDist(X,\Delta_{\zz_m})$. In addition, $\kappa_j(q) \preceq p$ if and only if $q \preceq p^{(j)}$.
\end{proof}

Finally, we apply Theorem \ref{thm:Decom} to find the Bell inequalities (Definition \ref{def:Bellineq}) when the measurement space is a cone space.
\begin{prop}\label{pro:BellineqCone}
Let $X$ be a connected simplicial set. If the following inequalities are the Bell inequalities of the scenario $(X,\Delta_{\zz_m})$
\begin{equation}\label{eq:SetofBellineq}
\begin{aligned}
& B_{11} p_{x_{11}}^{y_{11}}+ \dots  +B_{1k_1} p_{x_{1k_1}}^{y_{1k_1}} \leq R_1 \\
& B_{21} p_{x_{21}}^{y_{21}}+ \dots  +B_{2k_2} p_{x_{2k_2}}^{y_{2k_2}} \leq R_2 \\
& \;\;\;\;\;\;\;\;\;\;\;\;\; \dots \;\; \dots \;\; \dots \\
& B_{s1} p_{x_{s1}}^{y_{s1}}+ \dots  +B_{sk_s} p_{x_{sk_s}}^{y_{sk_s}} \leq R_s,
\end{aligned}
\end{equation}
then the Bell inequalities of the scenario $(CX,\Delta_{\zz_m})$ are given by 
%
$$
\begin{aligned}
& B_{11} p_{(c,x_{11})}^{j,y_{11}}+ \dots  +B_{1k_1} p_{(c,x_{1k_1})}^{j,y_{1k_1}} \leq R_1\sum_{y}p_{(c,x_{11})}^{j,y}
, \;\; 0\leq j \leq m-1\\
& B_{21} p_{(c,x_{21})}^{j,y_{21}}+ \dots  +B_{2k_2} p_{(c,x_{2k_2})}^{j,y_{2k_2}} \leq R_2\sum_{y}p_{(c,x_{21})}^{j,y}
, \;\; 0\leq j \leq m-1\\
&  \;\;\;\;\;\;\;\;\;\;\;\;\;        \;\;\;\;\;\;\;\;\;\;\;\;\;\;\;\;\;\;         \;\;\;\;\;\;\;\;\;\;\;\;\; \dots \;\; \dots \;\; \dots \\
& B_{s1} p_{(c,x_{s1})}^{j,y_{s1}}+ \dots  +B_{sk_s} p_{(c,x_{sk_s})}^{j,y_{sk_s}} \leq R_s\sum_{y}p_{(c,x_{s1})}^{j,y}
, \;\; 0\leq j \leq m-1
\end{aligned}
$$
%
\end{prop}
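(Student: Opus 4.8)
The plan is to translate everything through the decomposition (\ref{eq:Decompositionn}) and reduce each cone inequality to the corresponding original inequality evaluated on the component $p^{(j)}$. First I would record the two algebraic identities that follow immediately from the formulas in (\ref{pjlambdajformula}): for every simplex $x$ and every outcome tuple $y$ one has $p_{(c,x)}^{j,y}=\lambda_j\,(p^{(j)})_x^{y}$, and $\sum_{y}p_{(c,x)}^{j,y}=\lambda_j$ independently of the chosen $x$ (by Lemma \ref{lem:sum=sum}). Substituting these into the $(i,j)$ cone inequality rewrites it as
$$
\lambda_j\left(B_{i1}(p^{(j)})_{x_{i1}}^{y_{i1}}+\dots+B_{ik_i}(p^{(j)})_{x_{ik_i}}^{y_{ik_i}}-R_i\right)\le 0 .
$$
Hence when $\lambda_j>0$ the cone inequality is equivalent to the $i$-th original inequality applied to $p^{(j)}$, while when $\lambda_j=0$ both sides vanish (nonnegativity forces each $p_{(c,x)}^{j,y}=0$) and it holds trivially. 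This is the computational heart of the argument and is routine.

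Next I would establish that the listed family characterizes noncontextuality. By Theorem \ref{thm:Decom}(1), $p$ is noncontextual if and only if $p^{(j)}$ is noncontextual for every $j$ with $\lambda_j\neq 0$. By hypothesis the inequalities (\ref{eq:SetofBellineq}) characterize noncontextuality on $(X,\Delta_{\zz_m})$, so such a $p^{(j)}$ is noncontextual exactly when it satisfies all of (\ref{eq:SetofBellineq}). Combining this with the rewriting from the previous step, $p$ is noncontextual if and only if it satisfies all the cone inequalities: for indices $j$ with $\lambda_j>0$ the cone inequality encodes the original inequality for $p^{(j)}$, and for indices with $\lambda_j=0$ the cone inequality is automatic and there is correspondingly no constraint on a (nonexistent) component. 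This yields the characterizing-family requirement of Definition \ref{def:Bellineq}.

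Finally I would check the three pointwise conditions making each cone inequality an honest Bell inequality. Satisfaction by every noncontextual distribution is the forward direction just proved. For saturation and violation I would push witnesses from $X$ into the cone: given a noncontextual $q$ on $(X,\Delta_{\zz_m})$ saturating the $i$-th inequality of (\ref{eq:SetofBellineq}), the distribution with decomposition $\lambda_j=1$ and $p^{(j)}=q$ (equivalently $\beta^{(j)}(q)$ in the notation of Proposition \ref{pro:ThetaCXX}) is noncontextual by Theorem \ref{thm:Decom}(1) and saturates the $(i,j)$ cone inequality by the displayed identity with $\lambda_j=1$; similarly a contextual $q$ violating the original inequality produces, with $\lambda_j=1$, a contextual distribution on $CX$ violating the cone inequality.

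The main obstacle I anticipate is not any single computation but the bookkeeping around the degenerate case $\lambda_j=0$: one must argue that a vanishing weight simultaneously trivializes the cone inequality and removes the corresponding noncontextuality constraint, so that no information is lost or spuriously imposed in the passage through Theorem \ref{thm:Decom}. Care is also needed to confirm that the witnesses built with $\lambda_j=1$ are genuine elements of $\sDist(CX,\Delta_{\zz_m})$, which is exactly what the isomorphism in Proposition \ref{pro:ThetaCXX} guarantees.
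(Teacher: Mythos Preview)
Your proposal is correct and follows essentially the same route as the paper: both arguments reduce each cone inequality to the original inequality on the component $p^{(j)}$ via the identities in (\ref{pjlambdajformula}), invoke Theorem \ref{thm:Decom}(1) for the noncontextuality characterization, and treat the $\lambda_j=0$ case separately as trivial. Your write-up is in fact more thorough than the paper's, which does not explicitly verify the saturation and violation clauses of Definition \ref{def:Bellineq}; your construction of witnesses via $\beta^{(j)}(q)$ with $\lambda_j=1$ fills that gap cleanly.
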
 
\begin{proof}
According to part $(1)$ of Theorem \ref{thm:Decom}, a simplicial distribution $p\colon CX \to D(\Delta_{\zz_m})$ is noncontextual if and only if 
$p^{(j)}$ is noncontextual for every $0\leq j\leq m-1$. Let 
$B_{1} p_{x_{1}}^{y_{1}}+ \dots  +B_{k} p_{x_{k}}^{y_{k}} \leq R$ be one of the inequalities of (\ref{eq:SetofBellineq}). The 
simplicial distribution $p^{(j)}$ satisfies this inequality, so the equations in (\ref{pjlambdajformula}) yields:
$$
B_{1} p_{(c,x_{1})}^{j,y_{1}}+ \dots  +B_{k} p_{(c,x_{k})}^{j,y_{k}} \leq R\sum_{y}p_{(c,x)}^{j,y}
$$
for some $x \in X_n$. To obtain the result we choose $x$ to be $x_1$. Note that if $\lambda_j=0$ then $p_{(c,x)}^{j,y}=0$ for every $x$ in $X$ 
and $y$ in $\Delta_{\zz_m}$. In this case, the inequality holds trivially. 
\end{proof}


%

\begin{example}\label{ex:NewtypeBellineq}
Let $X$ be the measurement space for the CHSH scenario described in Example \ref{ex:CHSHScenarioo}. 
Then $CX$ has four generating $2$-simplices 
$\tau_1,\tau_2,\tau_3,\tau_4$, such that
$$
d_1(\tau_1)=d_2(\tau_2)\;\; , \;\; d_1(\tau_2)=d_2(\tau_3)\;\; ,\;\;d_1(\tau_3)=d_2(\tau_4)\;\; , 
\;\; d_1(\tau_4)=d_2(\tau_1)
$$
Apply Proposition \ref{pro:BellineqCone} to the CHSH inequalities (\ref{eq:CHSHineq}) in order to obtain the Bell inequalities 
of the scenario $(CX,\Delta_{\zz_2})$
$$
\begin{aligned}
0\leq p_{\tau_1}^{000}+p_{\tau_1}^{011}&+p_{\tau_2}^{000}+p_{\tau_2}^{011} + p_{\tau_3}^{000}+p_{\tau_3}^{011}-
p_{\tau_4}^{000}-p_{\tau_4}^{011} \leq 2(p_{\tau_1}^{000}+p_{\tau_1}^{001}+p_{\tau_1}^{010}+p_{\tau_1}^{011})\\
0\leq p_{\tau_1}^{100}+p_{\tau_1}^{111}&+p_{\tau_2}^{100}+p_{\tau_2}^{111} + p_{\tau_3}^{100}+p_{\tau_3}^{111}-
p_{\tau_4}^{100}-p_{\tau_4}^{111} \leq 2(p_{\tau_1}^{100}+p_{\tau_1}^{101}+p_{\tau_1}^{110}+p_{\tau_1}^{111})\\
0\leq p_{\tau_1}^{000}+p_{\tau_1}^{011}&+p_{\tau_2}^{000}+p_{\tau_2}^{011} - p_{\tau_3}^{000}-p_{\tau_3}^{011}+p_{\tau_4}^{000}+p_{\tau_4}^{011} \leq 2(p_{\tau_1}^{000}+p_{\tau_1}^{001}+p_{\tau_1}^{010}+p_{\tau_1}^{011}) \\  
0\leq p_{\tau_1}^{100}+p_{\tau_1}^{111}&+p_{\tau_2}^{100}+p_{\tau_2}^{111} - p_{\tau_3}^{100}-p_{\tau_3}^{111}+p_{\tau_4}^{100}+p_{\tau_4}^{111} \leq 2(p_{\tau_1}^{100}+p_{\tau_1}^{101}+p_{\tau_1}^{110}+p_{\tau_1}^{111}) \\ 
0\leq p_{\tau_1}^{000}+p_{\tau_1}^{011}&-p_{\tau_2}^{000}-p_{\tau_2}^{011} + p_{\tau_3}^{000}+p_{\tau_3}^{011}+
p_{\tau_4}^{000}+p_{\tau_4}^{011} \leq 2(p_{\tau_1}^{000}+p_{\tau_1}^{001}+p_{\tau_1}^{010}+p_{\tau_1}^{011})\\
0\leq p_{\tau_1}^{100}+p_{\tau_1}^{111}&-p_{\tau_2}^{100}-p_{\tau_2}^{111} + p_{\tau_3}^{100}+p_{\tau_3}^{111}+
p_{\tau_4}^{100}+p_{\tau_4}^{111} \leq 2(p_{\tau_1}^{100}+p_{\tau_1}^{101}+p_{\tau_1}^{110}+p_{\tau_1}^{111})\\
0\leq -p_{\tau_1}^{000}-p_{\tau_1}^{011}&+p_{\tau_2}^{000}+p_{\tau_2}^{011} + p_{\tau_3}^{000}+p_{\tau_3}^{011}+
p_{\tau_4}^{000}+p_{\tau_4}^{011} \leq 2(p_{\tau_1}^{000}+p_{\tau_1}^{001}+p_{\tau_1}^{010}+p_{\tau_1}^{011})\\
0\leq -p_{\tau_1}^{100}-p_{\tau_1}^{111}&+p_{\tau_2}^{100}+p_{\tau_2}^{111} + p_{\tau_3}^{100}+p_{\tau_3}^{111}+
p_{\tau_4}^{100}+p_{\tau_4}^{111} \leq 2(p_{\tau_1}^{100}+p_{\tau_1}^{101}+p_{\tau_1}^{110}+p_{\tau_1}^{111})\\
\end{aligned}
$$
\end{example}

\section{Simiplicial distributions on suspension scenarios}\label{sec:Suspentionnn}

In this section, we use the characterization of simplicial distributions on cone scenarios, developed in the preceding section, to obtain a characterization of simplicial distributions on suspension scenarios. We then apply this characterization to detect two classes of contextual vertices on the suspension scenario, induced by vertices on the original scenario.

\subsection{Contextuality on suspension scenarios}
The suspension of a space $X$ is obtained by gluing two copies of the cone of $X$ along $X$. 
\begin{defn}\label{def:suspen}
The \emph{suspension} of $X\in\catsSet$, denoted by $\Sigma X$, is defined to be the following pushout in $\catsSet$:    
\begin{equation}\label{dia:suspen}
\begin{tikzcd}[column sep=huge,row sep=large]  
X \arrow[r,hook,"i_1"] \arrow[d,hook,"i_2"] & CX \arrow[d,hook,"s_1"] \\
CX \arrow[r,hook,"s_2"] & \Sigma X
\end{tikzcd}
\end{equation}
%
\end{defn}
\begin{prop}\label{pro:suspencontex}
Let $X$ be a connected measurement space. Then we have the following commutative diagram in $\catConv$:    
$$
\xymatrix@R =25pt@C=35pt {  
D\left(\catsSet(\Sigma X,\Delta_{\zz_m})\right)  \ar[d]_{} 
\ar[r]^{} \ar[rrd]^>>>>>>>>>>>{\Theta} & A \ar[d]^<<<{\Id\star \dots
\star \Id}
\ar[rrd]^{\Theta\star \dots
\star \Theta} & & \\
A \ar[r]^>>>>>>>>>>>>{\Id\star \dots
\star \Id} \ar[rrd]_{\Theta\star \dots
\star \Theta} & D\left(\catsSet( X,\Delta_{\zz_m})\right)   \ar[rrd]_<<<<<<<<<<<<<<<<{\Theta} & 
\sDist(\Sigma X ,\Delta_{\zz_m}) \ar[d]^<<<{} \ar[r]
&  B \ar[d]^{\Id\star \dots
\star \Id} \\
& & B \ar[r]_>>>>>>>>>>>>{\Id\star \dots
\star \Id} & \sDist(X ,\Delta_{\zz_m})  }
$$
where $A=D\left(\catsSet(X,\Delta_{\zz_m})\right) \star \dots \star D\left(\catsSet( X,\Delta_{\zz_m})\right)$, 
$B=\sDist(X,\Delta_{\zz_m}) \star \dots
\star \sDist(X,\Delta_{\zz_m})$, and the front square is a pullback square.
\end{prop}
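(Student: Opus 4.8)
The plan is to realize $\Sigma X$ as the pushout of $CX \hookleftarrow X \hookrightarrow CX$ from Diagram (\ref{dia:suspen}) and to push this pushout through the two functors $\sDist(-,\Delta_{\zz_m})=\catsSet(-,D(\Delta_{\zz_m}))$ and $D(\catsSet(-,\Delta_{\zz_m}))$, so that the front face lands in the world of simplicial distributions and the back face in the world of sections. Since a simplicial map out of a pushout is the same as a compatible pair of maps out of the two cones, $\catsSet(-,D(\Delta_{\zz_m}))$ turns (\ref{dia:suspen}) into a pullback of underlying sets
$$
\sDist(\Sigma X,\Delta_{\zz_m}) \;\cong\; \sDist(CX,\Delta_{\zz_m}) \times_{\sDist(X,\Delta_{\zz_m})} \sDist(CX,\Delta_{\zz_m}),
$$
the two legs being the restrictions $s_1^\ast,s_2^\ast$ followed by $i^\ast$. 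Because the convex structure on each $\sDist(W,\Delta_{\zz_m})$ is computed pointwise and $\catConv$, being a category of algebras over the distribution monad on $\catSet$, has limits created by the forgetful functor, this is in fact a pullback in $\catConv$.

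First I would identify the legs. By Proposition \ref{pro:ThetaCXX} (equivalently (\ref{eq:Introstar})) there is a convex isomorphism $\sDist(CX,\Delta_{\zz_m}) \cong B$ carrying $p$ to its decomposition $\left(\lra{\lambda_0,p^{(0)}},\dots,\lra{\lambda_{m-1},p^{(m-1)}}\right)$. The crucial point is that under this isomorphism the restriction $i^\ast$ along $i\colon X\hookrightarrow CX$ becomes the codiagonal (fold) map $\Id\star\dots\star\Id\colon B\to\sDist(X,\Delta_{\zz_m})$, i.e.\ the map induced on the coproduct by the identity of each summand. Using $d_0(c,x)=x$, the simplicial identity $p_x=D(d_0)(p_{(c,x)})$, the formula for $\beta^{(j)}$ in Proposition \ref{pro:ThetaCXX}, and (\ref{pjlambdajformula}), marginalizing the first coordinate gives
$$
p_x(a_1,\dots,a_n)=\sum_{a_0\in\zz_m}p_{(c,x)}(a_0,a_1,\dots,a_n)=\sum_{j=0}^{m-1}\lambda_j\,p^{(j)}_x(a_1,\dots,a_n),
$$
so $i^\ast(p)=p|_X=\sum_{j}\lambda_j p^{(j)}$; this is exactly the image of $\left(\lra{\lambda_0,p^{(0)}},\dots\right)$ under the fold map, since every element of $B$ equals $\sum_j\lambda_j\kappa_j(p^{(j)})$. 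Substituting this identification into the pullback above produces the front square $\sDist(\Sigma X,\Delta_{\zz_m})\cong B\times_{\sDist(X,\Delta_{\zz_m})}B$, which is therefore a pullback.

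Next I would check that the whole cube commutes. The back square is the identical construction with $\catsSet(-,\Delta_{\zz_m})$ replacing $\sDist$: the cone isomorphism in the top row of (\ref{eq:TheataConeeeeee}) gives $D(\catsSet(CX,\Delta_{\zz_m}))\cong A$, and the two composites $D(\catsSet(\Sigma X,\Delta_{\zz_m}))\to A\to D(\catsSet(X,\Delta_{\zz_m}))$ are restriction along $s_1 i_1$ and along $s_2 i_2$; they coincide because $s_1 i_1=s_2 i_2$ is the inclusion of the equatorial $X$ in the pushout (\ref{dia:suspen}). The same identity makes the front square commute. Finally the four $\Theta$-faces linking the section world to the distribution world commute: for a simplicial map of measurement spaces the square relating $\Theta$ to the induced restriction commutes by the naturality of $\Theta$ (Proposition \ref{pro:Thetatrans}), and for the cone legs this is precisely the content of (\ref{eq:TheataConeeeeee}), which identifies $\Theta_{CX}$ with $\Theta_X\star\dots\star\Theta_X$.

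The main obstacle is the identification carried out in the second paragraph: showing that the purely geometric restriction-to-the-equator map corresponds, under the cone decomposition, exactly to the convex codiagonal on the join. Everything else reduces to the formal ``maps-out-of-a-pushout'' bookkeeping together with the fact that limits of convex sets are created on underlying sets.
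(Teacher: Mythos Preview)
Your proposal is correct and follows essentially the same route as the paper: start from the pushout (\ref{dia:suspen}), turn it into a pullback on the $\sDist$ side, use naturality of $\Theta$ for the connecting faces, and invoke the cone decomposition (\ref{eq:TheataConeeeeee}) to replace the $CX$-vertices by joins. Your explicit verification that $i^\ast$ becomes the fold map $\Id\star\dots\star\Id$ under the cone isomorphism is more detailed than the paper, which simply records this identification as Diagram (\ref{eq:TheataConeeCompose}) obtained by composing with (\ref{eq:TheataConeeeeee}).
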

\begin{proof}
Because of the naturality of $\Theta$, the pushout square in (\ref{dia:suspen}) induces the following commutative diagram:
%
$$
\xymatrix@R =25pt@C=35pt {  
D\left(\catsSet(\Sigma X,\Delta_{\zz_m})\right)  \ar[d]_{} 
\ar[r]^{} \ar[rrd]^>>>>>>>>>>>{\Theta_{\Sigma X,\Delta_{\zz_m}}} & D\left(\catsSet(CX,\Delta_{\zz_m})\right) \ar[d]^<<<{D(i_1^{\ast})}
\ar[rrd]^{\Theta_{CX,\Delta_{\zz_m}}} & & \\
D\left(\catsSet(CX,\Delta_{\zz_m})\right) \ar[r]^{D(i_2^{\ast})} \ar[rrd]_{\Theta_{CX,\Delta_{\zz_m}}} & D\left(\catsSet( X,\Delta_{\zz_m})\right)   \ar[rrd]_<<<<<<<<<<<<<<<<{{\Theta_{X,\Delta_{\zz_m}}}} & 
\sDist(\Sigma X ,\Delta_{\zz_m}) \ar[d]^<<<{} \ar[r]
&  \sDist(CX,\Delta_{\zz_m}) \ar[d]^{i_1^{\ast}} \\
& & \sDist(CX,\Delta_{\zz_m}) \ar[r]_{i_2^{\ast}} & \sDist(X ,\Delta_{\zz_m})  }
$$
where the front square is a pullback. Composing the two sides of this diagram 
with Diagram (\ref{eq:TheataConeeeeee}) yields: 
\begin{equation}\label{eq:TheataConeeCompose}
\xymatrix@R =25pt@C=35pt {
A \ar@/_{5.1pc}/[dd]_{\Id\star \dots
\star \Id}
\ar[rr]^{\Theta \star \dots \star \Theta} \ar[d]^{\cong} && B
 \ar[d]^{\cong}_{\beta^{(0)} \star 
\dots \star \beta^{(m-1)}} \ar@/^{4.5pc}/[dd]^{\Id\star \dots
\star \Id} \\
 D\left(\catsSet(C X,\Delta_{\zz_m} )\right) 
\ar[rr]^{\Theta_{C X,\Delta_{\zz_m}}} \ar[d]^{D(i^{\ast})}
  && \sDist(C X,\Delta_{\zz_m}) \ar[d]^{i^{\ast}}\\
  D\left(\catsSet(X,\Delta_{\zz_m} )\right) 
\ar[rr]^{\Theta_{X,\Delta_{\zz_m}}}
  && \sDist(X,\Delta_{\zz_m})
}
\end{equation}
%
%
Then we obtain the result using the above two diagrams.
\end{proof}
\begin{remark}\label{rem:distinsuspen}
By Proposition \ref{pro:suspencontex}, a simplicial distribution 
$p$ on $(\Sigma X,\Delta_{\zz_m})$ can be written uniquely as follows:
\begin{equation}\label{eq:DecomonSuspen}
p=\left(\lra{\lambda_{0},p^{up,0}} ,\dots,\lra{\lambda_{m-1},p^{up,m-1}};\lra{\mu_{0},p^{down,0}},\dots,\lra{\mu_{m-1},p^{down,m-1}}\right)
\end{equation}
where
\begin{itemize}
    \item $p^{up,j},p^{down,j} \in \sDist(X,\Delta_{\zz_m})$  \text{for every} $0\leq j \leq m-1$,
    \item  $\sum_{j=0}^{m-1}{\lambda_j}=1$ and $\sum_{j=0}^{m-1}{\mu_j}=1$,
    \item $\lambda_{0}p^{up,0}+\dots+\lambda_{m-1}p^{up,m-1}
=\mu_{0}p^{down,0}+\dots+\mu_{m-1}p^{down,m-1}$.
\end{itemize} 
\end{remark}
\begin{prop}
A simplicial distribution $p$ on $(\Sigma X,\Delta_{\zz_m})$ (as in Equation (\ref{eq:DecomonSuspen})) is noncontextual if and only if for every $0\leq j \leq m-1$ there is 
$Q^{up,j},Q^{down,j} \in D(\catsSet(X,\Delta_{\zz_m}))$  such that 
\begin{enumerate}
    \item $\Theta(Q^{up,j})=p^{up,j}$ and $\Theta(Q^{down,j})=p^{down,j}$ for every $0\leq j \leq m-1$,
    \item $\lambda_{0}Q^{up,0}+\dots+\lambda_{m-1}Q^{up,m-1}
=\mu_{0}Q^{down,0}+\dots+\mu_{m-1}Q^{down,m-1}$.
\end{enumerate}
\end{prop}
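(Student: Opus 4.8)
The plan is to extract everything from the commutative cube of Proposition \ref{pro:suspencontex}, whose front face is a pullback, after making the two families of vertical maps explicit. Recall that the pushout (\ref{dia:suspen}) gives, for any outcome space, a pullback of sets $\catsSet(\Sigma X,\Delta_{\zz_m}) \cong \catsSet(CX,\Delta_{\zz_m}) \times_{\catsSet(X,\Delta_{\zz_m})} \catsSet(CX,\Delta_{\zz_m})$; concretely a map $\Sigma X \to \Delta_{\zz_m}$ is a pair $(\varphi_1,\varphi_2)$ of cone maps agreeing on the equatorial copy of $X$. Under the isomorphisms of Proposition \ref{pro:ThetaCXX}, each restriction-to-equator map $\catsSet(CX,\Delta_{\zz_m}) \to \catsSet(X,\Delta_{\zz_m})$ becomes the codiagonal $\Id \star \dots \star \Id$, which on the join sends $(\lra{\lambda_0,q^{(0)}},\dots,\lra{\lambda_{m-1},q^{(m-1)}})$ to $\sum_{j} \lambda_j q^{(j)}$; and $\Theta \star \dots \star \Theta$ acts coordinatewise, sending the same tuple to $(\lra{\lambda_0,\Theta(q^{(0)})},\dots,\lra{\lambda_{m-1},\Theta(q^{(m-1)})})$. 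Throughout, $p$ being noncontextual means $p = \Theta_{\Sigma X,\Delta_{\zz_m}}(P)$ for some $P \in D(\catsSet(\Sigma X,\Delta_{\zz_m}))$, and by Remark \ref{rem:distinsuspen} the data $(\lra{\lambda_j,p^{up,j}};\lra{\mu_j,p^{down,j}})$ is exactly the image of $p$ under the front pullback isomorphism $\sDist(\Sigma X,\Delta_{\zz_m}) \cong B \times_{\sDist(X,\Delta_{\zz_m})} B$.

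For the forward direction, suppose $p = \Theta_{\Sigma X,\Delta_{\zz_m}}(P)$. Let $Q^{up}$ and $Q^{down}$ be the two marginals of $P$ onto the upper and lower copies of $\catsSet(CX,\Delta_{\zz_m})$, and decompose them in $A = D(\catsSet(X,\Delta_{\zz_m}))^{\star m}$ as $Q^{up}=(\lra{\lambda_j,Q^{up,j}})_j$ and $Q^{down}=(\lra{\mu_j,Q^{down,j}})_j$. Chasing $P$ along the top and left faces of the cube and using that $\Theta \star \dots \star \Theta$ is coordinatewise, the images of $Q^{up},Q^{down}$ in $B$ are $(\lra{\lambda_j,\Theta(Q^{up,j})})_j$ and $(\lra{\mu_j,\Theta(Q^{down,j})})_j$; comparing with the image of $p$ and invoking uniqueness of the join decomposition (for positive weights) yields $\Theta(Q^{up,j})=p^{up,j}$ and $\Theta(Q^{down,j})=p^{down,j}$, which is (1). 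For (2), the back square commutes because both composites $D(\catsSet(\Sigma X,\Delta_{\zz_m})) \to D(\catsSet(X,\Delta_{\zz_m}))$ factor through the single equatorial restriction (as $s_1 i_1 = s_2 i_2$ in (\ref{dia:suspen})); applying the codiagonal description of the two vertical maps to $Q^{up}$ and $Q^{down}$ turns this equality into $\sum_j \lambda_j Q^{up,j} = \sum_j \mu_j Q^{down,j}$.

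For the converse, the key point — and the main obstacle — is that the back square is \emph{not} a pullback: the distribution functor $D$ does not preserve the pullback presentation of $\catsSet(\Sigma X,\Delta_{\zz_m})$, so there is no formal lift of the pair $(Q^{up},Q^{down})$ to $D(\catsSet(\Sigma X,\Delta_{\zz_m}))$. Instead I will build $P$ by hand as a coupling. Viewing $Q^{up}=(\lra{\lambda_j,Q^{up,j}})_j$ and $Q^{down}=(\lra{\mu_j,Q^{down,j}})_j$ inside $A \cong D(\catsSet(CX,\Delta_{\zz_m}))$, condition (2) says exactly that their common equatorial marginal $\nu \in D(\catsSet(X,\Delta_{\zz_m}))$ coincides, where $\nu = \sum_j \lambda_j Q^{up,j} = \sum_j \mu_j Q^{down,j}$. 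Writing $r$ for the restriction $\catsSet(CX,\Delta_{\zz_m}) \to \catsSet(X,\Delta_{\zz_m})$, define $P$ on the fiber product by $P(\varphi_1,\varphi_2) = Q^{up}(\varphi_1)\,Q^{down}(\varphi_2)/\nu(r(\varphi_1))$ whenever $r(\varphi_1)=r(\varphi_2)$ and $\nu(r(\varphi_1))\neq 0$, and $0$ otherwise. A direct check shows $P$ has finite support, sums to $1$ (summing first over the fibers of $r$), and marginalizes to $Q^{up}$ and $Q^{down}$; since it is supported on pairs agreeing on $X$, it is a genuine element of $D(\catsSet(\Sigma X,\Delta_{\zz_m}))$.

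It remains to identify $\Theta_{\Sigma X,\Delta_{\zz_m}}(P)$ with $p$. By commutativity of the top and left faces of the cube together with (1), the images of $\Theta_{\Sigma X,\Delta_{\zz_m}}(P)$ in the two copies of $B$ are $(\lra{\lambda_j,p^{up,j}})_j$ and $(\lra{\mu_j,p^{down,j}})_j$, which are precisely the two components of $p$ in Remark \ref{rem:distinsuspen}. Since the front square is a pullback, an element of $\sDist(\Sigma X,\Delta_{\zz_m})$ is determined by its images in the two copies of $B$, so $\Theta_{\Sigma X,\Delta_{\zz_m}}(P)=p$ and $p$ is noncontextual. The only points needing care are the degenerate indices with $\lambda_j=0$ or $\mu_j=0$, where $p^{up,j}$ (resp. $p^{down,j}$) is the basepoint $\bullet$ and the corresponding $Q^{up,j}$ (resp. $Q^{down,j}$) may be chosen arbitrarily without affecting the weighted sums; these contribute nothing to $\nu$ or to the coupling.
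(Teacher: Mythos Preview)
Your proof is correct and follows essentially the same route as the paper: both arguments work through the cube of Proposition~\ref{pro:suspencontex}, translating noncontextuality on $\Sigma X$ into matching data on the two cones via the join decomposition of Proposition~\ref{pro:ThetaCXX}. The difference is one of packaging. The paper invokes \cite[Lemma~4.5]{OkayQuan} to assert that $p$ is noncontextual iff there exist $Q,Q'\in D(\catsSet(CX,\Delta_{\zz_m}))$ with $\Theta(Q)=s_1^\ast(p)$, $\Theta(Q')=s_2^\ast(p)$, and $D(i_1^\ast)(Q)=D(i_2^\ast)(Q')$, and then simply reads off conditions (1) and (2) from Diagram~(\ref{eq:TheataConeeCompose}). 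You instead unpack that cited lemma in place: the forward direction is the same marginalisation argument, and for the converse you build the coupling $P(\varphi_1,\varphi_2)=Q^{up}(\varphi_1)Q^{down}(\varphi_2)/\nu(r(\varphi_1))$ explicitly and verify its marginals. This makes your argument self-contained at the cost of some length; it also makes the key obstruction (that the back square is not a pullback because $D$ does not preserve pullbacks) visible, which the paper's citation hides. One minor notational point: when you write $Q^{up}=(\lra{\lambda_j,Q^{up,j}})_j$ you are reusing the symbols $\lambda_j,\mu_j$ before establishing that the join weights of $Q^{up},Q^{down}$ agree with those of $p$; this is harmless since the very next sentence justifies it via uniqueness, but it would read more cleanly to introduce temporary weights and then identify them.
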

\begin{proof}
By \cite[Lemma 4.5]{OkayQuan} $p$ is noncontextual if and only if there is $Q,Q' \in D\left(\catsSet(CX,\Delta_{\zz_m})\right)$ 
such that $\Theta(Q)= s_1^{\ast}(p)$, $\Theta(Q')=s_2^{\ast}(p)$, and $D(i_1^\ast)(Q)=D(i_2^\ast)(Q')$ (see Diagram 
(\ref{dia:suspen})). Now, Diagram (\ref{eq:TheataConeeCompose}) implies that the mentioned above is equivalent to the existence of $Q^{up,j},Q^{down,j} \in D(\catsSet(X,\Delta_{\zz_m}))$ for 
every $0\leq j \leq m-1$ satisfying $(1)$ and $(2)$.
%
\end{proof}

\subsection{Complete collections of deterministic distributions}

In this section, we will write $L^{(3)}$ for the line with three edges $\sigma_1$, $\sigma_2$, and $\sigma_3$ (see Definition \ref{def:lineee}). For simplicity, we assume that $d_{0}(\sigma_1)=d_1(\sigma_2)$ and $d_{0}(\sigma_2)=d_1(\sigma_3)$.

\begin{defn}\label{def:ComplColle}
A \emph{complete collection of deterministic distributions} is 
a set $\set{\varphi^{i,j}:~0\leq i,j \leq m-1}$ of maps in $\catsSet(L^{(3)},\Delta_{\zz_m})$ such that 
\begin{enumerate}
    \item $\set{\varphi_{\sigma_1}^{i,j}:~  0\leq i,j \leq m-1}=\zz_m^2$.
    \item There is matrices $A=\begin{pmatrix} 
	a_{11} & a_{12}  \\
 a_{21} & a_{22} \\
	\end{pmatrix}$ and $B=\begin{pmatrix} 
	b_{11} & b_{12}  \\
 b_{21} & b_{22} \\
	\end{pmatrix}$ in $\zz_m^{2\times 2}$ such that $\varphi^{i,j}_{\sigma_2}=(i,j)A^T$ and 
    $\varphi^{i,j}_{\sigma_3}=(i,j)B^T$.
    \item There is $h\in \zz_m$ such that 
    $$
    gcd(a_{11}a_{22}-a_{12}a_{21},m)=gcd(b_{11}b_{22}-b_{12}b_{21},m)=gcd(a_{12}(b_{21}-b_{11}h)-a_{11}(b_{22}-b_{12}h),m)=1
    $$
\end{enumerate}
\end{defn}
\begin{example}\label{ex:compexam}
Let $m$ be an odd number. By setting 
$$
\varphi^{i,j}_{\sigma_1}=(i,i+j) \,\, ,\,\,\varphi^{i,j}_{\sigma_2}=(i+j,i+2j) \,\, , \,\, \varphi^{i,j}_{\sigma_3}=(i+2j,i+3j) 
$$
for $0\leq i,j \leq m-1$, we get a complete collection of deterministic distributions. In this case,  
$A=\begin{pmatrix} 
	1 & 1  \\
 1 & 2 \\
	\end{pmatrix}$ and $B=\begin{pmatrix} 
	1 & 2  \\
 1 & 3 \\
	\end{pmatrix}$,
so $a_{11}a_{22}-a_{12}a_{21}=1$, $b_{11}b_{22}-b_{12}b_{21}=1$, and $a_{12}b_{21}-a_{11}b_{22}=m-2$. 
\end{example}
For the next propositions, we recall the action of the group $\catsSet(X,\Delta_{\zz_m})$ on $\sDist(X,\Delta_{\zz_m})$ from Section \ref{sec:Monoidalstucture}. It is defined as follows:
$$
({\varphi}\cdot q )_{x}(y)=q_x(y-{\varphi}_x)   
$$
\begin{lemma}\label{lem:psih}
Given a complete collection of deterministic distributions $\set{\varphi^{i,j}:~0\leq i,j \leq m-1}$ as in Definition 
\ref{def:ComplColle}, and let 
$\psi^h\colon L^{(3)} \to \Delta_{\zz_m}$ be a simplicial map which defined by setting $\psi^h_{\sigma_1}=(0,0)$, $\psi^h_{\sigma_2}=(0,1)$, and 
$\psi^h_{\sigma_3}=(1,h)$. If for $\lambda_{ij},\mu_{ij} \in [0,1]$ with $\sum_{0\leq i,j \leq m-1}^m{\lambda_{ij}}=\sum_{0\leq i,j \leq m-1}^m{\mu_{ij}}=1$ we have
\begin{equation}\label{eq:psivarphi}
\sum_{0\leq i,j \leq m-1}\lambda_{ij}\delta^{\varphi^{i,j}}=\sum_{0\leq i,j \leq m-1}\mu_{ij}\psi^h \cdot \delta^{\varphi^{i,j}}   
\end{equation}
then 
$\lambda_{ij}=\mu_{ij}=\frac{1}{m^2}$ for every $0 \leq i,j \leq m-1$.
\end{lemma}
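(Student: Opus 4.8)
The plan is to restrict the identity (\ref{eq:psivarphi}) to the three generating $1$-simplices $\sigma_1,\sigma_2,\sigma_3$ of $L^{(3)}$, turning a single equation between simplicial distributions into three equations between ordinary distributions on $\zz_m^2$, and then to extract linear information from each. First I record that the monoid action unwinds to a translation: from $(\varphi\cdot q)_x(y)=q_x(y-\varphi_x)$ one reads off $\psi^h\cdot\delta^{\varphi^{i,j}}=\delta^{\psi^h+\varphi^{i,j}}$, where $(\psi^h+\varphi^{i,j})_{\sigma_k}=\psi^h_{\sigma_k}+\varphi^{i,j}_{\sigma_k}$. Since a deterministic distribution satisfies $(\delta^{\varphi})_{\sigma_k}(y)=1$ exactly when $y=\varphi_{\sigma_k}$, evaluating both sides of (\ref{eq:psivarphi}) at $\sigma_k$ produces, for each $k$, an equality of distributions on $\zz_m^2$ whose atoms are recorded by the values $\varphi^{i,j}_{\sigma_k}$. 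A simplicial distribution on $L^{(3)}$ is determined by its restrictions to the three generating edges, so these three equalities are equivalent to (\ref{eq:psivarphi}).

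The $\sigma_1$-evaluation is immediate: because $\psi^h_{\sigma_1}=(0,0)$, both sides place mass $\lambda_{ij}$, respectively $\mu_{ij}$, at the point $\varphi^{i,j}_{\sigma_1}$, and condition (1) says $(i,j)\mapsto\varphi^{i,j}_{\sigma_1}$ is a bijection onto $\zz_m^2$; comparing the mass at each point gives $\lambda_{ij}=\mu_{ij}$ for all $i,j$. I then feed $\lambda_{ij}=\mu_{ij}$ into the $\sigma_2$- and $\sigma_3$-evaluations. At $\sigma_2$, using $\varphi^{i,j}_{\sigma_2}=(i,j)A^T$ and $\psi^h_{\sigma_2}=(0,1)$, the identity reads $\sum_{ij}\lambda_{ij}\delta^{(i,j)A^T}=\sum_{ij}\lambda_{ij}\delta^{(0,1)+(i,j)A^T}$. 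Since $\gcd(\det A,m)=1$ the map $(i,j)\mapsto(i,j)A^T$ is a bijection of $\zz_m^2$, so this says precisely that the function $(i,j)\mapsto\lambda_{ij}$ is invariant under translation by the vector $u_2=(0,1)(A^{-1})^T$. The identical argument at $\sigma_3$, using $\gcd(\det B,m)=1$ and $\psi^h_{\sigma_3}=(1,h)$, shows $\lambda$ is invariant under translation by $u_3=(1,h)(B^{-1})^T$.

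It remains to show that $u_2$ and $u_3$ generate $\zz_m^2$, for then $\lambda$ is constant on $\zz_m^2$ and, summing to $1$ over the $m^2$ indices, equals $1/m^2$ everywhere; combined with $\lambda_{ij}=\mu_{ij}$ this is the claim. Two vectors generate $\zz_m^2$ exactly when the determinant of the matrix with rows $u_2,u_3$ is a unit modulo $m$. Computing $u_2=\tfrac{1}{\det A}(-a_{12},a_{11})$ and $u_3=\tfrac{1}{\det B}(b_{22}-hb_{12},\,hb_{11}-b_{21})$ yields the key identity
$$\det\begin{pmatrix}u_2\\u_3\end{pmatrix}=\frac{a_{12}(b_{21}-b_{11}h)-a_{11}(b_{22}-b_{12}h)}{\det A\,\det B}.$$
By condition (3) the numerator is coprime to $m$, while $\det A$ and $\det B$ are units modulo $m$; hence the determinant is a unit and $u_2,u_3$ generate $\zz_m^2$.

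The main obstacle is not an isolated hard step but keeping two pieces of bookkeeping exact. First, the $\sigma_2$- and $\sigma_3$-equalities give invariance in the $(i,j)A^T$- and $(i,j)B^T$-coordinates, and one must translate this back into a translation invariance of the index function $\lambda$ itself, so that the operative directions are $u_2,u_3$ rather than the raw shifts $(0,1),(1,h)$. Second, one must verify that the determinant of $[u_2;u_3]$ collapses to exactly the quantity in condition (3). Getting these right is what reveals why the three $\gcd$ hypotheses are the precise content of the lemma: the first two make $A,B$ invertible (so $\sigma_2,\sigma_3$ produce genuine translation invariances), and the third makes the two translation directions span $\zz_m^2$.
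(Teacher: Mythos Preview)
Your proof is correct and follows essentially the same approach as the paper: restrict to $\sigma_1$ to get $\lambda_{ij}=\mu_{ij}$, restrict to $\sigma_2$ and $\sigma_3$ to obtain translation invariance of $\lambda$ by the vectors $u_2=(0,1)(A^{-1})^T$ and $u_3=(1,h)(B^{-1})^T$, and then use condition (3) to show that $u_2,u_3$ generate $\zz_m^2$. The only cosmetic difference is that the paper writes the relations as $\lambda_{ij}=\mu_{i+a,j+b}$ and $\lambda_{ij}=\mu_{i+c,j+d}$ (with $(a,b)=-u_2$, $(c,d)=-u_3$) before invoking $\lambda=\mu$, whereas you substitute $\lambda=\mu$ immediately and phrase the result directly as translation invariance; the determinant computation is identical.
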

\begin{proof}
Since $\psi^h_{\sigma_1}=(0,0)$, by substituting $\sigma_1$ in Equation (\ref{eq:psivarphi}) we get that 
$$
\sum_{0\leq i,j \leq m-1}\lambda_{ij}\delta^{\varphi_{\sigma_1}^{i,j}}=
\sum_{0\leq i,j \leq m-1} \mu_{ij}\delta^{\varphi^{i,j}_{\sigma_1}}
$$ 
Therefore, using condition $(1)$ in Definition \ref{def:ComplColle} we conclude that 
\begin{equation}\label{eq:lamdamu}
\lambda_{ij}=\mu_{ij} \;\; \text{for every} \,\, 0\leq i,j \leq m-1
\end{equation}
Now, since $\psi^h_{\sigma_2}=(0,1)$, substituting $\sigma_2$ in Equation (\ref{eq:psivarphi}) yields
\begin{equation}\label{eq:sumsigma2}
\sum_{0\leq i,j \leq m-1}\lambda_{ij}\delta^{\varphi_{\sigma_2}^{i,j}}=
\sum_{0\leq i,j \leq m-1} \mu_{ij}\delta^{\varphi^{i,j}_{\sigma_2}+(0,1)}
\end{equation}
Remember that by condition $(2)$ in Definition \ref{def:ComplColle}, we have $\varphi^{i,j}_{\sigma_2}=(i,j)A^T$. By substituting this value 
in Equation (\ref{eq:sumsigma2}) we get that $\lambda_{ij}=\mu_{i'j'}$ for $(i',j')$ that satisfying  
$(i,j)A^T=(i',j')A^T+(0,1)$. Because $A$ is invertible we obtain 
\begin{equation}\label{eq:lambmuab}
\lambda_{ij}=\mu_{ij-(0,1)(A^{-1})^T}=\mu_{i+a,j+b} \;\; \text{where} \,\, (a,b)=(a_{11}a_{22}-a_{12}a_{21})^{-1}(a_{12},-a_{11})
\end{equation}
Similarly, by substituting $\sigma_3$ in Equation (\ref{eq:psivarphi}) we obtain 
\begin{equation}\label{eq:lambmucd}
\lambda_{ij}=\mu_{ij-(1,h)(B^{-1})^T}=\mu_{i+c,j+d} \;\; \text{where} \,\, 
(c,d)=(b_{11}b_{22}-b_{12}b_{21})^{-1}(hb_{12}-b_{22},b_{21}-b_{11}h)
\end{equation}
By Equations (\ref{eq:lamdamu}), (\ref{eq:lambmuab}), and (\ref{eq:lambmucd}) we conclude that 
$\lambda_{00}=\lambda_{ka+sc,kb+sd}$ for every $0 \leq k,s \leq m-1$. In addition, by condition $(3)$ in 
Definition \ref{def:ComplColle} the following number 
$$
det\begin{pmatrix} 
	a & b  \\
 c & d \\
	\end{pmatrix}
 =(a_{11}a_{22}-a_{12}a_{21})^{-1}(b_{11}b_{22}-b_{12}b_{21})^{-1}\left(a_{12}(b_{21}-b_{11}h)-a_{11}(b_{22}-hb_{12})\right) 
$$
is relatively prime to $m$. Thus $\lambda_{00}=\lambda_{ij}$ for every $0 \leq i,j \leq m-1$. Since $\sum_{0\leq i,j \leq m-1}{\lambda_{ij}}=1$, we obtain the desired result.
\end{proof}
\begin{prop}\label{pro:ConVertResult1}
Let $X$ be a finitely generated, connected simplicial set. Given a set of vertices $\set{q^{i,j}:~0\leq i,j \leq m-1}$ in $\sDist(X,\Delta_{\zz_m})$, and a line $L=L^{(3)} \subseteq X$ 
such that 
\begin{itemize}
    \item For every $0\leq j \leq m-1$ the set of vertices $\set{q^{i,j}:~0\leq i \leq m-1}$ is closed (see Definition \ref{def:closeVert}).
    \item $q^{i,j}|_{L}=\delta^{\varphi^{i,j}}$, where 
    $\set{\varphi^{i,j}:~0\leq i,j \leq m-1}$ is a complete collection of deterministic distributions as 
    in Definition \ref{def:ComplColle}.
    \item There is $\psi \in \catsSet(X,\Delta_{\zz_m})$ such that 
    $\sum_{0\leq i,j \leq m-1} \frac{1}{m^2} q^{i,j}=\psi \cdot \sum_{0\leq i,j \leq m-1} \frac{1}{m^2} q^{i,j}$ and
    $\psi|_{L}=\psi^h$ (see Lemma \ref{lem:psih}).
\end{itemize} 
Let $p^j=\frac{1}{m}q^{0,j}+\dots+\frac{1}{m}q^{m-1,j}$, then the simplicial distribution 
\begin{equation}\label{eq:FirstkindVert}   
\left(\lra{\frac{1}{m},p^{0}} ,\dots,\lra{\frac{1}{m},p^{m-1}};
\lra{\frac{1}{m},\psi\cdot p^{0}} ,\dots,\lra{\frac{1}{m},\psi\cdot p^{m-1}}\right)
\end{equation}
is a contextual vertex in $\sDist(\Sigma X ,\Delta_{\zz_m})$ (see Remark \ref{rem:distinsuspen}).
\end{prop}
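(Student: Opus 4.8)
The plan is to prove the claim in two stages: first that the distribution $p$ in (\ref{eq:FirstkindVert}) is a vertex of $\sDist(\Sigma X,\Delta_{\zz_m})$, and then that it is not deterministic. Since the deterministic distributions are the only noncontextual vertices, these two facts together force $p$ to be a contextual vertex. Before either stage I note that $p$ is a genuine element of $\sDist(\Sigma X,\Delta_{\zz_m})$: the data in (\ref{eq:FirstkindVert}) meet the requirements of Remark \ref{rem:distinsuspen} because the third hypothesis gives $\sum_j\frac1m p^j=\frac{1}{m^2}\sum_{i,j}q^{i,j}=\psi\cdot\frac1{m^2}\sum_{i,j}q^{i,j}=\sum_j\frac1m\,\psi\cdot p^j$, which is exactly the required agreement of the two restrictions to $X$.

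For the vertex stage I would apply Proposition \ref{thm:vertex on union} to the decomposition $\Sigma X=s_1(CX)\cup s_2(CX)$, whose intersection is the glued copy of $X$ (Definition \ref{def:suspen}); here $\Sigma X$ is finitely generated because $X$ is. By Remark \ref{rem:distinsuspen} the restriction $s_1^{\ast}(p)$ is the cone distribution $(\lra{\frac1m,p^{0}},\dots,\lra{\frac1m,p^{m-1}})$ and $s_2^{\ast}(p)$ is $(\lra{\frac1m,\psi\cdot p^{0}},\dots,\lra{\frac1m,\psi\cdot p^{m-1}})$. Using that each family $\set{q^{i,j}}_i$ is closed, so that $\Vsupp(p^j)=\set{q^{0,j},\dots,q^{m-1,j}}$ by Definition \ref{def:closeVert}, Proposition \ref{pro:VsuppDecompo} gives $\Vsupp(s_1^{\ast}(p))=\set{\kappa_j(q^{i,j})}_{i,j}$; combined with Proposition \ref{pr:psiVsupp}, which yields $\Vsupp(\psi\cdot p^j)=\set{\psi\cdot q^{i,j}}_i$, it gives $\Vsupp(s_2^{\ast}(p))=\set{\kappa_j(\psi\cdot q^{i,j})}_{i,j}$.

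The heart of the argument is the uniqueness clause of Proposition \ref{thm:vertex on union}. I would take an arbitrary $p'\in\sDist(\Sigma X,\Delta_{\zz_m})$ with $s_1^{\ast}(p')\in\conv(\Vsupp(s_1^{\ast}(p)))$ and $s_2^{\ast}(p')\in\conv(\Vsupp(s_2^{\ast}(p)))$, and write $s_1^{\ast}(p')=\sum_{i,j}c_{ij}\kappa_j(q^{i,j})$ and $s_2^{\ast}(p')=\sum_{i,j}d_{ij}\kappa_j(\psi\cdot q^{i,j})$ with convex weights $\sum_{i,j}c_{ij}=\sum_{i,j}d_{ij}=1$. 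Unwinding the join and cone structure (the restriction of a cone distribution to the base $X$ is the mixture of its cone components, cf. the consistency condition in Remark \ref{rem:distinsuspen}), the common restriction of $p'$ to $X$ equals $\sum_{i,j}c_{ij}q^{i,j}$ from the first cone and $\sum_{i,j}d_{ij}\,\psi\cdot q^{i,j}$ from the second, so these two must coincide in $\sDist(X,\Delta_{\zz_m})$. Restricting this identity to the line $L=L^{(3)}$ and using $q^{i,j}|_{L}=\delta^{\varphi^{i,j}}$ together with $(\psi\cdot q^{i,j})|_{L}=\psi^h\cdot\delta^{\varphi^{i,j}}$ (valid since $\psi|_{L}=\psi^h$) yields $\sum_{i,j}c_{ij}\delta^{\varphi^{i,j}}=\sum_{i,j}d_{ij}\,\psi^h\cdot\delta^{\varphi^{i,j}}$. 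This is precisely the hypothesis of Lemma \ref{lem:psih}, so $c_{ij}=d_{ij}=\frac1{m^2}$ for all $i,j$. Feeding these weights back through the cone decomposition forces every cone coefficient to be $\frac1m$ and the cone components to be $\frac1m\sum_i q^{i,j}=p^j$ and $\psi\cdot p^j$; hence $s_1^{\ast}(p')=s_1^{\ast}(p)$ and $s_2^{\ast}(p')=s_2^{\ast}(p)$, and since the front square of Proposition \ref{pro:suspencontex} is a pullback, $p'$ is determined by the pair $(s_1^{\ast}(p'),s_2^{\ast}(p'))$, so $p'=p$. Thus $p$ is a vertex.

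Finally, for contextuality it suffices to observe that $p$ is not deterministic: if it were, then $s_1^{\ast}(p)$ would be deterministic and hence a vertex of $\sDist(CX,\Delta_{\zz_m})$, but by part $(2)$ of Theorem \ref{thm:Decom} such a vertex has a single cone coefficient equal to $1$, contradicting $\lambda_j=\frac1m$ for every $j$ when $m\ge 2$. As the deterministic distributions are the only noncontextual vertices, the non-deterministic vertex $p$ is contextual. I expect the main obstacle to be the uniqueness step: correctly identifying the vertex supports of the two cone restrictions and converting membership in their convex hulls into a single identity on $X$ that, after restriction to $L$, is pinned down by the rigidity of Lemma \ref{lem:psih}; the bookkeeping translating the join coordinates on the cones into the plain convex weights $c_{ij},d_{ij}$ is where care will be needed.
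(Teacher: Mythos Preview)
Your proposal is correct and follows essentially the same route as the paper: verify well-definedness via the third hypothesis, compute the vertex supports of the two cone restrictions using Proposition \ref{pro:VsuppDecompo}, Definition \ref{def:closeVert}, and Proposition \ref{pr:psiVsupp}, then force the convex weights to be $\frac{1}{m^2}$ by restricting the compatibility identity to $L$ and invoking Lemma \ref{lem:psih}, and conclude via Proposition \ref{thm:vertex on union}. The only cosmetic difference is that the paper works directly with a pair $(p',p'')$ on the two cone pieces satisfying the pullback compatibility, whereas you start from a single $p'$ on $\Sigma X$ and restrict; your contextuality argument via part (2) of Theorem \ref{thm:Decom} is slightly more explicit than the paper's one-line observation that the distribution is not deterministic.
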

\begin{proof}
First, the simplicial distribution in (\ref{eq:FirstkindVert}) is well defined because of the following: 
$$
\sum_{j=0}^{m-1} \frac{1}{m}p^j=\sum_{0\leq i,j \leq m-1} \frac{1}{m^2} q^{i,j}=\psi \cdot \sum_{0\leq i,j \leq m-1} \frac{1}{m^2} q^{i,j}=\psi \cdot \sum_{j=0}^{m-1} \frac{1}{m} p^j= \sum_{j=0}^{m-1} \frac{1}{m} \psi \cdot p^j
$$
Given simplicial distributions 
$$
p' \in \conv\left(\Vsupp(\lra{\frac{1}{m},p^{0}} ,\dots,\lra{\frac{1}{m},p^{m-1}})\right) \;\; ,\;\; 
p'' \in \conv\left(\Vsupp(\lra{\frac{1}{m},\psi\cdot p^{0}} ,\dots,\lra{\frac{1}{m},\psi\cdot p^{m-1}})\right)
$$
such that $(\Id\star \dots \star \Id)(p')=(\Id\star \dots \star \Id)(p'')$. By Proposition \ref{pro:VsuppDecompo} and 
since $\set{q^{i,j}:~0\leq i \leq m-1}$ is a closed set of vertices for every $0\leq j \leq m-1$, we have 
$$
\Vsupp\left(\lra{\frac{1}{m},p^{0}} ,\dots,\lra{\frac{1}{m},p^{m-1}}\right)= \bigsqcup_{j=0}^{m-1} \kappa_j
\left(\Vsupp(p^{j})\right)=\bigsqcup_{j=0}^{m-1} \set{\kappa_j(q^{0,j}),\dots,\kappa_j(q^{m-1,j})}
$$ 
So there is $\lambda_{ij} \in [0,1]$ where $\sum_{0\leq i,j \leq m-1}\lambda_{ij}=1$, such that
$$
p'=\sum_{0\leq i,j \leq m-1}\lambda_{ij} \kappa_{j}(q^{i,j})
$$
Thus $(\Id\star \dots \star \Id)(p')=\sum_{0\leq i,j \leq m-1}\lambda_{ij} q^{i,j}$. 
%
Similarly, and using Proposition \ref{pr:psiVsupp}, there is $\mu_{ij}\in [0,1]$ where 
$\sum_{0\leq i,j \leq m-1}\mu_{ij}=1$, such that 
$$
p''=\sum_{0\leq i,j \leq m-1}\mu_{ij} \kappa_{j}(\psi \cdot q^{i,j}) \;\; \text{and} \;\; 
(\Id\star \dots \star \Id)(p'')=\sum_{0\leq i,j \leq m-1}\mu_{ij} \psi \cdot q^{i,j}
$$
Since $(\Id\star \dots \star \Id)(p')|_L=(\Id\star \dots \star \Id)(p'')|_L$, we conclude that 
$$
\sum_{0\leq i,j \leq m-1}\lambda_{ij}  \delta^{\varphi^{i,j}}=\sum_{0\leq i,j \leq m-1}\mu_{ij} \psi^h \cdot \delta^{\varphi^{i,j}}
$$
Then $\lambda_{ij}=\mu_{ij}=\frac{1}{m^2}$ for every $0 \leq i,j \leq m-1$ according to Lemma \ref{lem:psih}. 
In that case, we have
$$
\begin{aligned}
p'=\sum_{j=0}^{m-1}\sum_{i=0}^{m-1}\frac{1}{m^2} \kappa_{j}(q^{i,j})&=\sum_{j=0}^{m-1}\frac{1}{m}\sum_{i=0}^{m-1}\frac{1}{m} \kappa_{j}(q^{i,j})\\
&=\sum_{j=0}^{m-1}\frac{1}{m}\kappa_{j}\left(\sum_{i=0}^{m-1}\frac{1}{m}q^{i,j}\right)\\
&=\sum_{j=0}^{m-1}\frac{1}{m}\kappa_{j}\left(p^j\right)\\
&=(\lra{\frac{1}{m},p^{0}} ,\dots,\lra{\frac{1}{m},p^{m-1}})
\end{aligned}
$$
Similarly, $p''=\left(\lra{\frac{1}{m},\psi \cdot p^{0}} ,\dots,\lra{\frac{1}{m},\psi \cdot p^{m-1}}\right)$.
So by Proposition \ref{thm:vertex on union} the simplicial distribution in (\ref{eq:FirstkindVert}) 
is a vertex in $\sDist(\Sigma X,\Delta_{\zz_m})$. Moreover, It is obviously contextual as it is not a deterministic distribution.
\end{proof}
Now, we apply Proposition \ref{pro:ConVertResult1} to prove that the joint probability distributions described in \cite[Equation (27)]{Nonlocalcorrela} is a contextual vertex. This joint probability distributions belong to the Bell scenario of three observers, each choose from two possible measurements with two outcomes.
%
\begin{example}
We define the deterministic distributions $\psi^{0,0}$, $\psi^{1,0}$, $\psi^{0,1}$ and $\psi^{1,1}$ on the CHSH scenario $(X,\Delta_{\zz_2})$ (Example \ref{ex:CHSHScenarioo}) to be 
%
$$
\psi^{i,j}_{\sigma_k}=
\begin{cases}
(i,i+j)   &   \text{if}\;\; k=1 \;\; \text{or} \;\; k=3   \\
(i+j,i)   &   \text{if}\;\; k=2 \;\; \text{or} \;\; k=4.
\end{cases}
$$
In addition, for $0\leq j \leq 1$ we define $p^{j}$ to be $\frac{1}{2}\delta^{\psi^{0,j}} + \frac{1}{2}\delta^{\psi^{1,j}}$. It turns out that
$$
p^j_{\sigma_k}=
\begin{cases}
p_{+}   &   \text{if}\;\; j=0   \\
p_{-}   &   \text{if}\;\; j=1
\end{cases}
$$
for every $1 \leq k \leq 4$ (see Equation (\ref{eq:p+p-})).
Note that $\Vsupp(p^0)=\set{\delta^{\psi^{0,0}},\delta^{\psi^{1,0}}}$ and $\Vsupp(p^1)=\set{\delta^{\psi^{0,1}},\delta^{\psi^{1,1}}}$. This means
that $\set{\delta^{\psi^{0,0}},\delta^{\psi^{1,0}}}$ and $\set{\delta^{\psi^{0,1}},\delta^{\psi^{1,1}}}$ are closed sets of vertices. 
Let $L$ be the line that generated by $\sigma_1$, $\sigma_2$, and 
$\sigma_3$. For every $0 \leq i,j\leq 1$ we define $\varphi^{i,j}$ to be $\psi^{i,j}|_{L}$.
One can check that $\varphi^{0,0}$, $\varphi^{1,0}$, $\varphi^{0,1}$, and $\varphi^{1,1}$ form a complete collection of deterministic distributions in the sense of Definition \ref{def:ComplColle} where 
$A=\begin{pmatrix} 
	1 & 1  \\
 1 & 0 \\
	\end{pmatrix}$ and $B=\begin{pmatrix} 
	1 & 0  \\
 1 & 1 \\
\end{pmatrix}$ 
and $h=1$.
We define $\psi \in \catsSet(X,\Delta_{\zz_2})$ to be $\psi_{\sigma_1}=(0,0)$, $\psi_{\sigma_2}=(0,1)$, $\psi_{\sigma_3}=(1,1)$, and $\psi_{\sigma_4}=(1,0)$. So $\psi|_{L}=\psi^1$ in terms of Lemma \ref{lem:psih}. 
Finally, both $\sum_{0\leq i,j \leq 1} \frac{1}{4} \delta^{\psi^{i,j}}$ and $\psi \cdot \sum_{0\leq i,j \leq 1} \frac{1}{4} \delta^{\psi^{i,j}}$ have 
the uniform distribution (Example \ref{ex:Uniform}) on every edge $\sigma_k$.
Therefore, by Proposition \ref{pro:ConVertResult1} the simplicial distribution 
$(\lra{\frac{1}{2},p^{0}},\lra{\frac{1}{2},p^{1}};
\lra{\frac{1}{2},\psi\cdot p^{0}},\lra{\frac{1}{2},\psi\cdot p^{1}})$ is a contextual vertex in $\sDist(\Sigma X,\Delta_{\zz_2})$. In fact, this vertex
is a (simplicial) version of the example that given in \cite[Equation (27)]{Nonlocalcorrela} for a class of 
so-called three-way non-local vertices.
\end{example}
\begin{example}
Given a complete collection of deterministic distributions as in Definition \ref{def:ComplColle} such that $\set{\delta^{\varphi^{i,j}}:~0\leq i \leq m-1}$ is closed set of vertices for every $0\leq j \leq m-1$. By Lemma \ref{lem:psih} and Proposition \ref{pro:ConVertResult1} we get that the following simplicial distribution
$$
\left(\lra{\frac{1}{m},\sum_{i=0}^{m-1}\frac{1}{m}\delta^{\varphi^{i,0}}} ,\dots,\lra{\frac{1}{m},
\sum_{i=0}^{m-1}\frac{1}{m}\delta^{\varphi^{i,m-1}}};\lra{\frac{1}{m},\psi^h \cdot\sum_{i=0}^{m-1}\frac{1}{m}\delta^{\varphi^{i,0}}},\dots,\lra{\frac{1}{m},\psi^h \cdot\sum_{i=0}^{m-1}\frac{1}{m}\delta^{\varphi^{i,m-1}}}\right)
$$
is a contextual vertex in $\sDist(\Sigma L^{(3)},\Delta_{\zz_m})$.
\end{example}
\begin{remark}
Consider the complete collection of deterministic distributions given in Example \ref{ex:compexam}, 
the set $\set{\delta^{\varphi^{i,j}}:~0\leq i \leq m-1}$ is a closed set of vertices for every $0\leq j \leq m-1$. This is not always the case, for example, if we define $\varphi^{0,0}$, $\varphi^{1,0}$, $\varphi^{0,1}$ and $\varphi^{1,1}$ in $\catsSet(L^{(3)},\Delta_{\zz_2})$ to be 
$$
\varphi^{i,j}_{\sigma_k}=
\begin{cases}
(i,j)   &   \text{if}\;\; k=1\;\; \text{or} \;\; k=3   \\
(j,i)   &   \text{if}\;\; k=2.
\end{cases}
$$
We obtain a complete collection of deterministic distributions, but 
$\set{\delta^{\varphi^{0,0}}, \delta^{\varphi^{1,0}}}$ is not closed since 
$\delta^\varphi \in \Vsupp(\frac{1}{2}\delta^{\varphi^{0,0}}+ \frac{1}{2}\delta^{\varphi^{1,0}})$ where 
$\varphi_{\sigma_1}=(0,0)$, $\varphi_{\sigma_2}=(0,1)$, and  $\varphi_{\sigma_3}=(1,0)$.
\end{remark}

\subsection{Complete collections of average distributions}\label{subsec:Finalll}

In this section, we write $L^{(2)}$ for the line with two edges $\sigma_1$ and $\sigma_2$ (see Definition \ref{def:lineee}). For simplicity, we assume that $d_{0}(\sigma_1)=d_1(\sigma_2)$.   
\begin{defn}\label{def:compIX}
A \emph{complete collection of average distributions} is 
a set $\set{q^{j}:~0\leq j \leq m-1}$ of simplicial distributions on $(L^{(2)},\Delta_{\zz_m})$ such that $\set{q_{\sigma_k}^{j}:~ 0\leq j \leq m-1}=\set{I,S,\dots,S^{m-1}}$ for every $1\leq k \leq 2$ (see Example \ref{def:RRRRR}).   
\end{defn}
Note that $D(d_0)(S^j)(a)=D(d_1)(S^j)(a)=\frac{1}{m}$ for every $a \in \zz_m$. So we have $(m!)^2$ complete collections of average distributions on $(L^{(2)},\Delta_{\zz_m})$.
\begin{example}\label{ex:compIXexam}
By setting  
$
q^{j}_{\sigma_k}=S^{j}
$
for every $0\leq j \leq m-1$ and $1\leq k \leq 2$, we obtain a complete collection of average distributions $\set{q^0,\dots,q^{m-1}}$. 
\end{example}
\begin{lemma}\label{lem:XIVert}
Given a complete collection of average distributions $\set{q^{j}:~0\leq j \leq m-1}$, and let 
$\psi \in \catsSet(L^{(2)},\Delta_{\zz_m})$ be defined by setting $\psi_{\sigma_1}=(0,0)$ and $\psi_{\sigma_2}=(0,1)$. If for $\lambda_{j},\mu_{j} \in [0,1]$ with $\sum_{j=0}^{m-1}{\lambda_{j}}=\sum_{j=0}^{m-1}{\mu_{j}}=1$ we have
\begin{equation}\label{eq:XIVert}
\sum_{j=0}^{m-1}\lambda_{j}q^{j}=\sum_{j=0}^{m-1}\mu_{j}\psi \cdot q^{j}   
\end{equation}
then 
$\lambda_{j}=\mu_{j}=\frac{1}{m}$ for every $0 \leq j \leq m-1$.
\end{lemma}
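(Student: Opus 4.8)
The plan is to evaluate the identity \eqref{eq:XIVert} separately on the two generating edges $\sigma_1$ and $\sigma_2$ and to extract from each a system of linear equations on the coefficients $\lambda_j,\mu_j$. The decisive structural fact is that the average distributions $I=S^0,S,\dots,S^{m-1}$ have pairwise disjoint supports: $S^j$ is carried by $\{(a,b)\in\zz_m^2:\;b-a=j\}$, and these fibers partition $\zz_m^2$. Consequently $\set{S^0,\dots,S^{m-1}}$ is a linearly independent subset of the real vector space of functions $\zz_m^2\to\RR$, so any equality between two combinations of the $S^k$ forces the coefficients to agree termwise. Using the definition of a complete collection I would write $q^j_{\sigma_1}=S^{\pi(j)}$ and $q^j_{\sigma_2}=S^{\rho(j)}$ for permutations $\pi,\rho$ of $\zz_m$, and record the effect of the action of $\psi$: since $\psi_{\sigma_1}=(0,0)$ we have $(\psi\cdot q^j)_{\sigma_1}=q^j_{\sigma_1}$, while since $\psi_{\sigma_2}=(0,1)$ Equation \eqref{eq:deltaR} gives $(\psi\cdot q^j)_{\sigma_2}=\delta^{(0,1)}\ast S^{\rho(j)}=S^{\rho(j)+1}$.

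First I would restrict \eqref{eq:XIVert} to $\sigma_1$. Both sides become combinations of the same distributions $S^{\pi(j)}$, namely $\sum_j\lambda_j S^{\pi(j)}=\sum_j\mu_j S^{\pi(j)}$, so linear independence immediately yields $\lambda_j=\mu_j$ for every $j$. Next I would restrict \eqref{eq:XIVert} to $\sigma_2$ and substitute $\lambda_j=\mu_j$, obtaining $\sum_j\lambda_j S^{\rho(j)}=\sum_j\lambda_j S^{\rho(j)+1}$. Reading off the coefficient of a fixed $S^k$ on each side (again by linear independence) gives $\lambda_{\rho^{-1}(k)}=\lambda_{\rho^{-1}(k-1)}$ for every $k\in\zz_m$. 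Writing $c_k=\lambda_{\rho^{-1}(k)}$, this says $c_k=c_{k-1}$ for all $k$; since $+1$ generates the additive group $\zz_m$, the sequence $(c_k)$ is constant, hence all the $\lambda_j$ coincide. The normalization $\sum_j\lambda_j=1$ then forces $\lambda_j=\mu_j=\tfrac1m$.

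This argument is essentially a simplification of the proof of Lemma \ref{lem:psih}: here there is a single shift relation rather than the two matrix relations, and the role played there by the gcd/determinant condition is played automatically by the fact that the translation on $\sigma_2$ is by the generator $+1$ of $\zz_m$. The only step requiring care is the linear-independence observation; once disjointness of the supports of the $S^k$ is noted, everything that follows is bookkeeping, so I do not anticipate a genuine obstacle.
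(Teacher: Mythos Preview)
Your proof is correct and follows essentially the same two-step structure as the paper: restrict to $\sigma_1$ (where $\psi$ acts trivially) to get $\lambda_j=\mu_j$, then restrict to $\sigma_2$ (where $\psi$ shifts by $+1$) to obtain a cyclic relation forcing all coefficients equal. The paper carries this out only for the special collection $q^j_{\sigma_k}=S^j$ of Example~\ref{ex:compIXexam}, evaluating explicitly at points $(0,i)$; your version handles the general case via permutations $\pi,\rho$ and the linear-independence (disjoint-support) observation, which is a clean way to package the same idea and in fact completes what the paper leaves ``for simplicity''.
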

\begin{proof}
For simplicity, we will give a proof for the case that the complete collection of average distributions is the one given in Example \ref{ex:compIXexam}. 
By substituting $\sigma_1$ in Equation (\ref{eq:XIVert}) we get that for every $i \in \zz_m$ we have 
$$
\begin{aligned}
\lambda_i\frac{1}{m}&=\lambda_i S^i(0,i)=\sum_{j=0}^{m-1}\lambda_{j}S^j(0,i)\\
&=\sum_{j=0}^{m-1}\lambda_{j}q^{j}_{\sigma_1}(0,i) =\sum_{j=0}^{m-1}\mu_{j}(\delta^{\psi_{\sigma_1}} \ast q^{j}_{\sigma_1})(0,i) \\
&= \sum_{j=0}^{m-1} \mu_{j} (\delta^{(0,0)} \ast S^j)(0,i) =\sum_{j=0}^{m-1}\mu_{j}S^j(0,i) \\
&=\mu_i S^i(0,i)=\mu_i\frac{1}{m}
\end{aligned}
$$
see Equations (\ref{eq:Sjjj}) and (\ref{eq:deltaR}). We conclude that $\lambda_i=\mu_i$ for every $0 \leq i \leq m-1$. On the other hand, by substituting $\sigma_2$ in Equation (\ref{eq:XIVert}) we get that for every $i \in \zz_m$ we have 
$$
\begin{aligned}
\lambda_i\frac{1}{m}&=\lambda_i S^i(0,i)=\sum_{j=0}^{m-1}\lambda_{j}S^j(0,i)\\
&=\sum_{j=0}^{m-1}\lambda_{j}q^{j}_{\sigma_2}(0,i) =\sum_{j=0}^{m-1}\mu_{j}(\delta^{\psi_{\sigma_2}} \ast q^{j}_{\sigma_2})(0,i) \\
&= \sum_{j=0}^{m-1} \mu_{j} (\delta^{(0,1)} \ast S^j)(0,i) =\sum_{j=0}^{m-1}\mu_{j}S^{j+1}(0,i) \\
&=\mu_{i-1}S^i(0,i)=\mu_{i-1}\frac{1}{m}
\end{aligned}
$$
We conclude that $\lambda_{i+1}=\mu_i$ for every $0 \leq i \leq m-1$. Thus $\lambda_0=\mu_0=\lambda_1=\mu_1=\lambda_2=\dots=\lambda_{m-1}=\mu_{m-1}$. The assumption $\sum_{j=0}^{m-1}{\lambda_{j}}=1$ implies that $\lambda_{i}=\mu_{i}=\frac{1}{m}$ for every $0 \leq i \leq m-1$.     

%
%
\end{proof}
\begin{prop}\label{pro:ConVertResult2}
Let $X$ be a finitely generated, connected simplicial set. Given a set of vertices $\set{p^{j}:~0\leq j \leq m-1}$ in $\sDist(X,\Delta_{\zz_m})$, and a line $L=L^{(2)} \subseteq X$ generated by $\sigma_1$, $\sigma_2$, such that 
\begin{itemize}
    \item $\set{p^{j}|_{L}:~0\leq j \leq m-1}$ is a complete collection of average distributions as 
    in Definition \ref{def:compIX}.
    \item There is $\psi \in \catsSet(X,\Delta_{\zz_m})$ such that 
    $\psi_{\sigma_1}=(0,0)$, $\psi_{\sigma_2}=(0,1)$ and
    $\sum_{j=0}^{m-1} \frac{1}{m} p^{j}=\psi \cdot \sum_{j=0}^{m-1} \frac{1}{m} p^{j}$. 
\end{itemize} 
Then the simplicial distribution 
\begin{equation}\label{eq:SecondkindVert}   
\left(\lra{\frac{1}{m},p^{0}} ,\dots,\lra{\frac{1}{m},p^{m-1}};
\lra{\frac{1}{m},\psi\cdot p^{0}} ,\dots,\lra{\frac{1}{m},\psi\cdot p^{m-1}}\right)
\end{equation}
is a contextual vertex in $\sDist(\Sigma X ,\Delta_{\zz_m})$.
\end{prop}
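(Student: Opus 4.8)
The plan is to mirror the proof of Proposition \ref{pro:ConVertResult1} in structure, replacing Lemma \ref{lem:psih} by Lemma \ref{lem:XIVert} and using that here each $p^j$ is itself a vertex. First I would confirm that (\ref{eq:SecondkindVert}) defines an element of $\sDist(\Sigma X,\Delta_{\zz_m})$: by Remark \ref{rem:distinsuspen} the only thing to verify is the identity $\sum_{j}\frac1m p^j=\sum_{j}\frac1m\,\psi\cdot p^j$, and since $\psi\cdot(-)$ preserves convex combinations this right-hand side equals $\psi\cdot\sum_j\frac1m p^j$, so the second hypothesis supplies exactly this. As $X$ is finitely generated, so is $\Sigma X$, and I would apply Proposition \ref{thm:vertex on union} to the cover $\Sigma X=s_1(CX)\cup s_2(CX)$ arising from the pushout (\ref{dia:suspen}), whose intersection is $X$. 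It then suffices to show that $p$ is the \emph{unique} simplicial distribution whose restrictions to the two cones lie in $\conv(\Vsupp(p|_{s_1(CX)}))$ and $\conv(\Vsupp(p|_{s_2(CX)}))$.

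To this end I would compute the two vertex supports. Abbreviating the cone restrictions by $p^{up}=(\lra{\frac1m,p^0},\dots,\lra{\frac1m,p^{m-1}})$ and $p^{down}=(\lra{\frac1m,\psi\cdot p^0},\dots,\lra{\frac1m,\psi\cdot p^{m-1}})$, Proposition \ref{pro:VsuppDecompo} yields $\Vsupp(p^{up})=\bigsqcup_j\kappa_j(\Vsupp(p^j))$ and similarly for $p^{down}$. Because $p^j$ is a vertex we have $\Vsupp(p^j)=\set{p^j}$ (the only simplicial distribution supported within $\supp(p^j)$ is $p^j$ itself, as $p^j$ is a vertex), and Proposition \ref{pr:psiVsupp} gives $\Vsupp(\psi\cdot p^j)=\psi\cdot\Vsupp(p^j)=\set{\psi\cdot p^j}$. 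Hence any admissible restrictions take the form $p'=\sum_j\lambda_j\kappa_j(p^j)$ and $p''=\sum_j\mu_j\kappa_j(\psi\cdot p^j)$ with $\lambda_j,\mu_j\in[0,1]$ summing to $1$. Applying the fold map $\Id\star\dots\star\Id$, which by Proposition \ref{pro:suspencontex} realizes the restriction to $X$ on which the two cones are glued, converts the gluing constraint $(\Id\star\dots\star\Id)(p')=(\Id\star\dots\star\Id)(p'')$ into $\sum_j\lambda_j p^j=\sum_j\mu_j\,\psi\cdot p^j$ in $\sDist(X,\Delta_{\zz_m})$.

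Finally I would restrict this identity to $L=L^{(2)}$. Since the action commutes with restriction to a subcomplex, $(\psi\cdot p^j)|_L=(\psi|_L)\cdot(p^j|_L)$, and by hypothesis the $p^j|_L=q^j$ form a complete collection of average distributions while $\psi|_L$ is exactly the map denoted $\psi$ in Lemma \ref{lem:XIVert}; thus the restricted identity is precisely Equation (\ref{eq:XIVert}). Lemma \ref{lem:XIVert} then forces $\lambda_j=\mu_j=\frac1m$ for all $j$, so $p'=p^{up}$ and $p''=p^{down}$. As these agree with $p$ on $s_1(CX)$ and $s_2(CX)$, and these two cones cover $\Sigma X$, the uniqueness required by Proposition \ref{thm:vertex on union} holds and $p$ is a vertex. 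Contextuality is then immediate: $p$ is a vertex that is not a deterministic distribution, and the deterministic distributions are the only noncontextual vertices.

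The step demanding the most care is the bookkeeping linking the suspension to its two cone decompositions: identifying the gluing map with $\Id\star\dots\star\Id$, and checking that restriction to $L$ reproduces Equation (\ref{eq:XIVert}) on the nose — in particular that the average-distribution hypothesis survives restriction and that $\psi|_L$ coincides with the map in Lemma \ref{lem:XIVert}. The only genuinely delicate computation, the injectivity forcing the weights to be uniform, has already been isolated in Lemma \ref{lem:XIVert}, so once the translation is set up correctly the argument closes formally, exactly as in Proposition \ref{pro:ConVertResult1}.
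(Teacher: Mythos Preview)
Your proposal is correct and follows essentially the same route as the paper's proof: verify well-definedness via the second hypothesis, compute the cone vertex supports using Proposition \ref{pro:VsuppDecompo} (together with $\Vsupp(p^j)=\{p^j\}$ and Proposition \ref{pr:psiVsupp}), translate the gluing condition through the fold map, restrict to $L$ and invoke Lemma \ref{lem:XIVert} to force all weights to $\frac{1}{m}$, and conclude via Proposition \ref{thm:vertex on union}. The only cosmetic difference is that you spell out the use of Proposition \ref{pr:psiVsupp} explicitly, whereas the paper absorbs it into ``similarly''.
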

\begin{proof}
The simplicial distribution in (\ref{eq:SecondkindVert}) is well defined since  $\sum_{j=0}^{m-1} \frac{1}{m} p^{j}=\psi \cdot \sum_{j=0}^{m-1} \frac{1}{m} p^{j}$. Now, given simplicial distributions 
$$
p' \in \conv\left(\Vsupp(\lra{\frac{1}{m},p^{0}} ,\dots,\lra{\frac{1}{m},p^{m-1}})\right) \;\; ,\;\; 
p'' \in \conv\left(\Vsupp(\lra{\frac{1}{m},\psi\cdot p^{0}} ,\dots,\lra{\frac{1}{m},\psi\cdot p^{m-1}})\right)
$$
such that $(\Id\star \dots \star \Id)(p')=(\Id\star \dots \star \Id)(p'')$. By Proposition \ref{pro:VsuppDecompo} we have 
$$
\Vsupp\left(\lra{\frac{1}{m},p^{0}} ,\dots,\lra{\frac{1}{m},p^{m-1}}\right)= \set{\kappa_j(p^j):~0\leq j\leq m-1}
$$ 
So there is scalars $\lambda_{j} \in [0,1]$ such that 
$\sum_{j=0}^{m-1}\lambda_{j}=1$ and
$
p'=\sum_{j=0}^{m-1}\lambda_{j} \kappa_{j}(p^{j})
$.
Thus we have $(\Id\star \dots \star \Id)(p')=\sum_{j=0}^{m-1}\lambda_{j} p^{j}$. 
%
Similarly, there is scalars $\mu_{j}\in [0,1]$ where 
$\sum_{j=0}^{m-1}\mu_{j}=1$, such that 
$$
p''=\sum_{j=0}^{m-1}\mu_{j} \kappa_{j}(\psi \cdot p^{j}) \;\; \text{and} \;\; 
(\Id\star \dots \star \Id)(p'')=\sum_{j=0}^{m-1}\mu_{j} \psi \cdot p^{j}
$$
We conclude that 
$$
\sum_{j=0}^{m-1}\lambda_{j}  p^j|_L=(\Id\star \dots \star \Id)(p')|_L=(\Id\star \dots \star \Id)(p'')|_L=
\sum_{j=0}^{m-1}\mu_{j} \psi|_{L}\cdot p^j|_L
$$
by Lemma \ref{lem:XIVert} we get that $\lambda_{j}=\mu_{j}=\frac{1}{m}$ for every $0 \leq j \leq m-1$. This determines 
$(p';p'')$ uniquely. Therefore, by Proposition \ref{thm:vertex on union} the simplicial distribution in (\ref{eq:SecondkindVert}) 
is a contextual vertex.
\end{proof}
%
%
%
%
%
\begin{example}
Let $p^0$ and $p^1$ be PR boxes (see Example \ref{ex:PRBoxxx}) on the CHSH scenario $(X,\Delta_{\zz_2})$ defined by setting
$$
p^j_{\sigma_k} =\begin{cases}
p_{+} & \text{if}\;\; (k=1,2,4 \;\; \text{and} \;\; j=0) \;\; \text{or} \;\; (k=3 \;\; \text{and} \;\; j=1) \\
p_{-} & \text{if}\;\; (k=1,2,4 \;\; \text{and} \;\; j=1) \;\; \text{or} \;\; (k=3 \;\; \text{and} \;\; j=0) 
\end{cases}
$$
and define $\psi \in \catsSet(X,\Delta_{\zz_2})$ by setting $\psi_{\sigma_1}=(0,0)$, $\psi_{\sigma_2}=(0,1)$, 
$\psi_{\sigma_3}=(1,1)$, and $\psi_{\sigma_4}=(1,0)$. Let $L$ be the line generated by $\sigma_1$ and $\sigma_2$, then 
$\set{p^0|_{L},p^1|_L}$ is a complete collection of average distributions. Furthermore, the simplicial distributions $\frac{1}{2}p^0+\frac{1}{2}p^1$ and 
$\psi\cdot(\frac{1}{2}p^0+\frac{1}{2}p^1)$ are uniform (Example \ref{ex:Uniform}) on every generated edge $\sigma_k$. Therefore, by Proposition 
\ref{pro:ConVertResult2} we get that 
$$  
\left(\lra{\frac{1}{2},p^{0}},\lra{\frac{1}{2},p^{1}};
\lra{\frac{1}{2},\psi\cdot p^{0}} ,\lra{\frac{1}{2},\psi\cdot p^{1}}\right)
$$
is a contextual vertex in $\sDist(\Sigma X ,\Delta_{\zz_2})$. This vertex represents the one that given in \cite[Equation (28)]{Nonlocalcorrela}.
\end{example}

\end{document}